\documentclass[11pt]{article}

\usepackage[margin=1in]{geometry}
\usepackage{amsmath,amssymb,amsthm}
\usepackage{mathtools}
\usepackage{algorithm}
\usepackage[noend]{algpseudocode}
\usepackage[textsize=tiny,textwidth=2cm]{todonotes}

\usepackage{hyperref}
\hypersetup{hidelinks}

\newtheorem{theorem}{Theorem}[section]
\newtheorem{lemma}[theorem]{Lemma}
\newtheorem{proposition}[theorem]{Proposition}
\theoremstyle{remark}

\newcommand{\R}{\mathbb{R}}
\newcommand{\C}{\mathbb{C}}
\newcommand{\E}{\mathbb{E}}
\newcommand{\Pp}{\mathbb{P}}

\newcommand{\poly}{\mathrm{poly}}

\newcommand{\cD}{{\mathcal D}}
\newcommand{\cM}{{\mathcal M}}

\newcommand{\cR}{{\mathcal R}}
\newcommand{\cB}{{\mathcal B}}
\newcommand{\cI}{{\mathcal I}}
\newcommand{\bx}{{\mathbf x}}
\newcommand{\bt}{{\mathbf t}}
\newcommand{\bv}{{\mathbf v}}

\newcommand{\be}{{\mathbf e}}
\newcommand{\bj}{{\mathbf j}}

\newcommand{\by}{{\mathbf y}}
\newcommand{\bz}{{\mathbf z}}
\newcommand{\bu}{{\mathbf u}}
\newcommand{\bw}{{\mathbf w}}

\newcommand{\bzero}{\mathbf{0}}
\newcommand{\rank}{\mathrm{rank}}
\newcommand{\Vol}{\mathrm{Vol}}

\newcommand{\OPT}{\mathrm{OPT}}

\newcommand{\ksinline}[1]{%
\todo[inline,color=yellow!20]{KS: #1}%
}

\title{Determinant maximization subject to a partition matroid constraint
via stable distributions}

\author{Yihang Sun \\ Stanford University \and Jan Vondr\'ak \\ Stanford University}

\begin{document}

\maketitle

\begin{abstract}
Given vectors $\mathbf{v}_i \in \mathbb{R}^d$, we consider the problem of choosing a set $I$ independent in a partition matroid in order to maximize the determinant $\det (\sum_{i \in I} \mathbf{v}_i \mathbf{v}_i^T)$. Our main result is a polynomial-time approximation algorithm that finds a solution of value $\det ( \sum_{i \in I} \mathbf{v}_{i} \mathbf{v}_{i}^T) \geq e^{-O(d)} \mathrm{OPT}$, where $\mathrm{OPT} = \max_{I^* \in \mathcal{I}} \det ( \sum_{i \in I^*} \mathbf{v}_{i} \mathbf{v}_{i}^T)$. For partition matroids of rank $m \leq d$, we give a similar result for approximating the $m$-dimensional volume spanned by the chosen vectors, within a factor of $e^{O(m)}$.

This matches earlier known algorithms that estimate the optimal value but do not find the corresponding solution, up to a constant in the exponent. Similar to these estimation algorithms,
our algorithm is based on the saddle-point relaxation proposed by Nikolov and Singh \cite{NS16}. A new ingredient is a randomized transformation based on $1/2$-stable distributions, which converts the saddle-point relaxation into a more convenient multilinear relaxation.
\end{abstract}

\newpage 
\tableofcontents
\newpage 
\section{Introduction}

We consider the problem of determinant maximization: We are given finitely many vectors $\bv_{i} \in \R^d$. Our goal is to choose a set of indices $I$ subject to a certain constraint, so as to maximize the determinant $\det (\sum_{i \in I} \bv_{i} \bv_{i}^T)$. Note that in the case where $|I|=d$, this is the square of the volume spanned by the chosen vectors. In the case of $|I|<d$, the determinant as defined above would be identically $0$; a natural modification of the objective in this case is (the square of) the $m$-dimensional volume spanned by the selected vectors.

This problem is motivated by sparsification of datasets: Suppose we have data items with features represented by the vectors $\bv_{i}$, and we want to choose a sparse subset which retains the diversity of the original dataset as much as possible. This objective arises naturally in machine learning, information retrieval, statistics and optimization. Diversity can be measured in different ways. In this paper, we focus on the determinant, which has emerged as a popular measure of diversity for the following reasons: It makes sure that the selected subset does not collapse to a lower-dimensional subspace, and further that different dimensions of the dataset are covered as much as possible. Besides the diversity objective, one can also consider various constraints on the subset to be selected. The most basic constraint is a cardinality constraint: choosing at most $m$ vectors overall. The constraint considered here is the next natural choice: We have certain types of data items, and we want to choose $1$ of each type (or more generally a given number of each type). This can be described as a {\em partition matroid constraint}, and this is the constraint studied in this paper. Beyond this constraint, one may study more general matroid constraints (which are natural due to their structural properties), as well as other types of constraints arising from various applications. We review the state of the art for different variants of determinant maximization more thoroughly in Section~\ref{sec:prior}. 

In the case of a cardinality constraint, a seminal result was shown by Nikolov \cite{Nikolov15}: Given  integers $1 \leq m \leq d$ and vectors in $\R^d$, there is an algorithm which selects $m$ vectors such that the $m$-dimensional volume spanned by them approximates the optimal solution within a factor of $e^{m/2+o(m)}$. His algorithm is based on a convex relaxation, related to approximations of convex bodies by ellipsoids. While the exponential factor might seem large, it is a meaningful positive result in this context: our solution is losing only a constant factor ``per dimension''. It was in fact proved that the problem under a cardinality constraint in the $m=d$ case is NP-hard to approximate within a factor better than $c^d$ for some $c>1$ \cite{Koutis06,CivrilM13,SEFM15}. 

Beyond the cardinality constraint, the case of partition matroid constraints was considered in \cite{NS16}. It turns out that this seemingly small change in the constraint makes the problem substantially more difficult. A natural extension of the relaxation from \cite{Nikolov15} does not work for partition constraints, and a much more intricate relaxation which was found by \cite{NS16} is required here. This ``saddle-point relaxation'' has both maximization and minimization variables and a concave/convex structure. Its design was inspired by the theory of stable polynomials, and in particular Gurvits' proof of the Van der Waerden conjecture for the permanent of doubly stochastic matrices \cite{Gurvits08}. Nikolov and Singh found an elegant way to relate the optimal value of this relaxation to the discrete optimum, and proved that the ratio between the two quantities is at most $e^m$ (where $m$ is the number of vectors to select). Hence, their work provides an $e^{m}$-{\em estimation} algorithm for the maximum determinant value. Nevertheless,
a remaining stumbling block was how to find a feasible solution, given a fractional solution provided by the saddle-point relaxation.

The discovery of this relaxation by \cite{NS16} has been hugely influential in further studies of the determinant maximization problem, and related problems. Straszak and Vishnoi \cite{SV17} extended this relaxation and the $e^{O(m)}$ bound on its integrality gap to other classes of matroids, in particular {\em strongly Rayleigh matroids} which are related to stable polynomials. 
Anari et al. \cite{AGSS17,AMGV18} used similar relaxations for developing approximation algorithms for variants the Nash social welfare problem. These developments were followed by an exciting sequence of works on connections between sampling, optimization, and stable / completely log-concave polynomials \cite{AG17,AGGS17,AGV18,ALGV19}, which shares some technical features with the determinant maximization problem.

The first approximation algorithm finding a feasible solution of guaranteed value for determinant maximization under a partition matroid constraint (and more generally a regular matroid constraint) was proposed by Ebrahimi, Straszak and Vishnoi \cite{ESV17}. Their approximation guarantee depends on the sizes of $|J_i|$; using our notation, for $|J_i| = \poly(m)$, it gives an $m^{O(m)}$-approximation algorithm. This algorithm does not use the saddle-point relaxation of \cite{NS16}. It uses a simpler relaxation which also plays a role in our paper; we discuss this in Section~\ref{sec:prior}.

Coming back to the saddle-point relaxation, Madan, Nikolov, Singh and Tantipongpipat \cite{MNST20} extended the framework to general matroids and gave an approximation algorithm that finds a feasible solution within a factor of $e^{O(d^3)}$ in the case of partition matroids, and $d^{O(d^3)}$ in the case of general matroids. These are the first approximation algorithms whose performance depends only on the dimension $d$. An important ingredient they developed towards this end is a {\em sparsification procedure} for fractional solutions of the saddle-point relaxation, proving that the number of fractional variables can be reduced to $\poly(d)$.

A different elegant approach to determinant maximization, based on the structure of matroid intersection, was developed in \cite{BLPST22}. This approach gives a $m^{O(m)}$-approximation algorithm for $m \leq d$ and any general matroid constraint, and a follow-up paper gives a $d^{O(d)}$-approximation for $d\le m$. 

In spite of these advances, it was not known whether the optimal approximation guarantee of $e^{O(d)}$ can be achieved for determinant maximization subject to a simple partition matroid constraint, even in the case of $m=d$. 

\subsection{Our results}

We design the first polynomial-time algorithm for determinant maximization subject to a partition matroid constraint with an optimal approximation factor up to a constant in the exponent.

\begin{theorem}
\label{thm:main}
There is a randomized polynomial-time algorithm which for any partition matroid $\cM = (E, \cI)$, $\rank(\cM) = m \geq d \geq 1$, and vectors $\bv_i \in \R^d$ for $i \in E$, finds an independent set $I$ of $\cM$ such that with high probability
$$ \det \left( \sum_{i \in I} \bv_{i} \bv_{i}^T \right) \geq e^{-O(d)} \ \OPT $$
where $\OPT = \max_{I^* \in \cI} \det \left( \sum_{i \in I^*} \bv_{i} \bv_{i}^T \right)$.
\end{theorem}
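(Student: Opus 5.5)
The plan is to solve the Nikolov--Singh saddle-point relaxation, replace it with a multilinear-type relaxation by a randomized transformation based on $1/2$-stable distributions, and then round the multilinear relaxation part by part with only $e^{O(d)}$ loss.

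\textbf{Step 1: relaxation and its guarantee.} For the partition $E = J_1 \cup \dots \cup J_k$ with capacities $b_j$, we solve the saddle-point relaxation of \cite{NS16}. Its value is at least $\OPT$ and it returns a fractional point $x$ in the matroid polytope, with value within $e^{O(m)}$ of the integral optimum. Since $m \geq d$ can be much larger than $d$, the $e^{m}$ gap of \cite{NS16} is not good enough. So we first reduce to $m = O(d)$ effective fractional variables. To do this we apply the sparsification procedure of \cite{MNST20}, which makes all but $\poly(d)$ of the $x_i$ integral. Integral coordinates are fixed into the solution, and the remaining problem becomes determinant maximization of $\det(A + \sum_{i \in I} \bv_i\bv_i^T)$ with $A$ fixed. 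I would then aim to show that on this residual instance the relevant gap depends only on $d$, since only $d$ "directions" matter in a $d\times d$ determinant.

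\textbf{Step 2: the $1/2$-stable transformation.} The saddle-point objective has the form $\inf_{\by>0} \frac{\E_{S\sim x}[\det(\cdot)\text{-type generating polynomial}]}{\prod y^{x}}$, and its $\inf$ over dual variables is awkward to round. We write $\det(\sum_i z_i \bv_i\bv_i^T)$ via Cauchy--Binet as a sum over $d$-subsets, $\sum_{|S|=d}\det(V_S)^2 z^S$. We then replace the per-part geometric terms by an expectation over independent positive $1/2$-stable random variables $\theta_i$, using the Laplace-transform identity $\E[e^{-t\theta}] = e^{-c\sqrt t}$. This turns the inf-over-$\by$ structure into an expectation of a product, that is, a multilinear extension $F(x)=\E_{R\sim x}[f(R)]$ of a random but explicit set function $f$. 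The key inequality to establish is $F(x) \geq e^{-O(d)}\cdot(\text{saddle value}) \ge e^{-O(d)}\OPT$. The $e^{-O(d)}$ loss should come from the fact that each monomial has degree exactly $d$, with a constant factor lost per degree through the moment bounds of the stable law.

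\textbf{Step 3: rounding.} Given a multilinear relaxation over a partition matroid, we round one part at a time by pipage or conditional expectations. The objective is a degree-$d$ polynomial that is multilinear in each part's choice, and for a fixed part it is linear in that part's choice distribution. So we can always select a vertex (a single element, or $b_j$ elements) without decreasing $F$. This yields an independent set $I$ with $\E_\theta[f(I)] \ge F(x)$. We then relate $\E_\theta f(I)$ back to $\det(\sum_{i\in I}\bv_i\bv_i^T)$ up to $e^{O(d)}$. High probability follows from repeating the procedure with Markov-type arguments, since the determinant is computed exactly and we keep the best run.

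\textbf{Main obstacle.} I expect Step 2 to be the crux: the comparison between the saddle-point value and the stable-averaged multilinear value. Heavy tails of the $1/2$-stable distribution must be controlled so that the loss is $e^{O(d)}$ rather than $e^{O(m)}$. I would first try to prove it monomial by monomial, using stability under scaling to absorb the $\by$ optimization, and use homogeneity of degree $d$ to bound the constant per coordinate.
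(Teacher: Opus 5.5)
Your three-step outline (saddle-point relaxation, a $1/2$-stable transformation into a multilinear object, part-by-part rounding) matches the paper's high-level plan. However, the step that carries the theorem is never defined, and the concrete mechanisms you propose for it would not work.

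Suppose $F(\bx)=\E_{R\sim\bx}[f(R)]$ with $f$ the determinant of the chosen vectors. Then $F$ is exactly the mixed-discriminant quantity $D(\bx)$. Its conditional expectations are \#P-hard to evaluate, so the conditional-expectation rounding of your Step 3 cannot be implemented. This is precisely the obstacle the paper has to circumvent. Your Step 2 (positive stable variables, a Laplace-transform identity, an unspecified random set function $f$) never says what $f$ is and never proves $F(\bx)\ge e^{-O(d)}\cR$, which is the entire difficulty. Also, the loss cannot be controlled by ``moment bounds of the stable law'': a $1/2$-stable variable has $\E|Z|=\infty$, so only logarithmic expectations are usable.

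The missing idea is to use \emph{symmetric} $1/2$-stable variables as random signed coefficients inside each part, $\bw_i=\sum_{j}x_{ij}^2Z_{ij}\bv_{ij}/\sum_{j}x_{ij}^2|Z_{ij}|$. Then $\bw_i$ is a signed combination of the $\bv_{ij}$ with $\ell_1$-weight $1$, and $\det(\sum_i\bw_i\bw_i^T)$ is explicitly computable. It also rounds greedily without loss: by Cauchy--Binet, in each part's coefficient vector it is a sum of squares of linear functions, hence convex, so its maximum over the $\ell_1$-ball is attained at a single $\pm\bv_{ij}$. The analysis is entirely in logarithmic expectation. Its core is the inequality $\E[\log|Q(\bx)|]\ge 2\log S(\bx)+m\kappa$, proved by induction over parts using $\sum_j c_jZ_j\stackrel{\mathrm d}{=}(\sum_j|c_j|^{1/2})^2Z$ and the Gibbs inequality. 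This is combined with $\E[\log\sum_j x_{ij}^2|Z_{ij}|]\le\kappa+\log 2$, and finally Markov's inequality is applied to the nonnegative quantity $\log\cR-\log\det(\sum_i\bw_i\bw_i^T)$.

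Your Step 1 also does not give $e^{O(d)}$. Sparsifying to $\poly(d)$ fractional parts is the route of \cite{MNST20}, which yields $e^{O(d^3)}$: a constant loss per remaining fractional part still compounds over $\poly(d)$ parts. The statement that ``the gap on the residual instance depends only on $d$'' is the claim to be proved, not a step.

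The paper instead uses the relaxation with $\sum_{i\in I}z_i\ge 0$ for every $|I|=d$, whose gap is at most $e^d$ by the capacity inequality of \cite{AGSS17}. It must then prevent the stable transformation from losing $e^{O(m)}$. To do so, it expands $\det(\sum_i\bw_i\bw_i^T)$ over $d$-subsets $T$ by Cauchy--Binet, applies the chaos inequality on each $T$ separately, and aggregates with the Gibbs variational principle. This reduces everything to lower-bounding $\sum_{|T|=d}S_T^4(\bx)$. That bound needs an ingredient absent from your plan: the KKT leverage-score inequality $M_T\le\cR\prod_{i\in T}\mu_i$ with $\sum_i\mu_i=d$, which together with H\"older gives $\sum_T S_T^4(\bx)\ge e^{-7d}\cR$.

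Finally, capacities $b_j>1$ are not handled by ``selecting $b_j$ elements at a vertex'', because the vector relaxation produces one vector per part. The paper reduces to simple partition matroids by making $b_j$ copies of part $j$, with the copies of each vector scaled by $1,\tfrac12,\ldots,\tfrac1{b_j}$ in an independent random order. Unscaled copies would let a repeated vector inflate the value. This reduction costs an extra $e^{O(d)}$.
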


We remark that for simple partition matroids, the bound is $e^{-6d}$ for $m=d$ and $e^{-9d}$ for $m>d$, and the algorithm can be implemented as a Las Vegas algorithm (terminating with a guaranteed solution in expected polynomial time). For general partition matroids, the algorithm is Monte Carlo and the bound is approximately $e^{-14d}$ (but we did not attempt to optimize the exponents).
For $m < d$, we have a similar result, with an objective appropriate for this setting.

\begin{theorem}
\label{thm:main2}
There is a randomized (expected) polynomial-time algorithm which for any partition matroid $\cM = (E, \cI)$, $\rank(\cM) = m \leq d$, and vectors $\bv_i \in \R^d$ for $i \in E$, finds a basis of $\cM$ such that 
$$ \Vol_m [ \bv_i: i \in I] \geq e^{-O(m)} \ \OPT $$
where $\Vol_m[\bv_i:i  \in I]$ denotes the $m$-dimensional volume of the parallelepiped formed by the respective vectors, and $\OPT = \max_{I^* \in \cI} \Vol_m [ \bv_i: i \in I^*]$.
\end{theorem}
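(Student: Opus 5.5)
The plan is to reduce Theorem~\ref{thm:main2} to the square case $m=d$ of Theorem~\ref{thm:main} by projecting onto a suitable $m$-dimensional subspace. For a basis $I$ with $|I|=m$, we have $\Vol_m[\bv_i : i\in I]^2 = \det(V_I^T V_I)$, where $V_I$ is the $d\times m$ matrix with columns $\bv_i$. By Cauchy--Binet,
\[
\det(V_I^T V_I)=\sum_{S\subseteq[d],\,|S|=m}\det(V_I[S])^2 .
\]
So the objective is a sum of $\binom{d}{m}$ squared $m\times m$ determinants, one for each choice of $m$ coordinates. This alone does not reduce to the square case, since $\binom{d}{m}$ can be exponential in $m$ when $d\gg m$. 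Instead I would use a random projection $G\in\R^{m\times d}$ with i.i.d.\ Gaussian entries. For any fixed $I$, the Gram determinant behaves as
\[
\det\bigl((GV_I)^T(GV_I)\bigr) = \det(V_I^TV_I)\cdot \det(W^TW),
\]
where $W$ is an $m\times m$ Gaussian matrix (by rotational invariance of $G$ applied to an orthonormal basis of $\mathrm{span}(V_I)$). Also $\E\det(W^TW)=m!$.

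The steps, in order, are as follows.

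\emph{Step 1.} Draw $G$ and set $\bu_i = G\bv_i\in\R^m$. Apply the $m=d$ case of Theorem~\ref{thm:main} to the vectors $\bu_i$ and the same partition matroid. This returns $I$ with $\det(\sum_{i\in I}\bu_i\bu_i^T)\ge e^{-O(m)}\OPT_G$, where $\OPT_G$ is the optimum for the projected instance.

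\emph{Step 2 (lower bound on $\OPT_G$).} For the true optimum $I^*$,
\[
\OPT_G \ge \det(V_{I^*}^TV_{I^*})\,\det(W^TW).
\]
The determinant $\det(W^TW)=\prod_j \chi^2_{m-j+1}$ is at least $e^{-O(m)}\,m!$ with constant probability. This follows from the concentration of $\log\det$ of a Gaussian matrix: its mean is $\log m! - O(m)$ and its variance is $O(\log m)$.

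\emph{Step 3 (upper bound for the output).} We need $\det(V_I^TV_I)\ge e^{-O(m)}\det(U_I^TU_I)/m!$ for the returned $I$, uniformly over all bases. This is the main obstacle, because $I$ depends on $G$ and a union bound over exponentially many bases is too weak. My first attempt is to note that $\det(U_I^TU_I)=\det(V_I^TV_I)\det(W_I^TW_I)$, where $W_I$ is a Gaussian matrix depending on $I$, so it suffices to bound $\max_I \det(W_I^TW_I)$. We have $\det(W_I^T W_I) \le \prod_j \|G\bw_j^{(I)}\|^2$ for an orthonormal frame $\bw_j^{(I)}$. By Hadamard's inequality this is at most $\|G\|_{op}^{2m}$, which is $(O(\sqrt d+\sqrt m))^{2m}$. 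That gives a loss of $(d/m)^{O(m)}$, which is too large when $d\gg m$.

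To fix this, I would replace the Gaussian sketch by an averaging argument. Take the expectation over $G$ instead of a worst-case bound. Since $\E_G\det(U_I^TU_I)=m!\det(V_I^TV_I)$ exactly for every $I$, the function $F(I)=\E_G\det(\sum_{i\in I}\bu_i\bu_i^T)/m!$ equals the true objective. Moreover, $F$ is a mixed-discriminant/determinant polynomial in the ``Gram'' form $\sum_{|S|=m}\det(V_I[S])^2$, which is a real stable (strongly Rayleigh) generating polynomial in the indicator variables. So I would instead rerun the saddle-point relaxation of \cite{NS16} and the rounding machinery behind Theorem~\ref{thm:main} directly on this polynomial. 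The integrality gap bound $e^{m}$ holds for stable polynomials of degree $m$ (as in \cite{SV17}). The $1/2$-stable randomized rounding should go through, since it uses only the multilinear/stable structure of the objective and not that the dimension equals $m$.

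If a direct analysis fails, the fallback is an intermediate projection to dimension $O(m)$. Step 3 then only loses $e^{O(m)}$, since $\|G\|_{op}^2=O(m)$ after first restricting to a subspace of dimension $O(m)$ obtained from the fractional solution's support. Finally, I would repeat the algorithm $O(1)$ times in expectation and keep the best basis, with verification by computing the Gram determinant, to obtain the Las Vegas guarantee.
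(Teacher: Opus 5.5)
There is a genuine gap. Your main route puts the Gaussian map inside the algorithm, and it fails at Step~3 for the reason you give: the returned basis $I$ depends on $G$. You would therefore need $\det(W_I^TW_I)\le e^{O(m)}m!$ simultaneously for every basis the algorithm might output. As you note, neither the operator-norm bound nor a union bound over the bases gives this in general.

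Your proposed fix assumes the crucial statement, namely $\E_Z[\log\Vol_m[\bw_1,\ldots,\bw_m]]\ge 2\log S(\bx)-O(m)$. The justification you give for it (that the stable rounding ``only uses multilinear structure'') is not correct. Lemma~\ref{lem:stable-chaos} concerns a \emph{scalar} multilinear form $\sum_{\bj}x_{\bj}^2Z_{\bj}a_{\bj}$. By contrast, Cauchy--Binet gives $\Vol_m[\bw_1,\ldots,\bw_m]^2=\sum_{L}\det(W[L])^2$ over coordinate sets $L\subseteq[d]$, $|L|=m$, where $W=[\bw_1,\ldots,\bw_m]$ and $M[L]$ denotes the rows of $M$ indexed by $L$. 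If you aggregate over $L$ with Proposition~\ref{prop:Gibbs}, as in the $m\ge d$ section, you get $\E\log\Vol_m^2\ge\log\sum_L S_L(\bx)^4-2m\log2$ with $S_L(\bx)=\sum_{\bj}x_{\bj}|\det V_{\bj}[L]|^{1/2}$ and $V_{\bj}=[\bv_{1,j_1},\ldots,\bv_{m,j_m}]$. But Minkowski in $\ell^4$ over $L$ gives $(\sum_L S_L(\bx)^4)^{1/4}\le\sum_{\bj}x_{\bj}a_{\bj}^{1/2}=S(\bx)$, which is the wrong direction. Already for $m=1$, $\bv_{1j}=\be_j$, $x_{1j}=1/d$, you have $S(\bx)=1$ but $\sum_L S_L(\bx)^4=d^{-3}$, so this route loses factors polynomial in $d$ rather than $e^{O(m)}$. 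The coordinate decomposition is not rotation invariant, and that is what has to be averaged out. Your fallback is also unsupported: the support of $\bx$ can span all of $\R^d$, so there is no evident $O(m)$-dimensional subspace to restrict to.

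The missing idea is to keep the algorithm entirely in $\R^d$ and use the Gaussian map only as a proof device, in logarithmic expectation.
\begin{itemize}
\item \textbf{Algorithm.} The paper samples the $\bw_i$ from the original vectors, using the fractional solution of the $\operatorname{sym}_m$ relaxation. It rounds them one at a time via $\Vol_m[B,\bu]=\sqrt{\det(B^TB)}\,\|\Pi_{\operatorname{span}(B)^\perp}\bu\|$ and the triangle inequality, so no projection is ever chosen.
\item \textbf{Fixed $T$.} In Lemma~\ref{lem:stable-vol}, for each fixed Gaussian $T\in\R^{m\times d}$, $\det[T\bw_1,\ldots,T\bw_m]$ \emph{is} a scalar multilinear form, with coefficients $a^{(T)}_{\bj}=\det[T\bv_{1,j_1},\ldots,T\bv_{m,j_m}]$. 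So Lemma~\ref{lem:stable-chaos}, together with the bound on the normalizers, gives $\E_Z\log|\det[T\bw_1,\ldots,T\bw_m]|\ge 2\log S^{(T)}(\bx)-m\log2$.
\item \textbf{Averaging over $T$.} The exact identity $\E_T\log|\det(TU)|=\log\Vol_m(U)+c_m$ (Lemma~\ref{lem:lower-gaussian}) is applied to $U=[\bw_1,\ldots,\bw_m]$ on the left. It is also applied, via Gibbs with $q_{\bj}\propto x_{\bj}a_{\bj}^{1/2}$, to each $a^{(T)}_{\bj}$ on the right, giving $2\E_T\log S^{(T)}(\bx)\ge2\log S(\bx)+c_m$. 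The constant $c_m$ then cancels.
\end{itemize}
Because $T$ is independent of the $Z_{ij}$ and never influences which basis is output, no uniform control over bases is needed. Your identity $\E_G\det(U_I^TU_I)=m!\det(V_I^TV_I)$ is the right instinct in the wrong form. What is needed is its $\E\log$ counterpart, applied to the $T$-independent random vectors $\bw_i$ rather than to a $G$-dependent output.
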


\subsection{Our techniques}

For most of this paper, we restrict our attention to the case of simple partition matroids, where we select one vector from each part. In this case it is natural to denote the vectors by $\bv_{ij}, j \in J_i$, and look for a choice of one vector $\bv_{i j_i}$ for each $i$. The more general result for partition matroids is obtained by a black box reduction.

Our starting point is the saddle-point relaxation of Singh and Nikolov \cite{NS16}. It was known that an optimal fractional solution $\bx$ can be rounded to a feasible discrete solution in principle, but it was not known how to do it in polynomial time. The interpretation of $\bx$ is that it defines a linear combination of rank $1$ matrices, $\sum_{i=1}^{m} \sum_{j \in J_i} x_{ij} \bv_{ij} \bv_{ij}^T$, whose determinant is related to our objective.
A crucial step in our algorithm is to convert the fractional solution $\bx$ to another fractional solution $\by$ with a different interpretation: $y_{ij}$ being the coefficients in a linear combination defining vectors $\bw_i = \sum_{j \in J_i} y_{ij} \bv_{ij}$, such that $\det (\sum_{i=1}^{m} \bw_i \bw_i^T)$ has a good value relative to the optimum.
If we can find such a fractional solution $\by$, then we are in a good shape since we can round $\by$ to an integer vector, using the multilinear properties of the determinant. The main hurdle is how to transform the fractional solution $\bx$ into $\by$.

The solution relies on the notion of {\em $1/2$-stable distributions} (unrelated to stable polynomials): These are distributions such that a linear combination of independent random variables $Z = \sum c_j Z_j$ where $\sum |c_j|^{1/2} = 1$ has again the same distribution. This distribution also has the curious property that $Z' = \sum c_j |Z_j|$ has a distribution not radically different from $|Z| = |\sum c_j Z_j|$, in terms of the logarithmic expectations $\E[\log Z']$ and $\E[\log |Z|]$ (both being relatively small constants). If we replace the fractional solution $x_{ij}$ by $y_{ij} = \frac{x_{ij}^2 Z_{ij}}{\sum_{j' \in J_i} x_{ij'}^2 |Z_{ij'}|}$ and define $\bw_i = \sum_{j \in J_i} y_{ij} \bv_{ij}$, it turns out that the value of the fractional solution $\det (\sum_{i=1}^{m} \bw_i \bw_i^T)$ can be related to the value of the saddle-point relaxation. The analysis, however, is quite intricate, using an inductive argument based on the Gibbs inequality for relative entropy. 

With this transformation in place, the algorithm is straightforward in the case of $m=d$. Some additional care is needed to handle the cases of $m>d$ and $m<d$. In the case of $m>d$, an additional characterization of the distribution of values over $d \times d$ minors is needed, which follows from the KKT conditions for a saddle-point optimum. This could be of independent interest.

We perform all the analysis for the case of simple partition matroids (with one vector to be selected from each part). For the case of general partition matroids, we use a black box reduction which does not significantly increase the approximation factor. The approximation factor could be probably improved using a direct analysis, but we do not pursue this in this paper.

\subsection{AI usage statement}

The authors tried various approaches to resolve this problem in the past, but a remaining hurdle was  how to define the transformation from $\sum_{j \in J_i} x_{ij} \bv_{ij} \bv_{ij}^T$ to $\bw_i = \sum_{j \in J_i} y_{ij} \bv_{ij}$ (or another manageable formulation) in a way that preserves the determinant up to a bounded factor. The solution using $1/2$-stable distributions for simple partition matroids and $m=d$ was found by GPT-5.6 Sol, in mid-August 2026. The authors then digested the solution and extended it to $m<d$ and $m>d$, as well as general partition matroids, also with the help of GPT-5.6 Sol. GPT-6 Astra was released during the editing of this paper but we did not use it. The paper is written entirely by hand, with a discussion of the intuition behind the solution as the authors understand it.

\section{Preliminaries}

Let us formally define the problem of determinant maximization subject to a partition matroid constraint. We first consider the case of a simple partition matroid (choosing 1 element per part).

\paragraph{Simple partition matroid.}
A simple partition matroid is defined by a partition of the ground set $N$ into $J_1 \cup J_2 \cup \ldots \cup J_m$. The family of independent sets is $\cI = \{ I \subset N: \forall i \in [m], |I \cap J_i| \leq 1 \}$. The bases are maximal independent sets, $\cB = \{ I \subset N: \forall i \in [m], |I \cap J_i| = 1 \}$.
The determinant maximization problem can be formulated in several equivalent ways; we state the most basic variant as follows. In the following, $[\bv_{1j_1}, \ldots, \bv_{dj_d} ]$ denotes a matrix whose columns are the respective vectors in $\R^d$.

\

\noindent {\bf Choosing $m = d$ vectors.}

{\bf Input:} Vectors $\bv_{i,j} \in \R^d$ for $1 \leq i \leq d$, $j \in J_i$ where $J_i$ is a nonempty finite set for each $i$.

{\bf Output:} $(j_1,\ldots,j_d), j_i \in J_i$, such that $\det^2 [\bv_{1,j_1}, \ldots, \bv_{d,j_d}]$ is (approximately) maximized.

\

More generally, we can consider instances where the number of vectors to choose does not match the dimension $d$. In that case we need to modify the objective function to be meaningful.

\

\noindent {\bf Choosing $m \geq d$ vectors.}

{\bf Input:} Vectors $\bv_{i,j} \in \R^d$ for $1 \leq i \leq m$, $j \in J_i$ where $J_i$ is a nonempty finite set for each $i$.

{\bf Output:} $(j_1,\ldots,j_m), j_i \in J_i$, such that $\det \left(\sum_{i=1}^{m} \bv_{i,j_i} \bv_{i,j_i}^T\right)$ is (approximately) maximized.

\

\noindent {\bf Choosing $m \leq d$ vectors.}

{\bf Input:} Vectors $\bv_{i,j} \in \R^d$ for $1 \leq i \leq m$, $j \in J_i$ where $J_i$ is a nonempty finite set for each $i$.

{\bf Output:} $(j_1,\ldots,j_m), j_i \in J_i$, such that $\Vol^2_m \left[ \bv_{1,j_1}, \ldots, \bv_{m,j_m} \right]$ is (approximately) maximized.

\

It can be checked that these objectives coincide in the case where $m=d$. We denote the value of the optimal solution by  $\OPT$. Note that in the case of $m=d$, this is a square of the determinant formed by the respective vectors.

\

\noindent {\bf General partition matroid.}
More generally, we consider a partition matroid constraint where multiple elements can be chosen from each part: $\cI = \{ I \subset N: \forall i \in [r], |I \cap J_i| \leq k_i \}$.
We define the determinant maximization problem in this setting as follows.

\

{\bf Input:} Vectors $\bv_{i,j} \in \R^d$ for $1 \leq i \leq r$, $j \in J_i$, and integers $k_i \geq 1$, $\sum_{i=1}^{r} k_i = m$.

{\bf Output:} $(K_1,\ldots,K_r)$ such that $K_i \subseteq J_i$ and $|K_i|=k_i$ for each $i$, and
\begin{itemize}
\item for $m \geq d$, $\det \left(\sum_{i=1}^{r} \sum_{j \in K_i} \bv_{i,j} \bv_{i,j}^T\right)$ is (approximately) maximized;
\item for $m \leq d$, $\Vol_m^2 \left[ \bv_{i,j}: 1 \leq i \leq r, j \in K_i \right]$ is (approximately) maximized.
\end{itemize}

Here, $\Vol_m$ denotes the $m$-dimensional volume of the parallelepiped spanned by the vectors.

\section{High-level overview}
\label{sec:overview}

Let us explain our approach in more detail, by positioning it in the context of known approaches to the determinant maximization problem. We focus on the basic case of simple partition matroids, with $m=d$.

\subsection{Prior techniques}
\label{sec:prior}

The basic vehicle for development of algorithms for this problem has been the ``saddle-point'' (or Nikolov-Singh) relaxation \cite{NS16}. In the $m=d$ case, it is formulated as follows.

\paragraph{The saddle-point relaxation.}
\begin{equation}
\label{eq:saddle-point}
  \begin{aligned}
\cR =  \max_\bx \inf_\bz & \det \left (\sum_{i=1}^{d} \sum_{j \in J_i} e^{z_i} x_{ij} \bv_{ij} \bv_{ij}^T \right) \\
\forall i \in [d]; & \sum_{j \in J_i} x_{ij} = 1, \\
\forall i \in[d], j \in J_i; & x_{ij} \geq 0, \\
    & \sum_{i=1}^{d} z_i \geq 0.
    \end{aligned}
\end{equation}

This relaxation can be solved efficiently thanks to log-concavity in $\bx$ and log-convexity in $\bz$. It is also relatively easy to show that it forms an upper bound on the discrete optimum,  $\OPT \leq \cR$.
The main result of \cite{NS16}, based on a lemma of Gurvits \cite{Gurvits08}, is that the optimum $\cR$ provides an estimate of the discrete optimum, in the following sense.

\begin{theorem}
\label{thm:NS-estimate}
For an optimal fractional solution $(\bx,\bz)$, we have
\begin{equation}
\label{eq:multilinear}
 D(\bx) := \sum_{(j_1,\ldots,j_d) \in J_1 \times \ldots J_d} \prod_{i=1}^{d} x_{i j_i} \det \left( \sum_{i=1}^{d} \bv_{i j_i} \bv_{i j_i}^T \right) 
\end{equation}
\begin{equation*}
\geq \frac{d!}{d^d} \det \left( \sum_{i=1}^{d} \sum_{j \in J_i} e^{z_i} x_{ij} \bv_{ij} \bv_{ij}^T \right) \geq \frac{1}{e^d} \cR.
\end{equation*}

I.e., a random solution sampled with probabilities $x_{i,j_i}$ has expected value at least $e^{-d} \cR$.
\end{theorem}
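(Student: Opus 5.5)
Both inequalities come from one polynomial, which we then bound with Gurvits' capacity lemma. Fix the optimal $\bx$ and define, for $\bt \in \R_{\geq 0}^d$,
$$ p(\bt) = \det\Big(\sum_{i=1}^d t_i \sum_{j\in J_i} x_{ij}\bv_{ij}\bv_{ij}^T\Big) = \det\Big(\sum_{i=1}^d t_i A_i\Big), \qquad A_i := \sum_{j \in J_i} x_{ij} \bv_{ij}\bv_{ij}^T \succeq 0. $$
This $p$ is a homogeneous polynomial of degree $d$ in $d$ variables with nonnegative coefficients. It is real stable, since it is a determinant of a nonnegative combination of PSD matrices (Borcea--Br\"and\'en).

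\textbf{Step 1: $D(\bx)$ is the mixed coefficient.} We show $D(\bx) = \frac{\partial^d p}{\partial t_1 \cdots \partial t_d}$. Expand $p$ by multilinearity of the determinant in rank-one summands, using Cauchy--Binet. The coefficient of $t_1\cdots t_d$ collects the terms that choose exactly one rank-one matrix $x_{ij_i}\bv_{ij_i}\bv_{ij_i}^T$ from each $A_i$. Expanding $\det(\sum_{i,j} s_{ij}\bv_{ij}\bv_{ij}^T)$ as a polynomial in the $s_{ij}$ gives coefficients $\det^2[\bv_{i j_i}]$ on square-free monomials of degree $d$ and no higher powers of any single $s_{ij}$, because a rank-one matrix contributes to at most one column. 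Hence the coefficient of $t_1\cdots t_d$ is
$$ \sum_{(j_i)} \prod_i x_{ij_i} \det^2[\bv_{1j_1},\dots,\bv_{dj_d}] = \sum_{(j_i)} \prod_i x_{ij_i} \det\Big(\sum_i \bv_{ij_i}\bv_{ij_i}^T\Big) = D(\bx). $$

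\textbf{Step 2: Gurvits' bound.} We apply Gurvits' capacity theorem \cite{Gurvits08}: for real stable homogeneous $p$ of degree $d$ in $d$ variables with nonnegative coefficients,
$$ \partial_{t_1}\cdots\partial_{t_d} p \;\geq\; \frac{d!}{d^d}\,\mathrm{cap}(p), \qquad \mathrm{cap}(p) = \inf_{\bt>0} \frac{p(\bt)}{\prod_i t_i}. $$
Substitute $t_i = e^{z_i}$ and use homogeneity. Shifting all $z_i$ by a constant $c$ scales $p$ by $e^{dc}$ and $\prod_i t_i$ by $e^{dc}$, so the ratio is unchanged. We may therefore restrict to $\sum z_i = 0$, where the ratio is exactly $p(e^{\bz}) = \det(\sum_{i,j} e^{z_i}x_{ij}\bv_{ij}\bv_{ij}^T)$. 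Restricting to $\sum z_i\ge 0$ gives the same infimum: moving to $\sum z_i = 0$ by decreasing every $z_i$ equally only lowers $p$. Hence
$$ \mathrm{cap}(p) = \inf_{\sum z_i \ge 0}\det\Big(\sum_{i,j} e^{z_i} x_{ij}\bv_{ij}\bv_{ij}^T\Big) = \cR, $$
using that $\bx$ is an optimal maximizer. This yields $D(\bx) \geq \frac{d!}{d^d}\cR$.

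One subtlety: the displayed middle term is the determinant at the minimizing $\bz$ (the saddle point), which equals this infimum. If the infimum is not attained, we read the statement as the infimum value. Finally, $\frac{d!}{d^d}\ge e^{-d}$ by Stirling's formula (or $e^d = \sum_k d^k/k! \ge d^d/d!$).

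\textbf{Main obstacle.} The substantive step is Gurvits' inequality itself, which we invoke as a black box. If it had to be proved, the plan would be the standard induction: differentiate in $t_d$ and set $t_d = 0$, which preserves real stability. Then show that the capacity drops by at most the factor $\big(\frac{k-1}{k}\big)^{k-1}$ when the degree goes from $k$ to $k-1$, using the univariate fact that a real-rooted polynomial of degree $k$ satisfies $q'(0)\ge \big(\frac{k-1}{k}\big)^{k-1}\inf_{t>0} q(t)/t$. Multiplying these factors over $k = 2,\dots,d$ gives $d!/d^d$. Step 1 only needs care with the combinatorics of the expansion.
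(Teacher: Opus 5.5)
Your proof is correct and is essentially the paper's argument: the paper cites Nikolov--Singh for this and proves the general version in Appendix~\ref{sec:integrality-gap}, using the same stable polynomial $p(\bt)=\det\bigl(\sum_i t_i A_i\bigr)$, the same Cauchy--Binet identification of $D(\bx)$ with the coefficient of $t_1\cdots t_d$, and the capacity inequality, which for $m=d$ (so $k=0$) is exactly Gurvits' bound with constant $d!/d^d$. Your reduction of the capacity to the infimum over $\sum_i z_i\ge 0$ (via homogeneity and monotonicity), and your reading of the middle term as the inner infimum $R(\bx)=\cR$, are both sound.
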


While this provides an algorithm to estimate the optimal {\em value} within a factor of $e^d$, via computing the fractional optimum $\cR$, it was not clear how to convert the fractional solution into a discrete solution of good value in polynomial time. Theorem~\ref{thm:NS-estimate} provides a guarantee in expectation, but this is not sufficient, since the expectation is exponentially small compared to $\cR$, and hence the probability of success of a direct randomized rounding method could be also exponentially small. This obstacle remained unresolved since 2016.

\paragraph{Conditional expectations.}
A natural idea is to apply the method of conditional expectations, and choose vectors $\bv_{i,j_i}$ one by one, in a way that avoids decreasing the quantity $D(\bx)$. Due to the product structure of $D(\bx)$, this method would work -- if we could estimate the conditional expectations given by $D(\bx)$ for partially rounded solutions. However, computing $D(\bx)$ (known as the {\em mixed-discriminant problem}) is \#P-hard and the known approximation algorithms are too weak for our purposes \cite{Gurvits09}.

Still, this idea has been useful in some special cases, notably in the case of Nash Social Welfare which can be formulated as a special case of the determinant maximization problem \cite{AGSS17}. The conditional expectations can be evaluated in this case, because they reduce to the problem of counting perfect matchings, which has an efficient approximation scheme. This gives an $e$-approximation for the Nash Social Welfare problem (after taking a $1/d$ power in the NSW objective).

\paragraph{Multilinear relaxation.}
Consider now a different, simpler relaxation:
\begin{equation}
\label{eq:vector-relax}
\det [\bw_1, \ldots, \bw_d] =  \det \left[ \sum_{j \in J_1} y_{1j} \bv_{1j}, \ldots, \sum_{j \in J_d} y_{dj} \bv_{dj} \right]
\end{equation}
where $\sum_{j \in J_i} |y_{ij}| = 1$. This relaxation is much easier to deal with in terms of rounding, because (1) this quantity can be easily computed, (2) by multilinearity of the determinant, we can pick the best vector $\bv_{ij}$ for each part, one by one, in a way that does not decrease the determinant. 

This formulation was used by \cite{ESV17}, in conjunction with other ideas, to derive the first approximation algorithm that finds a feasible solution. Their algorithm finds a random fractional solution of (\ref{eq:vector-relax}) and uses anticoncentration inequalities to prove a guarantee on its value. The approximation guarantee is roughly in the form $\prod_{i=1}^{m} |J_i|$; i.e. depending on the number of vectors in each part. In the regime where $|J_i| = \poly(m)$, this gives an $m^{O(m)}$-approximation. 

\paragraph{Sparsification approach.}
The first approximation algorithm based on the saddle-point relaxation (\ref{eq:saddle-point}) was developed by \cite{MNST20}. It relies on careful investigation of the structure of fractional solutions, and the technique of sparsification: reducing the support of the fractional solution to a size depending only on $d$. A randomized rounding method can be then applied to the sparsified solution to produce algorithmically a discrete solution of value at least $\OPT / d^{O(d^3)}$.  This method actually works for a more general problem than the one we are discussing here, under a general matroid constraint. For a partition matroid, the approximation factor improves to $e^{O(d^3)}$ (but is still far off from the desired $e^{O(d)}$).

\paragraph{Matroid intersection approach.}
Another approach, based on the combinatorial structure of matroid intersection was developed by \cite{BLPST22}. Matroid intersection is at the heart of determinant maximization subject to a matroid constraint, because even to find any nonzero solution, one has to solve the matroid intersection for the constraint matroid and the linear matroid given by the vectors. This approach, based on a further development of matroid intersection techniques, gave an algorithm to find a feasible solution within a factor of $m^{O(m)}$ for $m \leq d$ and $d^{O(d)}$ for $m > d$ \cite{BLPST22,BLPS22} (even for general matroids). This was until now the state-of-the-art algorithmic result for the problem we are discussing here. The approach is conceptually elegant; but it appears that a factor of $d!$ (or $m!$) is a fundamental barrier for this combinatorial method, (very roughly) due to the gap between the determinant and the maximum contribution of one of the $d!$ permutations.


\subsection{Intuition behind the new approach}


Our approach is based on the Nikolov-Singh relaxation (\ref{eq:saddle-point}) and a new randomized transformation technique which converts the fractional solution into the form $(\ref{eq:vector-relax})$. 
Since we would like to keep (\ref{eq:saddle-point}) as our starting point, and we already have a connection between (\ref{eq:saddle-point}) and (\ref{eq:multilinear}), we would like to establish a transformation that provides a link between (\ref{eq:multilinear}) and (\ref{eq:vector-relax}).

Given an optimal solution $(\bx, \bz)$ to (\ref{eq:saddle-point}), it doesn't work to just define $y_{ij} = x_{ij}$. For example, we could have $\bv_{ij} = \be_j$ for $j \in [d]$, where $\be_1,\ldots,\be_d$ is the standard basis in $\R^d$, and $x_{ij} = \frac{1}{d}, z_i = 0$ for all $i,j$.
Then we have $\sum_{i,j=1}^{d} x_{ij} \bv_{ij} \bv_{ij}^T = \sum_{j=1}^{d} \be_j \be_j^T = I$. However, replacing the vectors by $\bw_i = \sum_{j=1}^{d} y_{ij} \bv_{ij} = \frac{1}{d} \sum_{j=1}^{d} \be_j$ yields the same vector for each $i$ and the resulting determinant is $0$.

Note that in contrast, the quantity (\ref{eq:multilinear}) can be interpreted as selecting a vector for each part independently with probabilities $x_{ij}$. The problem with this is that this choice is too brittle; for example in the example above, it will pick a different vector from each part with probability only $\frac{d!}{d^d} \simeq e^{-d}$. Otherwise, the determinant is again $0$.

\paragraph{Randomized transformation from (\ref{eq:multilinear}) to (\ref{eq:vector-relax}).}
The key idea here is to design a {\em randomized linear combination} $\bw_i = \sum_{j \in J_i} Y_{ij} \bv_{ij}$, based on the fractional solution $x_{ij}$ in a smoother way, such that we can relate (\ref{eq:vector-relax}) to (\ref{eq:multilinear}) with good probability.
A natural property of this transformation is that $Y_{ij}$ should be distributed symmetrically around $0$: this is because $\bv_{ij}$ plays the same role in the solution as $-\bv_{ij}$, and in fact replacing $\bv_{ij}$ by $-\bv_{ij}$ does not affect the objectives (\ref{eq:saddle-point}), (\ref{eq:multilinear}). Hence intuitively we should not discriminate between any vector and its negative. For example, we could define $Y_{ij} = \pm x_{ij}$ where the signs are chosen independently at random. This would remove the deterministic coincidence in the basis example, but this alone
does not give a good value. For $Y_{ij}=\pm \frac{1}{d}$ and $\bw_i = \sum_{j=1}^{d} \pm \frac{1}{d} \be_j$, each column has norm $d^{-1/2}$, so even perfectly orthogonal columns would have absolute
determinant only $d^{-d/2}$.

Another useful intuition comes from splitting of fractional coefficients. Suppose we replace
a vector $\bv_{ij}$ with coefficient $x_{ij}$ by identical copies with coefficients
$\xi_1,\ldots,\xi_k$ that sum up to $x_{ij}$. Then the fractional contribution
$\sum_{\ell=1}^{k} \xi_\ell \bv_{ij} \bv_{ij}^T$ is equal to $x_{ij} \bv_{ij} \bv_{ij}^T$
and \eqref{eq:multilinear} is unchanged. Ideally, we would like the contribution to our transformed vector, $\sum_\ell Y_\ell \bv_{ij}$, to be also close to $x_{ij} \bv_{ij}$. But again, choosing $Y_\ell = \pm \xi_\ell$ does not work, since the summation $\sum Y_\ell$ concentrates around $0$ rather than the value $\sum_\ell \xi_\ell$. Other choices like Gaussian random variables $Y_\ell$ suffer from the same issue: for duplicated vectors with equally split coefficients, the norm deteriorates by a factor of $\sqrt{k}$. We need a symmetric distribution such that the signed sum remains comparable to the absolute sum, without significant cancellations. It might seem that there is no distribution with such properties, but there is a solution.

\paragraph{$1/2$-stable distributions.}
A symmetric, $\alpha$-stable distribution $\cD_\alpha$ is such that two independently random variables $Z_1, Z_2 \sim \cD_\alpha$ satisfy $c_1 Z_1 + c_2 Z_2 = c Z$ where $Z \sim \cD_\alpha$ and $c^\alpha = |c_1|^\alpha + |c_2|^\alpha$. In particular, the Gaussian distribution satisfies this with $\alpha=2$. 

It turns out that a better choice for our purpose is $\alpha=1/2$. (Other values of $\alpha \in (0,1)$ might also work, but the analysis is cleaner with $\alpha=1/2$.) A centrally symmetric $1/2$-stable distribution can be generated as follows:
$$ Z = \frac{1}{2X_1^2} - \frac{1}{2X_2^2} $$
where $X_1,X_2$ are independent standard Gaussian random variables. Let's call this distribution $\cD_{1/2}$. The distribution of $\frac{1}{2X_1^2}$ on its own is also $1/2$-stable (but not centrally symmetric); this is known as the L\'evy distribution \cite{Levy1925}.

$\cD_{1/2}$ has some interesting and counter-intuitive properties: For independent $Z_1, \ldots, Z_n \sim \cD_{1/2}$,
\begin{itemize}
\item $\frac{1}{n^2} Z_1 + \frac{1}{n^2} Z_2 + \ldots + \frac{1}{n^2} Z_n = Z$ has the same $\cD_{1/2}$ distribution (which cannot be true for a random variable with a finite $\E[|Z|]$; hence $\E[|Z|]=+\infty$);
\item the logarithmic expectation is finite: $\E[\log |Z|] = \kappa \simeq 1.27$;\footnote{All logarithms in this paper are natural.}
\item $\frac{1}{n^2} |Z_1| + \frac{1}{n^2} |Z_2| + \ldots + \frac{1}{n^2} |Z_n| = Z'$,
a different distribution supported on $\R_+$, but not a dramatically different logarithmic expectation:
\item $\E[\log Z'] \leq \kappa + \log 2 \simeq 1.96$.
\end{itemize}
Hence, performing a random walk according to these random variables, whether in absolute value or with random $\pm$ signs, results in a distribution which has roughly the same logarithmic expectation. This surprising property is at the core of why this method has a chance to succeed.

More generally, the linear combination $\sum_j c_j Z_j$ has the same distribution as $(\sum_j |c_j|^{1/2})^2 Z$. Therefore, starting with a fractional solution $x_{ij} \geq 0$ such that $\sum_{j \in J_i} x_{ij} = 1$, a natural way to define the transformation is $\bw_i = \frac{1}{\sum_{j \in J_i} |Y_{ij}|} \sum_{j \in J_i} Y_{ij} \bv_{ij}$ where $Y_{ij} = x_{ij}^2 Z_{ij}$. For example if $\bv_{ij} = \bv_i$ for all $j \in J_i$, we obtain $\bw_i = \frac{\sum_j x_{ij}^2 Z_{ij}}{\sum_j x_{ij}^2 |Z_{ij}|} \bv_i$, where $\sum_j x_{ij}^2 Z_{ij}$ is again distributed according to $\cD_{1/2}$. Also, the normalization coefficient $\sum_j x_{ij}^2 |Z_{ij}|$ has a comparable logarithmic expectation, and hence we don't lose a significant factor in this transformation, at least in terms of scaling. 

We record the crucial properties in the following lemma. These properties are well-known facts about stable distributions. For completeness, we justify them in Appendix~\ref{sec:stable-facts}.

\begin{lemma}
\label{lem:1/2-stable}
Let $\cD_{1/2}$ be the distribution of a random variable defined as $Z = \frac{1}{2X_1^2} - \frac{1}{2X_2^2}$, where $X_1,X_2$ are independent standard Gaussian variables. Let $Z_1,\ldots,Z_n$ be independent with distribution $\cD_{1/2}$ and coefficients $c_1,\ldots,c_n \in\R$ such that $\sum_{j=1}^{n} \sqrt{|c_j|} = \sqrt{c}$. Then,
\begin{itemize}
\item $\sum_{i=1}^{n} c_i Z_i \stackrel{\mathrm d}{=} c Z$ where $Z \sim \cD_{1/2}$. 
\item $\E[|\log |Z||]<\infty$ and $\E[\log |Z|] = \kappa \simeq 1.27$.
\item Let $Z'=\frac{1}{c}\sum_{i=1}^{n} |c_i|\cdot |Z_i|$. Then, $\E[|\log Z'|]<\infty$ and $\E[\log Z'] \leq \kappa + \log 2$.
\end{itemize}
\end{lemma}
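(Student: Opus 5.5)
We prove the three items in order, working with the one-sided L\'evy variable $L = \frac{1}{2X^2}$, $X$ standard Gaussian, and writing $Z = L_1 - L_2$ with $L_1, L_2$ independent copies.

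\textbf{Stability.} First we compute the Laplace transform. The density of $L$ is $f(t) = \frac{1}{2\sqrt{\pi}} t^{-3/2} e^{-1/(4t)}$ on $t>0$, and the classical Gaussian integral gives
\[
\E[e^{-sL}] = e^{-\sqrt{s}} \quad \text{for } s \geq 0 .
\]
Hence for $a_i \geq 0$ the sum $\sum a_i L_i$ has transform $e^{-\sqrt{s}\sum \sqrt{a_i}}$, which equals the transform of $(\sum\sqrt{a_i})^2 L$. So $L$ is $1/2$-stable on $\R_+$. For signed coefficients, write $c_i Z_i = |c_i| (L_{i1} - L_{i2})$ in distribution, using the symmetry of $Z$. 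Grouping the positive parts and the negative parts, each group is a nonnegative combination of L\'evy variables with total $\sum_i \sqrt{|c_i|} = \sqrt{c}$. So each group is distributed as $c\,L$, the two groups are independent, and the sum is $c(L_1 - L_2) = cZ$. (Alternatively, the characteristic function of $L$ is $e^{-\sqrt{|u|}(1 - i\,\mathrm{sgn} u)}$, so that of $Z$ is $e^{-2\sqrt{|u|}}$; this gives the same conclusion directly.)

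\textbf{Logarithmic moment of $Z$.} Integrability of $\log|Z|$ follows from the tail behavior. The density of $Z$ is bounded near $0$; it is a convolution of bounded densities, or one can read this off from the integrable characteristic function $e^{-2\sqrt{|u|}}$. The tail is $\Pp(|Z|>t) = O(t^{-1/2})$. Together these give $\E|\log|Z|| < \infty$. For the exact value $\kappa$ we need not give a closed form beyond a numerical evaluation; the paper only uses the value $\simeq 1.27$. One route is the formula $\E[\log|Z|] = \int_0^\infty \frac{e^{-t} - \varphi_Z(t)}{t}\,dt - \gamma$, which follows from the Frullani-type identity $\log|z| = \int_0^\infty \frac{e^{-t} - \cos(tz)}{t}\,dt - \gamma$. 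With $\varphi_Z(t) = e^{-2\sqrt t}$, the substitution $t = s^2$ gives
\[
2\int_0^\infty \frac{e^{-s^2} - e^{-2s}}{s}\,ds = 2\left(\log 2 - \tfrac{\gamma}{2}\right),
\]
using Frullani again. So $\kappa = 2\log 2 - 2\gamma$... The constant needs care, and I would double-check it numerically.

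\textbf{The absolute sum $Z'$.} This is the main point. Since $|Z_i| \leq L_{i1} + L_{i2}$, we get
\[
Z' \leq \frac{1}{c}\sum_i |c_i| (L_{i1} + L_{i2}).
\]
By the stability step, the right-hand side is distributed as $\frac{1}{c}\bigl(2\sum_i\sqrt{|c_i|}\bigr)^2 L = 4L$. Hence $\E\log Z' \leq \log 4 + \E\log L$. This bound has a factor of $4$ rather than $2$, and it does not directly compare $Z'$ with $Z$.

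To obtain exactly $\kappa + \log 2$, I would instead use the identity $Z \stackrel{d}{=} L_1 - L_2$ together with $|L_1 - L_2| \geq$ something. The cleaner way: the bound needs $\E\log(L_1 + L_2) \leq \E\log|L_1 - L_2| + \log 2$. Since $L_1 + L_2 \stackrel{d}{=} 4L$, this reduces to the numerical inequality $\log 4 + \E\log L \leq \kappa + \log 2$. Here $\E\log L$ is explicit: it equals $-\log 2 - \E\log X^2 = \gamma + \log 2 - \log 2 = \gamma$, using $\E\log X^2 = -\gamma - \log 2$. So the needed inequality is $\gamma + 2\log 2 \leq \kappa + \log 2$, i.e.\ $\kappa \geq \gamma + \log 2 \approx 1.27$. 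This is consistent with the stated $\kappa \simeq 1.27$; it may well be an equality, in which case the careful computation of $\kappa$ above must confirm $\kappa = \gamma + \log 2$.

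Finiteness of $\E|\log Z'|$ follows from two facts: the upper bound by $4L$, and the lower bound $Z' \geq \frac{1}{c}\max_i |c_i||Z_i|$, which has a finite negative log-moment. The main obstacle is thus pinning down $\kappa$ exactly, so that the crude domination $|Z_i| \leq L_{i1} + L_{i2}$ yields precisely the constant $\kappa + \log 2$.
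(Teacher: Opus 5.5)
\textbf{Gap: the exact value of $\kappa$.} Your stability argument is exactly the paper's route, and so is the domination $|Z_i|\le L_{i1}+L_{i2}$, which bounds $Z'$ by a variable distributed as $4L$. Your integrability arguments are fine. The problem is the constant. You correctly reduce the third bullet to $\kappa\ge\gamma+\log 2$. But this holds with \emph{equality}, so a numerical value $\kappa\simeq 1.27$ can never certify it; you need $\kappa$ exactly.

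Your computation gives $\kappa=2\log 2-2\gamma\approx 0.23$, which would make your reduction fail. It contains three slips:
\begin{itemize}
\item The identity is $\log|z|=\int_0^\infty \frac{e^{-t}-\cos(tz)}{t}\,dt$, with no $-\gamma$, because $\int_0^\infty\frac{e^{-t}-\cos t}{t}\,dt=0$.
\item The characteristic function is $\varphi_Z(u)=e^{-\sqrt{2|u|}}$, since $|\E e^{iuL}|=e^{-\mathrm{Re}\sqrt{-iu}}=e^{-\sqrt{|u|/2}}$. Your $e^{-2\sqrt{|u|}}$ is the law of $2Z$.
\item $\int_0^\infty\frac{e^{-s^2}-e^{-2s}}{s}\,ds=\log 2+\gamma/2$, not $\log 2-\gamma/2$.
\end{itemize}
Corrected, the Fourier route gives $\kappa=2\int_0^\infty\frac{e^{-u^2/2}-e^{-u}}{u}\,du=\gamma+\log 2$. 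It still needs a justification for exchanging $\E$ with a conditionally convergent integral.

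\textbf{The missing idea.} The paper uses a purely algebraic identity:
\[
Z=\frac{(X_2+X_1)(X_2-X_1)}{2X_1^2X_2^2}, \qquad X_2\pm X_1\stackrel{\mathrm d}{=}\sqrt2\,X.
\]
By linearity of expectation,
\[
\kappa=2\bigl(\tfrac12\log 2+\E\log|X|\bigr)-\log 2-4\E\log|X|=-2\E\log|X|,
\]
while $\E\log(4L)=\log 4-\log 2-2\E\log|X|=\kappa+\log 2$. This closes the third bullet without evaluating $\E\log|X|$ or $\gamma$ at all. Separately, $-2\E\log|X|=\gamma+\log 2\approx 1.27$ gives the numerical value in the second bullet.

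Your own ``cleaner way'' can also be completed directly. In polar coordinates,
\[
\frac{L_1+L_2}{|L_1-L_2|}=\frac{X_1^2+X_2^2}{|X_2^2-X_1^2|}=\frac{1}{|\cos 2\theta|},
\]
with $\theta$ uniform, and $\E[-\log|\cos 2\theta|]=\log 2$. This gives exactly $\E\log(L_1+L_2)=\E\log|L_1-L_2|+\log 2$.
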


We summarize our high-level plan: Solve the Nikolov-Singh relaxation (\ref{eq:saddle-point}), obtain a fractional solution $\bx$, then convert it into a solution of the multilinear relaxation (\ref{eq:vector-relax}) via $1/2$-stable random variables, and finally round the relaxation (\ref{eq:vector-relax}) via conditional expectations. This still does not explain why the analysis works out for general instances. We turn to the analysis in the next section.

\section{Algorithm and analysis for $m=d$}

Let us now describe and analyze our algorithm in the $m=d$ case. The algorithm starts by solving the Nikolov-Singh relaxation, which yields a fractional solution $(\bx,\bz)$. We denote $$ R(\bx, \bz) = \det \left( \sum_{i=1}^{d} \sum_{j \in J_i} e^{z_i} x_{ij} \bv_{ij} \bv_{ij}^T \right).$$ 
We remark that in general, we cannot find an exact optimal solution of (\ref{eq:saddle-point}), and in fact it might not even exist, due to an infimum taken over an unbounded set. But we can find a near-optimal solution, by which we mean a solution achieving $R(\bx,\bz) \leq (1+\epsilon) R(\bx) = (1+\epsilon) \inf_\bz R(\bx,\bz)$ and $R(\bx) \geq (1-\epsilon) \cR$. Furthermore, the related quantity $D(\bx)$ satisfies $D(\bx) \geq \frac{d!}{d^d} R(\bx) \geq e^{-d} \cR$ \cite{NS16}.
Given such a fractional solution $(\bx,\bz)$, we execute the randomized rounding algorithm given in Algorithm 1. The rest of the analysis is concerned with Algorithm 1.

\begin{algorithm}[H]
\caption{Randomized rounding for a simple partition matroid, $m=d$}
\label{alg:detmax-rounding}
\begin{algorithmic}[1]
\Require Vectors $\bv_{ij} \in \R^d$, a near-optimal fractional solution $(\bx,\bz)$ and a bound $\cR$.
\Ensure One index $j(i) \in J_i$ chosen for each $1 \leq i \leq d$.
\Repeat
\State Draw independent random variables $Z_{ij} \sim \cD_{1/2}$ for $1 \leq i \leq d$, $j \in J_i$
\State $y_{ij} \gets x_{ij}^2 Z_{ij} / \sum_{k \in J_i} x_{ik}^2 |Z_{ik}|$ 
\State $\bw_i \gets \sum_{j \in J_i} y_{ij} \bv_{ij}$
\Until $\det^2 [\bw_1,\ldots,\bw_d] \geq e^{-6d} \cR$.
\For{$i=1,\ldots,d$}
    \State Choose $j(i) \in J_i$ to maximize 
    $|\det [\bv_{1,j(1)},\ldots,\bv_{i,j(i)},\bw_{i+1},\ldots,\bw_d]|$.
\EndFor
\State \Return $(j(1),\ldots,j(d))$.
\end{algorithmic}
\end{algorithm}


It's very easy to prove that if this algorithm terminates, it provides an $e^{-6d}$-approximation.

\begin{lemma}
If Algorithm 1 terminates, it returns a solution such that 
$$ (\det [\bv_{1,j(1)},\ldots,\bv_{d,j(d)}])^2 \geq e^{-6d} \cR.$$
\end{lemma}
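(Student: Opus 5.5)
The plan is a greedy argument based on multilinearity of the determinant, together with an invariant that holds at every step of the for-loop. Write $M_i = [\bv_{1,j(1)},\ldots,\bv_{i,j(i)},\bw_{i+1},\ldots,\bw_d]$ for $0 \le i \le d$. Then $M_0 = [\bw_1,\ldots,\bw_d]$, and $M_d$ is the matrix of the returned solution. I will show by induction that
$$|\det M_i| \ge |\det M_{i-1}| \quad \text{for each } i.$$
The lemma then follows from the termination condition of the repeat-loop:
$$\det^2 M_d \ge \det^2 M_0 \ge e^{-6d}\cR.$$

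For the inductive step, fix the choices $j(1),\ldots,j(i-1)$. Consider the matrix $M_{i-1}$, whose $i$-th column is
$$\bw_i = \sum_{j\in J_i} y_{ij}\bv_{ij}.$$
The determinant is linear in its $i$-th column, so
$$\det M_{i-1} = \sum_{j\in J_i} y_{ij}\, \det[\bv_{1,j(1)},\ldots,\bv_{i-1,j(i-1)},\bv_{ij},\bw_{i+1},\ldots,\bw_d].$$

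Next I will use the normalization in the definition of $y_{ij}$. Since $y_{ij} = x_{ij}^2 Z_{ij}/\sum_{k} x_{ik}^2|Z_{ik}|$, we have $\sum_{j\in J_i}|y_{ij}| = 1$. The triangle inequality then gives
$$|\det M_{i-1}| \le \sum_j |y_{ij}| \cdot \big|\det[\ldots,\bv_{ij},\ldots]\big| \le \max_{j\in J_i} \big|\det[\ldots,\bv_{ij},\ldots]\big|.$$
The algorithm chooses $j(i)$ to attain exactly this maximum, so $|\det M_i| \ge |\det M_{i-1}|$.

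One degenerate case needs a check. The denominator $\sum_k x_{ik}^2|Z_{ik}|$ is almost surely positive: $\sum_k x_{ik}=1$ forces some $x_{ik}>0$, and $Z_{ik}\ne 0$ almost surely. So the $y_{ij}$ are well defined.

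There is no real obstacle, because the lemma is conditional on termination. The only points to get right are the identity $\sum_j|y_{ij}|=1$ and applying multilinearity column by column with the earlier columns already fixed to the chosen $\bv$'s.
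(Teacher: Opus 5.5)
Your proposal is correct and follows the paper's argument. Multilinearity of the determinant in the $i$-th column, together with $\sum_{j \in J_i} |y_{ij}| = 1$, shows that each greedy replacement of $\bw_i$ by $\bv_{i,j(i)}$ does not decrease $|\det|$, so the termination condition $\det^2[\bw_1,\ldots,\bw_d] \geq e^{-6d}\cR$ carries over to the returned solution (your explicit invariant and the almost-sure positivity of the normalizing denominators are just slightly more careful bookkeeping).
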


\begin{proof}
If the algorithm gets to Step 6, we have $|\det [\bw_1,\ldots,\bw_d]| \geq e^{-3d} \sqrt{\cR}$. Now observe that for each $i$, $\bw_i = \sum_{j \in J_i} y_{ij} \bv_{ij}$, and $\sum_{j \in J_i} |y_{ij}| = 1$.
By the multilinearity of the determinant, we can choose $j(i)$ such that replacing $\bw_i$ by $\bv_{i,j(i)}$ does not decrease the determinant in absolute value. Hence, one by one the algorithm replaces the vectors $\bw_1,\ldots,\bw_d$ by $\bv_{1,j(1)}, \ldots, \bv_{d,j(d)}$ so that
$$ |\det [\bv_{1,j(1)}, \ldots, \bv_{d,j(d)}]| \geq |\det [\bw_1,\ldots,\bw_d]| \geq e^{-3d} \sqrt{\cR}.$$
\end{proof}

It remains to prove that the sampling procedure in Steps 2-5 succeeds with finding a good choice of vectors $\bw_1,\ldots,\bw_d$ in an expected polynomial number of steps.

\subsection{Analysis of the sampling procedure}

Our remaining goal is to prove that the algorithm will succeed in Steps 2-5 with constant probability, and hence it will terminate in a constant expected number of iterations. This is captured by the following.

\begin{theorem}
\label{thm:stable-chaos}
Consider a fractional solution $(\bx,\bz)$ of the relaxation (\ref{eq:saddle-point}), satisfying $D(\bx) \geq e^{-d} \cR$.
Let $\bw_1,\ldots,\bw_d$ be the random vectors constructed by the algorithm. Then
with constant probability, $$ (\det [\bw_1,\ldots,\bw_d])^2 \geq e^{-6d} \cR.$$
\end{theorem}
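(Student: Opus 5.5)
We work with the logarithm of the random determinant and show that its expectation is large, then convert this into a constant-probability bound. Write $\bw_i = \bu_i / S_i$, where
$$\bu_i = \sum_{j \in J_i} x_{ij}^2 Z_{ij} \bv_{ij}, \qquad S_i = \sum_{j\in J_i} x_{ij}^2 |Z_{ij}|,$$
so that
$$\log \det{}^2[\bw_1,\ldots,\bw_d] = \log\det{}^2[\bu_1,\ldots,\bu_d] - 2\sum_i \log S_i .$$
The denominator is handled directly by Lemma~\ref{lem:1/2-stable}. With $\sqrt{c_i} = \sum_j x_{ij} = 1$, we get $\E[\log S_i] \le \kappa + \log 2$. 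The main task is therefore a lower bound of the form
$$\E\bigl[\log \det{}^2[\bu_1,\ldots,\bu_d]\bigr] \;\ge\; \log D(\bx) - O(d).$$

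To prove this bound, expand the determinant by multilinearity:
$$\det[\bu_1,\ldots,\bu_d] = \sum_{\bj} \Bigl(\prod_i x_{ij_i}^2 Z_{ij_i}\Bigr) \det[\bv_{1j_1},\ldots,\bv_{dj_d}].$$
The argument proceeds by induction over the columns $i=1,\ldots,d$, revealing one part at a time. Condition on $Z_{i'j}$ for all $i' \ne i$. Then $\det[\bu]$ is a linear form $\sum_{j\in J_i} x_{ij}^2 Z_{ij} a_j$ in the variables $Z_{ij}$, $j \in J_i$. By the stability property in Lemma~\ref{lem:1/2-stable}, this linear form is distributed as $\bigl(\sum_j x_{ij}|a_j|^{1/2}\bigr)^2 Z$, so
$$\E\Bigl[\log\Bigl|\sum_j x_{ij}^2 Z_{ij} a_j\Bigr|\Bigr] = 2\log \sum_j x_{ij}\sqrt{|a_j|} + \kappa .$$
The key step is to apply Jensen/Gibbs here: $\log \sum_j x_{ij}\sqrt{|a_j|} \ge \sum_j x_{ij}\cdot \tfrac12 \log |a_j|$. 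This replaces column $i$ by the random choice $j_i \sim x_i$ with the vector $\bv_{ij_i}$, at a cost of an additive $\kappa$ only.

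Iterating over all $d$ columns gives
$$\E\log|\det \bu| \ge \E_{\bj\sim \bx}\log|\det[\bv_{1j_1},\ldots,\bv_{dj_d}]| + d\kappa .$$
This is a bound on the geometric mean over the product distribution, which can be $-\infty$ and is far weaker than $\log D(\bx)$, the log of the arithmetic mean. The induction must therefore be refined. I would use Gibbs with a tilted measure: for any probability vector $q$ on $J_i$,
$$\log\sum_j x_{ij}\sqrt{|a_j|} \;\ge\; \sum_j q_j \log\frac{x_{ij}\sqrt{|a_j|}}{q_j}.$$
Choose $q$ as the conditional marginals of the distribution on $\bj$ proportional to $\prod_i x_{ij_i}\det^2[\bv_{\bj}]$, i.e.\ the distribution whose normalizer is $D(\bx)$. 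Summing over $i$, the entropy and cross terms telescope via the chain rule into the relative-entropy identity
$$\E_{\bj\sim \mu}\log\frac{\prod_i x_{ij_i}\det^2}{\mu(\bj)} = \log D(\bx).$$
The delicate part is making the inductive hypothesis hold when the later columns are still random $\bu$'s rather than fixed $\bv$'s. I would state the induction as: for the mixed matrix with columns $\bv_{1j_1},\ldots,\bv_{kj_k},\bu_{k+1},\ldots,\bu_d$, the expected log is at least a tilted-entropy expression, and peel off the last random column using the stability identity above. I expect this to be the main obstacle, together with the point that the square root inside $\sum_j x_{ij}\sqrt{|a_j|}$ is what produces the right exponent, so that $x_{ij}^2$ weights pair correctly with $\det^2$. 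The result should be $\E\log\det^2[\bu] \ge \log D(\bx) + 2d\kappa - O(d)$, or at worst $\ge \log D(\bx) - O(d)$.

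Combining the two parts gives $\E[L] \ge \log\cR - c\,d$ with an explicit $c < 6$, where $L = \log \det^2[\bw]$, using $D(\bx)\ge e^{-d}\cR$. To pass to constant probability, note the upper bound $L \le \log \det(\sum_i \bw_i\bw_i^T)$. Since $\sum_j|y_{ij}|=1$, the vector $\bw_i$ lies in the convex hull of $\pm\bv_{ij}$, and therefore
$$L \le \max_{\bj}\log\det{}^2[\bv_{\bj}] \le \log \OPT \le \log \cR .$$
Thus $\log\cR - L \ge 0$ and $\E[\log\cR - L] \le c\,d$. Markov's inequality then gives $\Pp[\log\cR - L > 6d] \le c/6 < 1$, which yields the claimed constant success probability.
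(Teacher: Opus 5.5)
You have the same architecture as the paper: split off the denominators via Lemma~\ref{lem:1/2-stable}, condition on all but one part, use the stability identity, apply Gibbs with a tilted measure inductively, and finish with Markov using $L\le\log\cR$. However, your central intermediate claim $\E[\log\det^2[\bu_1,\ldots,\bu_d]]\ge\log D(\bx)-O(d)$ is false as a general inequality, and the telescoping you describe does not produce $\log D(\bx)$.

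Already for $d=1$ it fails. Take $J_1=\{1,2\}$, $x_{11}=\epsilon$, $\bv_{11}=1$, $\bv_{12}=0$ in $\R^1$. Then $\bu_1=\epsilon^2 Z_{11}$, so $\E[\log\det^2[\bu]]=4\log\epsilon+2\kappa$, while $\log D(\bx)=\log\epsilon$.

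The reason is that each Gibbs step bounds $\log\sum_j x_{ij}|a_j|^{1/2}$. The per-step terms are therefore $\log(x_{ij}|a_j|^{1/2}/q_j)$, not $\log(x_{ij}a_j^2/q_j)$. The stable chaos naturally controls only the $1/4$-power mean of $\det^2$ under $\bx$, namely $4\log S(\bx)$, and this can be far below $\log D(\bx)$. The exponents do not "pair correctly". If you carry out your chain-rule computation honestly with $q=\mu\propto x_{\bj}\det^2[\bv_{\bj}]$, you get
\[
\E\bigl[\log\det{}^2[\bu_1,\ldots,\bu_d]\bigr]\ \ge\ 4\log D(\bx)-3\,\E_{\bj\sim\mu}\bigl[\log\det{}^2[\bv_{1j_1},\ldots,\bv_{dj_d}]\bigr]+2d\kappa .
\]
Nothing in your plan controls the residual term.

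The missing ingredient is the uniform bound $\det^2[\bv_{1j_1},\ldots,\bv_{dj_d}]\le\OPT\le\cR$, used in the \emph{lower} bound; you only use it for the upper bound in the Markov step. With it, the residual is at least $-3\log\cR$. Then $D(\bx)\ge e^{-d}\cR$ gives $\E[\log\det^2[\bu]]\ge\log\cR-4d+2d\kappa$, which in context also recovers your $\log D(\bx)-O(d)$ form. Hence $\E[L]\ge\log\cR-(4+2\log 2)d$, and Markov goes through since $4+2\log 2<6$.

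This is exactly the role of the paper's inequality $S(\bx)\ge D(\bx)/\cR^{3/4}$, i.e.\ $|a_{\bj}|^{1/2}\ge a_{\bj}^2/\cR^{3/4}$. The paper tilts instead toward $x_{\bj}|a_{\bj}|^{1/2}$, which is the optimal choice. It proves $\E[\log|Q(\bx)|]\ge 2\log S(\bx)+d\kappa$ for an arbitrary tensor $a_{\bj}$ by induction on the number of parts, and only then converts $S(\bx)$ to $\cR$. Working with a general tensor also removes your "mixed matrix" worry: after conditioning on the other parts, the remaining sum is again a $Q$ for a sub-tensor, so the induction hypothesis applies directly.
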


The proof of Theorem~\ref{thm:stable-chaos} boils down to a comparison between two quantities:
\begin{equation}
S(\bx) := \sum_{j_1 \in J_1,\ldots,j_d \in J_d} \left( \prod_{i=1}^{d} x_{i,j_i} \right) |\det [\bv_{1,j_1}, \ldots, \bv_{d,j_d}]|^{1/2}
\end{equation}
and 
\begin{equation}
Q(\bx) := \sum_{j_1 \in J_1,\ldots,j_d \in J_d} \left( \prod_{i=1}^{d} x^2_{i,j_i} Z_{i,j_i} \right) \det [\bv_{1,j_1}, \ldots, \bv_{d,j_d}]
\end{equation}
The (deterministic) quantity $S(\bx)$ is related to $\cR$ as follows:
Recall that $$D(\bx) = \sum_{j_1 \in J_1,\ldots,j_d \in J_d} \left( \prod_{i=1}^{d} x_{i,j_i} \right) (\det [\bv_{1,j_1}, \ldots, \bv_{d,j_d}])^2$$ satisfies $D(\bx) \geq e^{-d} \cR$. Also, $(\det [\bv_{1,j_1}, \ldots, \bv_{d,j_d}])^2 \leq \cR$ for any choice of $j_1,\ldots,j_d$. Hence we have $\cR^{1/4} \geq S(\bx) \geq D(\bx) / \cR^{3/4} \geq e^{-d} \cR^{1/4}$. 

The (random) quantity $Q(\bx)$ describes the performance of a particular sample $\bw_1,\ldots,\bw_d$. 
Recall that $$\bw_i = \sum_{j \in J_i} y_{ij} \bv_{ij} = \frac{\sum_{j \in J_i} x_{ij}^2 Z_{ij} \bv_{ij}}{\sum_{j \in J_i} x_{ij}^2 |Z_{ij}|}.$$
Hence,
$$ \det [\bw_1,\ldots,\bw_d] = \frac{\det \left[ \sum_{j \in J_1} x_{1j}^2 Z_{1j} \bv_{1j}, \ldots, \sum_{j \in J_d} x_{dj}^2 Z_{dj} \bv_{d j} \right]}{ \prod_{i=1}^{d} \sum_{j \in J_i} x_{ij}^2 |Z_{ij}|}
= \frac{Q(\bx)}{ \prod_{i=1}^{d} \sum_{j \in J_i} x_{ij}^2 |Z_{ij}|}.$$
Our goal in the following will be to estimate
$$ \E[\log |\det [\bw_1,\ldots,\bw_d]|] = \E[\log |Q(\bx)|] - \sum_{i=1}^{d} \E[\log \sum_{j \in J_i} x_{ij}^2 |Z_{ij}|]. $$
The last summation is easy to estimate by Lemma~\ref{lem:1/2-stable}: Since $\sum_{j \in J_i} x_{ij} = 1$, we have $\sum_{i=1}^{d} \E[\log $ $\sum_{j \in J_i} x_{ij}^2 |Z_{ij}|]$ $\leq d(\kappa + \log 2).$
Hence,
$$ \E[\log |\det [\bw_1,\ldots,\bw_d]|] \geq \E[\log |Q(\bx)|] - d (\kappa + \log 2). $$
We claim that 
\begin{equation}
\label{eq:stable-chaos}
\E[\log |Q(\bx)|] \geq 2 \log S(\bx) + d \kappa.
\end{equation}
If this is the case, then  
$$ \E[\log |\det [\bw_1,\ldots,\bw_d]|] \geq 2 \log S(\bx) - d \log 2
 \geq \frac12 \log \cR - 2d - d \log 2 $$
and since $\log |\det [\bw_1,\ldots,\bw_d]| \leq \frac12 \log \cR$ always,
we obtain $\log |\det [\bw_1,\ldots,\bw_d]| \geq \frac12 \log \cR - 3d$ with probability at least $1 - \frac{2+\log 2}{3} = \frac{1-\log 2}{3}$ by Markov's inequality.

In fact, to prove Theorem~\ref{thm:stable-chaos} with a weaker bound, it is sufficient to show that $\E[\log |Q(\bx)|] \geq 2 \log S(\bx) - O(d)$. This would only result in a worse constant in the exponent. We prove the inequality (\ref{eq:stable-chaos}) above.

\subsection{The ``logarithmic stable chaos'' inequality}

The comparison between $S(\bx)$ and $Q(\bx)$ is accomplished by an inductive argument, unrolling the expectation over the product distribution defined by $\bx$, one block of the partition at a time. We first set up some notation and explain the high-level intuition. 

Consider a tensor $a_\bj$ indexed by $\bj = (j_1,\ldots,j_m)$ where $j_i \in J_i$.
Our choice for $m=d$ will be $a_\bj = \det [\bv_{1 j_1}, \bv_{2 j_2}, \ldots \bv_{d j_d}]$,
but the inequality in this section holds for {\em any} $a_\bj$. Note also that in this section we do not require $m=d$; the statement actually does not involve the parameter $d$.

Consistently with our notation above, we set 
$$ S(\bx) := \sum_{\bj \in J_1 \times \ldots \times J_m} \left( \prod_{i=1}^{m} x_{i,j_i} \right) |a_\bj|^{1/2} $$
and
$$ Q(\bx) := \sum_{\bj \in J_1 \times \ldots \times J_m} \left( \prod_{i=1}^{m} x^2_{i,j_i} Z_{i,j_i} \right) a_\bj.$$
Recall that $Z_{i,j}$ are independent random variables with distribution $\cD_{1/2}$ and $\E[\log |Z_{i,j}|] = \kappa$.
We claim the following inequality\footnote{The term ``logarithmic stable chaos'' was suggested by ChatGPT, by analogy with the study of random systems in the form of a multilinear polynomial in independent random variables, which is often called a homogeneous or polynomial chaos.
There is no connection with chaos theory.} which we earlier called (\ref{eq:stable-chaos}). 

\begin{lemma}[Logarithmic stable chaos]
\label{lem:stable-chaos}
Assuming that $S(\bx)>0$,
$$ \E[\log |Q(\bx)|] \geq 2 \log S(\bx) + m \kappa.$$
\end{lemma}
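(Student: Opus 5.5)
We prove the claim by induction on $m$, peeling off one block of the partition at a time. The key one-variable fact follows from Lemma~\ref{lem:1/2-stable}. For fixed reals $b_j$, $j\in J_1$, and weights $x_{1j}\ge 0$ with $\sum_j x_{1j}=1$, the sum $\sum_j x_{1j}^2 Z_{1j} b_j$ has the same distribution as $\big(\sum_j x_{1j}|b_j|^{1/2}\big)^2 Z$. Hence
$$\E\Big[\log\Big|\sum_j x_{1j}^2 Z_{1j} b_j\Big|\Big] = 2\log \sum_j x_{1j}|b_j|^{1/2} + \kappa .$$
This already settles the base case $m=1$, with $a_\bj=b_{j_1}$.

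For the inductive step, write $Q(\bx)=\sum_{j\in J_m} x_{mj}^2 Z_{mj}\, Q_j$, where
$$Q_j=\sum_{j_1,\dots,j_{m-1}}\prod_{i<m}x_{i,j_i}^2Z_{i,j_i}\, a_{(j_1,\dots,j_{m-1},j)}$$
depends only on the first $m-1$ blocks. Define $S_j$ analogously, so that $S(\bx)=\sum_j x_{mj}S_j$. Condition on the variables of the first $m-1$ blocks. The last block's variables are independent of them, so the identity above gives
$$\E[\log|Q(\bx)|]=\kappa+2\,\E\Big[\log\sum_j x_{mj}|Q_j|^{1/2}\Big].$$
Integrability of the logarithms should come from the finiteness statements in Lemma~\ref{lem:1/2-stable}, together with the observation that $Q_j$ is a polynomial in variables with absolutely continuous distribution, so it is a.s.\ nonzero whenever its coefficients are not all zero.

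The main obstacle is that the induction hypothesis controls each $\E[\log|Q_j|]$ separately, but we need a lower bound on the log of a sum. Concavity of $\log$ points the wrong way for a naive Jensen step. The plan is to use the Gibbs inequality: for any probability vector $p$ on $J_m$ and nonnegative $u_j$,
$$\log\sum_j u_j\ \ge\ \sum_j p_j\log(u_j/p_j).$$
We apply it with $u_j=x_{mj}|Q_j|^{1/2}$ and a \emph{deterministic} choice $p_j = x_{mj}S_j/S(\bx)$ (restricted to $j$ with $S_j>0$, dropping the other terms since they only decrease the sum). Because the weights $p_j$ are deterministic, taking expectations gives
$$\E\Big[\log\sum_j x_{mj}|Q_j|^{1/2}\Big]\ \ge\ \sum_j p_j\Big(\log\tfrac{x_{mj}}{p_j}+\tfrac12\E[\log|Q_j|]\Big).$$
By the induction hypothesis applied to the tensor $a_{(\cdot,j)}$ on $m-1$ blocks, $\E[\log|Q_j|]\ge 2\log S_j+(m-1)\kappa$. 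Substituting this bound, the $j$-dependent terms collapse, since
$$\log\frac{x_{mj}}{p_j}+\log S_j=\log S(\bx).$$
The result is
$$\E\Big[\log\sum_j x_{mj}|Q_j|^{1/2}\Big]\ge \log S(\bx)+\tfrac{(m-1)\kappa}{2}.$$
Multiplying by $2$ and adding $\kappa$ yields $\E[\log|Q(\bx)|]\ge 2\log S(\bx)+m\kappa$, which is the claim.

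The remaining checks are routine. The choice of $p$ is exactly the one making Gibbs tight in the deterministic analogue. The expectations must be finite (or the inequality must hold in the extended sense), which follows from the integrability remarks above. Finally, the hypothesis $S(\bx)>0$ ensures that $p$ is well defined and that at least one $S_j>0$, so the induction applies.
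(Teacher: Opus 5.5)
Your proof is correct and follows the paper's argument. Both proceed by induction on $m$. Conditioning on all but one block, $1/2$-stability gives $\E[\log|Q|\mid\cdot]=2\log\sum_j x_j|Q_j|^{1/2}+\kappa$, and the Gibbs inequality with the deterministic weights $x_jS_j/S(\bx)$, together with the induction hypothesis, finishes the step. The paper peels off the first block rather than the last and writes the Gibbs step with an explicit KL term, but the computation is the same.

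The only point you leave thin is integrability: almost-sure nonvanishing of $Q_j$ alone does not give $\E[|\log|Q(\bx)||]<\infty$. The paper handles this by carrying $\E[|\log|Q(\bx)||]<\infty$ through the same induction. It applies $|\log\sum_j c_j|\le\log|J|+\max_j|\log c_j|$ to the conditional representation $|Q(\bx)|\stackrel{\mathrm d}{=}\bigl(\sum_j x_j|Q_j|^{1/2}\bigr)^2|Z|$.
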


For $m=1$, this holds due to the basic property of $1/2$-stable random variables ($\sum_j x_j^2 Z_j$ being distributed as $(\sum |x_j|) Z$). For $m>1$, however, this is a much more intricate statement. 
An interpretation of this inequality is that $Q(\bx)$, which is a random linear combination of the {\em signed} contributions $a_\bj$, is a good estimate of $S(\bx)$ which is a related linear combination of the {\em absolute values} $|a_\bj|^{1/2}$. The presence of a square root in $S(\bx)$ is an artifact of the use of $1/2$-stable distributions, and as we already argued, it is sufficient for our purposes.

\paragraph{The Gibbs inequality.}
A standard inequality we employ here is the following.
$D_{KL}$ denotes the KL-divergence, $D_{KL}(q \| p) = \sum_j q_j \log \frac{q_j}{p_j}$.
The inequality is essentially just the concavity of the logarithm.

\begin{proposition}[Gibbs inequality]
\label{prop:Gibbs}
For probability distributions $p$ and $q$ where $\mbox{supp}(q) \subseteq \mbox{supp}(p)$, and values $\gamma_j>0$ on $\mbox{supp}(p)$,
$$ \log \sum_{j \in supp(p)} p_j \gamma_j \geq \sum_{j \in supp(q)} q_j \log \gamma_j - D_{KL}(q \| p).$$
\end{proposition}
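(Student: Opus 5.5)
The plan is to reduce the statement to Jensen's inequality for the concave function $\log$, applied with respect to the distribution $q$ after a change of measure from $p$ to $q$.

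\emph{Step 1: restrict the sum to $\mathrm{supp}(q)$.} Since $p_j \geq 0$ and $\gamma_j > 0$ for every $j \in \mathrm{supp}(p)$, all the summands on the left-hand side are nonnegative. Because $\mathrm{supp}(q) \subseteq \mathrm{supp}(p)$, dropping the indices outside $\mathrm{supp}(q)$ can only decrease the sum. The logarithm is monotone, so
$$ \log \sum_{j \in \mathrm{supp}(p)} p_j \gamma_j \;\geq\; \log \sum_{j \in \mathrm{supp}(q)} p_j \gamma_j .$$
This quantity is finite, since $\mathrm{supp}(q)$ is nonempty and every term in it is strictly positive.

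\emph{Step 2: change of measure and Jensen.} For $j \in \mathrm{supp}(q)$ we have $q_j > 0$, and also $p_j > 0$ by the support assumption. So we may write
$$ \sum_{j \in \mathrm{supp}(q)} p_j \gamma_j = \sum_{j \in \mathrm{supp}(q)} q_j \cdot \frac{p_j \gamma_j}{q_j}, $$
which is an expectation over $j \sim q$ of the positive quantity $p_j\gamma_j/q_j$. By concavity of $\log$ (Jensen's inequality),
$$ \log \sum_{j \in \mathrm{supp}(q)} q_j \frac{p_j \gamma_j}{q_j} \;\geq\; \sum_{j \in \mathrm{supp}(q)} q_j \log \frac{p_j \gamma_j}{q_j} . $$

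\emph{Step 3: split the logarithm.} Since every factor is positive, we can write $\log \frac{p_j\gamma_j}{q_j} = \log \gamma_j - \log\frac{q_j}{p_j}$. The right-hand side of Step 2 then becomes
$$ \sum_{j \in \mathrm{supp}(q)} q_j \log \gamma_j \;-\; \sum_{j \in \mathrm{supp}(q)} q_j \log \frac{q_j}{p_j}. $$
The second sum is exactly $D_{KL}(q\|p)$, under the usual convention $0 \log 0 = 0$, which lets terms with $q_j = 0$ be omitted. Chaining Steps 1 to 3 gives the claimed inequality.

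There is no real obstacle here. The only point needing care is the support bookkeeping: the hypothesis $\mathrm{supp}(q)\subseteq\mathrm{supp}(p)$ is exactly what guarantees $p_j>0$ wherever we divide or take $\log(q_j/p_j)$, which keeps $D_{KL}(q\|p)$ finite. The hypothesis $\gamma_j>0$ is what makes every logarithm well defined and justifies discarding terms in Step 1.
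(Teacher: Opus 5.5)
Your proof is correct and is the paper's argument: drop the terms outside $\mathrm{supp}(q)$, rewrite the remaining sum as a $q$-expectation of $p_j\gamma_j/q_j$, apply Jensen's inequality for the concave logarithm, and split the logarithm to isolate $D_{KL}(q\|p)$. Your support bookkeeping is also sound: $p_j>0$ on $\mathrm{supp}(q)$ and $\gamma_j>0$ keep every term well defined.
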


\begin{proof}
$\log \sum_{j \in supp(p)} p_j \gamma_j \geq \log \sum_{j \in supp(q)} q_j \frac{p_j \gamma_j}{q_j} 
\geq \sum_{j \in supp(q)} q_j \log \frac{p_j \gamma_j}{q_j} = \sum_j q_j \log \gamma_j - D_{KL}(q \| p)$.
\end{proof}

Now we can proceed to the proof of Lemma~\ref{lem:stable-chaos}.

\begin{proof}[Proof of Lemma~\ref{lem:stable-chaos}]
We proceed by induction on $m$.
Let us first consider the case of $m=1$. Here we have $S(\bx) = \sum_{j \in J_1} x_{1j} |a_j|^{1/2}$ and $Q(\bx) = \sum_{j \in J_1} x_{1j}^2 Z_{1j} a_j$ (a linear combination of independent $1/2$-stable random variables).  By Lemma~\ref{lem:1/2-stable}, $Q(\bx) \stackrel{\mathrm d}{=} (\sum_{j \in J_1} x_{1j} |a_j|^{1/2})^2 Z$ where $Z$ is another $1/2$-stable random variable. Hence,
$$ \E[\log |Q(\bx)|] = 2 \log \left(\sum_{j \in J_1} x_{1j} |a_j|^{1/2} \right) +  \E[\log |Z|]
= 2 \log S(\bx) + \kappa.$$

Next, we consider $m>1$. We unroll the definition of $Q(\bx)$ as follows:
$$ Q(\bx) = \sum_{j_1 \in J_1,\ldots,j_m \in J_m} \left( \prod_{i=1}^{m} x^2_{i,j_i} Z_{i,j_i} \right) a_\bj
= \sum_{j_1 \in J_1} x_{1,j_1}^2 Z_{1,j_1} Q_{j_1}(\bx') $$
where $\bx' = (x_{ij})_{i \geq 2, j \in J_i}$ and $Q_{j_1}(\bx') = \sum_{j_2,\ldots,j_m} \left( \prod_{i=2}^{m} x^2_{i,j_i} Z_{i,j_i} \right) a_{(j_1,j_2,\ldots,j_m)}$. $Q_{j_1}(\bx')$ is a quantity of the same type as $Q(\bx)$, on $m-1$ parts (and hence our goal is to invoke induction on $Q_{j_1}(\bx'))$. We also define analogously $S_{j_1}(\bx') = \sum_{j_2,\ldots,j_m} \prod_{i=2}^{m} x_{i,j_i} |a_{(j_1,j_2,\ldots,j_m)}|^{1/2}$.

\paragraph{Logarithmic integrability.}
Our proofs refer to a number of logarithmic expectations. Let us argue first that these are well-defined and finite. More precisely, we claim inductively that unless the process is identically $0$, $Q(\bx) \neq 0$ almost surely, 
and $\E[|\log |Q(\bx)||] < \infty$ for every integer $m$,
fractional solution $\bx$ and real tensor $a_\bj$.

This is true for $m=1$, by the basic properties of $\cD_{1/2}$ (Lemma~\ref{lem:1/2-stable}). For $m>1$, let us assume inductively that $Q_{j_1}(\bx) \neq 0$ almost surely and $\E[|\log |Q_{j_1}(\bx')||] < \infty$ for each $j_1 \in J_1$. (We can ignore the contributions which are identically $0$.) We have $Q(\bx) = \sum_{j_1 \in J_1} x_{1,j_1}^2 Z_{1,j_1} Q_{j_1}(\bx')$. For fixed values of $Q_{j_1}(\bx')$, $Q(\bx)$ is distributed as $(\sum_{j_1 \in J_1} x_{1,j_1} |Q_{j_1}(\bx')|^{1/2})^2 Z$ where $Z \sim \cD_{1/2}$. Since $x_{1,j_1} \neq 0$ for at least one $j_1 \in J_1$, and $Q_{j_1}(\bx') \neq 0$ as well as $Z \neq 0$ almost surely, we have $Q(\bx) \neq 0$ almost surely. We estimate $|\log |Q(\bx)||$ using the elementary inequality
$|\log \sum_{j \in J} c_j| \leq \log |J| + \max_{j \in J} |\log c_j|$ for $c_j>0$, to obtain
\begin{align*}
\Big|\log |Q(\bx)|\Big| & =  \Big| 2 \log \sum_{j_1 \in J_1: x_{1,j_1} \neq 0} x_{1,j_1} |Q_{j_1}(\bx')|^{1/2} + \log |Z| \Big| \\
& \leq 2 \log |J_1| + 2\max_{j_1; x_{1,j_1} \neq 0} \left( |\log x_{1,j_1}| + \frac12 \Big|\log |Q_{j_1}(\bx')|\Big| \right) + \Big|\log |Z|\Big|.
\end{align*}
As $\E[|\log |Q_{j_1}(\bx')||] < \infty$ by assumption, we also have $\E[|\log |Q(\bx)||] < \infty$.

\paragraph{Induction.}
Recall the quantities $Q_{j_1}(\bx')$ and $S_{j_1}(\bx')$ defined above.
First, let us condition on all the variables $Z_{ij}, i \geq 2$, which makes $Q_{j_1}(\bx')$ deterministic and $Q(\bx)$ is now a linear combination of the random variables $Z_{1,j_1}$.
We use Lemma~\ref{lem:1/2-stable} to compute
$$ \E[\log |Q(\bx)| \mid Z_{i,j}: i \geq 2] 
= \E[\log \sum_{j_1 \in J_1} x_{1,j_1}^2 Q_{j_1}(\bx') \,  Z_{1,j_1}]
= 2 \log \sum_{j_1 \in J_1} x_{1,j_1} |Q_{j_1}(\bx')|^{1/2} + \kappa.$$
The goal in the following is to estimate this quantity. Here is where the Gibbs inequality comes in: We define $\gamma_j = |Q_j(\bx')|^{1/2}$, $p_j = x_{1,j}$ and 
$ q_j = x_{1j} \frac{S_j(\bx')}{S(\bx)} $.
The Gibbs inequality (interpreting terms where $S_{j_1}(\bx')=0$ as $0$) gives
$$ \log \sum_{j_1 \in J_1} x_{1,j_1} |Q_{j_1}(\bx')|^{1/2}
= \log \sum_{j_1 \in J_1} p_{j_1} \gamma_{j_1} \geq \sum_{j_1 \in J_1} q_{j_1} \log \gamma_{j_1} - D_{KL}(q \| p) $$
$$ = \frac{1}{2 S(\bx)} \sum_{j_1 \in J_1} x_{1 j_1} S_{j_1}(\bx') \log |Q_{j_1}(\bx')| - D_{KL}(q \| p).$$
Taking an expectation over $Z_{ij}, i \geq 2$, 
$$ \E\left[ \log \sum_{j_1 \in J_1} x_{1,j_1} |Q_{j_1}(\bx')|^{1/2} \right] 
 \geq \frac{1}{2 S(\bx)} \sum_{j_1 \in J_1} x_{1 j_1} S_{j_1}(\bx') \, \E[\log |Q_{j_1}(\bx')|] -  D_{KL}(q \| p).$$
Here we apply the inductive hypothesis: for each $j_1 \in J_1$ where $S_{j_1}(\bx') > 0$,
$ \E[\log |Q_{j_1}(\bx')|] \geq 2 \log S_{j_1}(\bx') + (m-1) \kappa$
and hence 
$$ \E\left[ \log \sum_{j_1 \in J_1} x_{1,j_1} |Q_{j_1}(\bx')|^{1/2} \right] 
 \geq \frac{1}{S(\bx)} \sum_{j_1 \in J_1} x_{1 j_1} S_{j_1}(\bx') \, \log S_{j_1}(\bx')
 + (m-1) \kappa -  D_{KL}(q \| p).$$
We used the fact that by definition, $S(\bx) = \sum_{j_1 \in J_1} x_{1 j_1} S_{j_1}(\bx')$.
Also, the KL-divergence can be computed explicitly as follows:
$$ D_{KL}(q \| p) =  \sum_{j \in J_1} q_j \log \frac{q_j}{p_j}
= \frac{1}{S(\bx)}  \sum_{j_1 \in J_1} x_{1 j_1} S_{j_1}(\bx') \log \frac{S_{j_1}(\bx')}{S(\bx)} $$
$$ = \frac{1}{S(\bx)} \sum_{j_1 \in J_1} x_{1 j_1} S_{j_1}(\bx') \log S_{j_1}(\bx')
- \log S(\bx).$$
Putting all this together, we obtain
$$ \E[\log |Q(\bx)|] 
= 2 \E\left[ \log \sum_{j_1 \in J_1} x_{1,j_1} |Q_{j_1}(\bx')|^{1/2} \right] + \kappa
\geq 2 \log S(\bx) + m \kappa.$$
\end{proof}
This completes the proof of Lemma~\ref{lem:stable-chaos}, and hence Theorem~\ref{thm:stable-chaos}, which implies our main result for $m=d$.

\section{Algorithm and analysis for $m \geq d$}

Now let us consider a more general setting, where $\bv_{ij} \in \R^d$ we are choosing one index from each of $m$ sets $J_1, J_2,\ldots, J_m$, $m \geq d$, in order to maximize
$$ \det \left( \sum_{i=1}^{m} \bv_{i,j_i} \bv_{i,j_i}^T \right).$$
In the case of $m=d$, this coincides with the objective $\det^2 [\bv_{1,j_1}, \ldots, \bv_{d,j_d}]$ discussed above. A natural extension of the approach presented so far gives a factor of $e^{O(m)}$. 
We aim to prove Theorem~\ref{thm:main} with a factor of $e^{O(d)}$, which presents additional difficulties.

An appropriate relaxation for the regime of $m \geq d$ was proposed in \cite{MNST20}. We present a simplified formulation here for the case of partition matroids:
\begin{eqnarray*}
\cR =  \max_\bx \inf_\bz & \det \left (\sum_{i=1}^{m} \sum_{j \in J_i} e^{z_i} x_{ij} \bv_{ij} \bv_{ij}^T \right) \\
\forall i \in [m]; & \sum_{j \in J_i} x_{ij} = 1, \\
\forall i \in [m], j \in J_i; & x_{ij} \geq 0, \\
\forall I \subseteq [m], |I|=d; & \sum_{i \in I} z_i \geq 0.
\end{eqnarray*}

Note the main change compared to Section~\ref{sec:prior}, which is that the constraint on $z_i$ is applied to all subsets of $d$ variables.
Again, it is relatively straightforward to check that $\cR \geq \OPT$. 

\begin{algorithm}[H]
\caption{Randomized rounding for a simple partition matroid, $m \geq d$}
\label{alg:detmax-rounding}
\begin{algorithmic}[1]
\Require Vectors $\bv_{ij} \in \R^d$, a near-optimal fractional solution $(\bx,\bz)$ and a bound $\cR$.
\Ensure One index $j(i) \in J_i$ chosen for each $1 \leq i \leq m$.
\Repeat
\State Draw independent random variables $Z_{ij} \sim \cD_{1/2}$ for $1 \leq i \leq m$, $j \in J_i$
\State $y_{ij} \gets x_{ij}^2 Z_{ij} / \sum_{j \in J_i} x_{ij}^2 |Z_{ij}|$ 
\State $\bw_i \gets \sum_{j \in J_i} y_{ij} \bv_{ij}$
\Until $\det (\sum_{i=1}^{m} \bw_i \bw_i^T) \geq e^{-9d} \cR$.
\For{$\ell=1,\ldots,m$}
    \State Choose $j(\ell) \in J_\ell$ to maximize 
    $\det \left(\sum_{i=1}^{\ell} \bv_{i,j(i)} \bv_{i,j(i)}^T + \sum_{i=\ell+1}^{m} \bw_i \bw_i^T \right)$.
\EndFor
\State \Return $(j(1),\ldots,j(m))$.
\end{algorithmic}
\end{algorithm}

We justify first that the algorithm succeeds if it gets to Step 6. This is slightly less obvious here than it was in the case of $m=d$.

\begin{lemma}
\label{lem:rounding-m}
Given vectors $\bw_1,\ldots,\bw_m$, the rounding procedure in Steps 6-7 finds vectors $\bv_{1,j(1)}, \ldots,$ $\bv_{m,j(m)}$ such that
$$ \det \left(\sum_{i=1}^{m} \bv_{i,j(i)} \bv_{i,j(i)}^T \right) \geq \det \left(\sum_{i=1}^{m} \bw_i \bw_i^T \right).$$
\end{lemma}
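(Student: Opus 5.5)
We argue one step at a time, as in the $m=d$ case. We show that at step $\ell$, replacing $\bw_\ell$ by the best $\bv_{\ell,j}$ does not decrease the determinant. Fix $\ell$ and let
$$A = \sum_{i<\ell} \bv_{i,j(i)}\bv_{i,j(i)}^T + \sum_{i>\ell} \bw_i\bw_i^T,$$
which is positive semidefinite. The quantity maximized in Step 7 is $f(\bu) = \det(A + \bu\bu^T)$, viewed as a function of $\bu \in \R^d$. We must show that
$$\max_{j \in J_\ell} f(\bv_{\ell j}) \geq f(\bw_\ell), \qquad \text{where } \bw_\ell = \sum_{j \in J_\ell} y_{\ell j}\bv_{\ell j}, \quad \sum_{j\in J_\ell} |y_{\ell j}| = 1.$$
Chaining these inequalities over $\ell = 1,\ldots,m$ then gives the lemma.

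The key observation is that $f$ is a quadratic form in $\bu$. By the matrix determinant lemma in its adjugate form,
$$\det(A + \bu\bu^T) = \det A + \bu^T \mathrm{adj}(A)\,\bu.$$
This identity holds for every matrix $A$, singular or not; by continuity it suffices to check it for invertible $A$, where it is the usual $\det A\,(1 + \bu^T A^{-1}\bu)$. Since $A \succeq 0$, the adjugate $B := \mathrm{adj}(A)$ is also positive semidefinite (it is a limit of $\det(A')A'^{-1}$ over $A' \succ 0$). Hence $g(\bu) = \bu^T B\bu = \|B^{1/2}\bu\|^2$ is a convex function. It is also even, i.e.\ $g(-\bu) = g(\bu)$.

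We now bound $g(\bw_\ell)$. Write
$$\bw_\ell = \sum_{j \in J_\ell} |y_{\ell j}| \cdot \bigl(\mathrm{sgn}(y_{\ell j})\,\bv_{\ell j}\bigr),$$
a convex combination of the vectors $\pm\bv_{\ell j}$. By convexity of $g$, and then evenness,
$$g(\bw_\ell) \leq \sum_{j \in J_\ell} |y_{\ell j}|\, g\bigl(\mathrm{sgn}(y_{\ell j})\,\bv_{\ell j}\bigr) = \sum_{j \in J_\ell} |y_{\ell j}|\, g(\bv_{\ell j}) \leq \max_{j \in J_\ell} g(\bv_{\ell j}).$$
Adding $\det A$ to both sides gives $f(\bw_\ell) \leq \max_{j\in J_\ell} f(\bv_{\ell j})$, which is exactly the required step. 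Equivalently, $\|B^{1/2}\bw_\ell\| \leq \sum_{j} |y_{\ell j}|\,\|B^{1/2}\bv_{\ell j}\|$ by the triangle inequality, and the conclusion follows after squaring.

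The only genuinely new point compared with the $m=d$ case is that $f$ is not multilinear but quadratic in $\bu$. This is handled by the rank-one update formula together with the positive semidefiniteness of $\mathrm{adj}(A)$. To be safe when $A$ is singular, which is possible since $A$ has $m-1$ rank-one terms and some of them may vanish, I would justify $\mathrm{adj}(A) \succeq 0$ by perturbing $A$ to $A + \epsilon I$ and letting $\epsilon \to 0$, since both sides of the identity are continuous in $A$. Everything else is bookkeeping.
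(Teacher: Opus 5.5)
Your proof is correct and follows essentially the same route as the paper: show that the determinant is a convex, sign-invariant quadratic function of the $\ell$-th vector, so its maximum over the convex hull of the $\pm\bv_{\ell j}$ is attained at some $\bv_{\ell j}$, and then chain the steps. The only difference is that the paper obtains the quadratic form via Cauchy--Binet, as a sum of squares of $d\times d$ minors linear in $y_{\ell j}$, whereas you write it as $\det A + \bu^T\mathrm{adj}(A)\bu$; the Cauchy--Binet version avoids your continuity argument for singular $A$.
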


\begin{proof}
Denoting the vectors after $\ell-1$ steps of the rounding procedure by $\bw'_1 = \bv_{1,j(1)}, \ldots, \bw'_{\ell-1} = \bv_{\ell-1,j(\ell-1)}, \bw'_{\ell} = \bw_{\ell}, \ldots, \bw'_m = \bw_m$.
We can use again the Cauchy-Binet expansion: 
$$ \det \left( \sum_{i=1}^{m} \bw'_i (\bw'_i)^T \right)
= \sum_{\substack{T \subseteq [m] \\ |T|=d}} (\det [\bw'_i: i \in T])^2. $$
Fixing $\ell$ and expanding $\bw'_{\ell} = \bw_{\ell} = \sum_{j \in J_{\ell}} y_{\ell j} \bv_{\ell j}$, with $\sum_{j \in J_\ell} |y_{\ell j}| = 1$, we see that 
$\det [\bw'_i: i \in T]$ is a linear function in $y_{\ell j}, j \in J_\ell$ for the sets $T$ containing $\ell$. Consequently, $(\det [ \bw'_i: i \in T])^2$ is a convex function in $y_{\ell j}$, and so is also the summation over $T$. Therefore, the maximum over $\sum_{j \in J_\ell} |y_{\ell j}| = 1$ is equal to the maximum over $\sum_{j \in J_\ell} |y_{\ell j}| \leq 1$, and attained at one of the vertices, $y_{\ell j(\ell)} = \pm 1$ and $y_{\ell j} = 0$ for $j \neq j(\ell)$. This means we can replace $\bw_\ell$ by $\bv_{\ell j(\ell)}$ so that the value of $\det \left( \sum_{i=1}^{m} \bw'_i (\bw'_i)^T \right)$ does not decrease.
\end{proof}

Hence it remains to analyze the value provided by the vectors $\bw_1,\ldots,\bw_m$.

\subsection{Analysis of the sampling procedure}

We denote again $R(\bx,\bz) = \det \left (\sum_{i=1}^{m} \sum_{j \in J_i} e^{z_i} x_{ij} \bv_{ij} \bv_{ij}^T \right)$, $R(\bx) = \inf_\bz R(\bx,\bz)$, $\cR = \max_\bx R(\bx)$, and 
$$ D(\bx) = \sum_{j_1 \in J_1,\ldots, j_m \in J_m} \left( \prod_{i=1}^{m} x_{i,j_i} \right) \det \left( \sum_{i=1}^{m} \bv_{i,j_i} \bv_{i,j_i}^T \right).$$
It follows from the techniques of \cite{NS16,AGSS17,MNST20} that $D(\bx^*) \geq \frac{m! (m-d)^{m-d}}{m^m (m-d)!} \cR > e^{-d} \cR$ for an optimal fractional solution $(\bx^*,\bz^*)$. Our algorithm finds a near-optimal solution $(\bx,\bz)$ of value $R(\bx) \geq (1-\epsilon) \cR$, and $D(\bx) \geq e^{-d} \cR$. 
Since an integrality gap bound of $e^d$ is not explicitly stated in these papers (for determinant maximization and $m>d$), we provide a proof in Appendix~\ref{sec:integrality-gap}.

Using the Cauchy-Binet formula, we can rewrite $D(\bx)$ as a summation over $d \times d$ determinants:
$$ D(\bx) = \sum_{j_1 \in J_1,\ldots, j_m \in J_m} \left( \prod_{i=1}^{m} x_{i,j_i} \right)
\sum_{\substack{T \subseteq [m] \\ |T|=d}} \det \left( \sum_{i \in T} \bv_{i,j_i} \bv_{i,j_i}^T \right) $$
$$ = \sum_{\substack{T \subseteq [m] \\ |T|=d}} \sum_{\substack{j_i \in J_i \\ \forall i \in T}} \left( \prod_{i \in T} x_{i,j_i} \right) \det \left( \sum_{i \in T} \bv_{i,j_i} \bv_{i,j_i}^T \right)
= \sum_{\substack{T \subseteq [m] \\ |T|=d}} D_T(\bx) $$
where $$D_T(\bx) = \sum_{\substack{j_i \in J_i \\ \forall i \in T}} \left( \prod_{i \in T} x_{i,j_i} \right) \det \left( \sum_{i \in T} \bv_{i,j_i} \bv_{i,j_i}^T \right).$$
Analogously, let us also define
$$ S_T(\bx) := \sum_{\substack{j_i \in J_i \\ \forall i \in T}} \left( \prod_{i \in T} x_{i,j_i} \right) \left(\det \left( \sum_{i \in T} \bv_{i,j_i} \bv_{i,j_i}^T \right)\right)^{1/4} $$
and
$$ M_T = \max_{\substack{j_i \in J_i \\ \forall i \in T}} \det \left( \sum_{i \in T} \bv_{i,j_i} \bv_{i,j_i}^T \right).$$
The analysis for a given subset $T$ of $d$ indices is similar to the earlier analysis for $m=d$: We get $S_T(\bx) \geq M_T^{-3/4} D_T(\bx)$, and the performance of our algorithm on the subset $T$ can be related to $S_T(\bx)$. More precisely, $\det (\sum_{i \in T} \bw_i \bw_i^T)$ for the vectors $\bw_i$ found by our algorithm can be related to $S_T^4(\bx)$, and hence the global determinant, $\det (\sum_{i=1}^{m} \bw_i \bw_i^T)$, can be related to $\sum_{T \subseteq [m]; |T|=d} S_T^4(\bx)$.

Since $S_T(\bx) \geq M_T^{-3/4} D_T(\bx)$, we obtain $\sum_{T \subseteq [m]; |T|=d} S_T^4(\bx) \geq \sum_{T \subseteq [m]; |T|=d} M_T^{-3} D_T^4(\bx)$. However, this bound could be potentially weak, if the main contributions $D_T(\bx)$ come primarily from sets $T$ where $M_T$ is very large. We need a bound showing that $\cR$ not only upper-bounds the global optimum, but it upper-bounds the $T$-restricted optima $M_T$ in a certain structured way. This is accomplished by the following.

\subsection{The leverage score bound} 

Given an optimal solution $(\bx,\bz)$ with $A(\bx,\bz) =  \sum_{i=1}^{m} \sum_{j \in J_i} e^{z_i} x_{ij} \bv_{ij} \bv_{ij}^T$, we can associate a ``leverage score'' $\bv_{ij}^T A^{-1}(\bx,\bz) \bv_{ij}$ with each vector $\bv_{ij}$. The leverage score measures how much the vector $\bv_{ij}$ affects the fractional solution at the saddle point. By KKT conditions, the contribution of each participating vector in a given part $i$ should be the same, $\mu_i$. It turns out that these parameters provide a bound on every $T$-restricted optimum $M_T$. Some technicalities arise since a perfect saddle point solution might not exist. It is sufficient to use a certain near-optimal solution; we remark that the technical conditions here are slightly different from what we need in the algorithm, and this solution is used only for the purpose of proving Lemma~\ref{lem:KKT}.


\begin{lemma}
\label{lem:KKT}
There are parameters $\mu_1,\ldots,\mu_m \geq 0$ such that $\sum_{i=1}^{m} \mu_i = d$, and for all $T \subseteq [m], |T|=d$:
$$M_T = \max_{\substack{j_i \in J_i \\ \forall i \in T}} \det \left(\sum_{i \in T} \bv_{i,j_i} \bv_{i,j_i}^T \right) \leq \cR \prod_{i \in T} \mu_i.$$
\end{lemma}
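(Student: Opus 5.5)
The plan is to take the KKT conditions of the saddle-point relaxation at an optimum and combine them with an AM--GM bound on $d\times d$ minors. Fix an optimal (or near-optimal, see below) pair $(\bx,\bz)$ and set $A = A(\bx,\bz) = \sum_{i,j} e^{z_i}x_{ij}\bv_{ij}\bv_{ij}^T$. Define $\mu_i = \max_{j\in J_i} e^{z_i}\,\bv_{ij}^T A^{-1}\bv_{ij}$. First-order optimality in $\bx$ says that $\log\det A$ cannot increase by moving mass within part $i$. The derivative of $\log\det A$ in $x_{ij}$ is $e^{z_i}\bv_{ij}^TA^{-1}\bv_{ij}$, so this derivative is maximized, and equal to a common value, on $\mathrm{supp}(x_{i\cdot})$. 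Consequently $\mu_i = \sum_j x_{ij} e^{z_i}\bv_{ij}^TA^{-1}\bv_{ij}$, and summing over $i$ gives $\sum_i\mu_i = \mathrm{tr}(A^{-1}A)=d$.

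Next, fix $T$ with $|T|=d$ and choices $j_i$, $i\in T$. Let $B$ be the $d\times d$ matrix with columns $e^{z_i/2}\bv_{i,j_i}$. Then
\[
\det\Big(\sum_{i\in T}\bv_{i,j_i}\bv_{i,j_i}^T\Big)=e^{-\sum_{i\in T}z_i}\det(B^TA^{-1}B)\det A .
\]
Apply Hadamard's inequality to the PSD matrix $B^TA^{-1}B$, whose diagonal entries are $e^{z_i}\bv_{i,j_i}^TA^{-1}\bv_{i,j_i}\le\mu_i$. This gives $\det(B^TA^{-1}B)\le\prod_{i\in T}\mu_i$. By the constraint $\sum_{i\in T}z_i\ge 0$ we have $e^{-\sum_{i\in T} z_i}\le 1$, and at the saddle point $\det A = R(\bx,\bz)=\cR$. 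Combining these gives $M_T\le \cR\prod_{i\in T}\mu_i$.

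The main obstacle is the existence and regularity of the saddle point. The infimum over $\bz$ may not be attained, and $A$ may be singular or $\bz$ may diverge. If $\cR=0$ the statement is trivial, because then every $M_T=0$ (using $\OPT\le\cR$ restricted to $T$, or directly). Otherwise I would work with $\epsilon$-approximate solutions. I would take $\bx$ optimal, which exists by compactness and continuity of $R(\bx)$ as a concave-type function, and $\bz_\epsilon$ with $R(\bx,\bz_\epsilon)\le(1+\epsilon)\cR$.

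The $\bx$-KKT step needs care, because $\bx$ is optimal for $\inf_\bz R$ rather than for $R(\cdot,\bz_\epsilon)$. The cleanest route is a minimax theorem (log-concave in $\bx$ over a compact set, log-convex in $\bz$, after reparametrizing). It lets me pick $(\bx_\epsilon,\bz_\epsilon)$ that is $\epsilon$-optimal on both sides: $\bx_\epsilon$ nearly maximizes $\log\det A(\cdot,\bz_\epsilon)$ and the value is within $1+\epsilon$ of $\cR$. Approximate first-order optimality over the simplex, for the concave function $\log\det A(\cdot,\bz_\epsilon)$, then gives $\max_j e^{z_i}\bv_{ij}^TA^{-1}\bv_{ij}\le$ a slightly perturbed common value. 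Defining $\mu_i$ as the $x$-weighted averages (summing exactly to $d$) yields the bound up to $(1+O(\epsilon))$ factors.

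Finally, let $\epsilon\to0$ along a subsequence. The $\mu^{(\epsilon)}$ lie in a compact simplex, so they converge to some $\mu$ with $\sum\mu_i=d$. The finitely many inequalities $M_T\le(1+O(\epsilon))\cR\prod\mu_i^{(\epsilon)}$ pass to the limit.
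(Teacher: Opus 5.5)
Your proposal is correct and follows essentially the same route as the paper. Both arguments use Sion's minimax theorem to obtain a near-optimal $\bz^*$ paired with an $\bx^*$ that is optimal for it, and KKT multipliers $\mu_i$ for the leverage scores $e^{z_i}\bv_{ij}^TA^{-1}\bv_{ij}$ with $\sum_i\mu_i=\mathrm{tr}(AA^{-1})=d$. Both then apply Hadamard's inequality to $V^TA^{-1}V$, use $\sum_{i\in T}z_i\ge 0$, and finish with a compactness limit as $\epsilon\to 0$.

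The one step to tighten is your ``approximate first-order optimality'' step. As stated, it would need smoothness control that you do not supply. The paper avoids it: it fixes $\bz^*$ with $\max_{\bx}\det A(\bx,\bz^*)\le(1+\epsilon)\cR$ and takes $\bx^*$ to be an \emph{exact} maximizer of $\det A(\cdot,\bz^*)$ over the compact feasible set. Then the KKT conditions hold exactly, and $\det A(\bx^*,\bz^*)\le(1+\epsilon)\cR$ still holds.
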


\begin{proof}
Let us define 
$$A(\bx,\bz) = \sum_{i=1}^{m} \sum_{j \in J_i} e^{z_i} x_{ij} \bv_{ij} \bv_{ij}^T.$$
I.e., we have $\cR = \max_\bx \inf_\bz \det A(\bx,\bz)$ subject to the constraints $\bx \geq 0, \forall i; \sum_{j \in J_i} x_{ij} = 1$ and $\forall T \subseteq [m], |T|=d; \sum_{j \in T} z_j \geq 0$.
$A(\bx,\bz)$ is a positive semidefinite $d \times d$ matrix. 
We can apply Sion's minimax theorem to the function $(\det A(\bx,\bz))^{1/d}$ which is concave in $\bx$ (by concavity of $(\det A)^{1/d}$ on the positive-semidefinite cone) and convex in $\bz$ (by convexity of $\log \det A(\bx,\bz)$ in $\bz$). Hence, we have $\max_\bx \inf_\bz (\det A(\bx,\bz))^{1/d} = \inf_\bz \max_\bx (\det A(\bx,\bz))^{1/d}$. In fact, the $1/d$ power can be removed in this statement due to monotonicity. We can choose a near-optimal $\bz^*$ and then a maximizer $\bx^*$ for this given $\bz^*$. Note that the maximization over $\bx$ is over a compact set, and hence an exact optimum exists. We obtain a solution $(\bx^*,\bz^*)$ such that
\begin{itemize}
\item $\cR \leq \det A(\bx^*,\bz^*) \leq (1+\epsilon) \cR$, i.e.~$(\bx^*,\bz^*)$ is near-optimal;
\item $\det A(\bx^*, \bz^*) \geq \det A(\bx, \bz^*)$ for every feasible $\bx$; i.e, $\bx^*$ is optimal for the given $\bz^*$.
\end{itemize}

Assuming that the optimization problem is non-trivial, $\log \det A(\bx,\bz)$ is well defined in the region where $\det A(\bx,\bz) > 0$ and $A(\bx,\bz)$ is invertible.
We use the KKT conditions for $\log \det A(\bx,\bz)$ with respect to the $x_{ij}$ variables: 
There are Lagrange multipliers $\mu_i \geq 0$ such that 
$$ \frac{\partial (\log \det A)}{\partial x_{ij}}(\bx^*,\bz^*) = e^{z^*_i} \bv_{ij}^T A^{-1}(\bx^*,\bz^*) \bv_{ij} \leq \mu_i $$
with equality whenever $x_{ij} > 0$. (The partial derivative is computed using the rank-one update formula.) In the following, we drop the arguments $\bx^*,\bz^*$. Using the trace rotation formula, we derive
$$ \sum_{i=1}^{m} \mu_i = \sum_{i=1}^{m} \sum_{j \in J_i} x^*_{ij} \mu_i
= \sum_{i=1}^{m} \sum_{j \in J_i} e^{z^*_i} x^*_{ij} \bv_{ij}^T A^{-1} \bv_{ij}
= \sum_{i=1}^{m} \sum_{j \in J_i} e^{z^*_i} x^*_{ij} Tr(\bv_{ij} \bv_{ij}^T A^{-1})
= Tr(A A^{-1}) = d.$$
Now, for any $T \subseteq [m], |T|=d$ and a choice of indices $(j_i: i \in T)$,
let $V$ be a matrix with columns $(\bv_{ij_i}: i \in T)$.
We have
$$ \det \left( \sum_{i \in T} \bv_{ij_i} \bv_{ij_i}^T \right) = \det (V V^T)
 = \det ( A V^T A^{-1} V) = \det A \cdot \det (V^T A^{-1/2}) \cdot \det (A^{-1/2} V) $$
 $$ \leq \det A \cdot \prod_{i \in T} \| A^{-1/2} \bv_{ij} \|^2 = \det A \cdot \prod_{i \in T} (\bv_{ij_i}^T A^{-1} \bv_{ij_i}) \leq \det A \cdot \prod_{i \in T} \left(e^{-z^*_i} \mu_i \right) \leq (1+\epsilon) \cR \prod_{i \in T} \mu_i.$$
We used the fact that since $A$ is positive definite, we can decompose $A^{-1} = A^{-1/2} A^{-1/2}$ and $\det (A^{-1/2} V) = \det (V^T A^{-1/2}) \leq \prod_{i \in T} \| A^{-1/2} \bv_{ij}\|$. Also, $\sum_{i \in T} z_i \geq 0$.

This holds for an arbitrarily small $\epsilon>0$, and the respective coefficient vector $(\mu_1, \ldots, \mu_m)$ is contained in a compact set $[0,d]^{m}$. Hence, we can pick a convergent subsequence as $\epsilon \to 0$ and conclude that the bound holds for some $(\mu_1, \ldots, \mu_m)$ with $\epsilon=0$.
\end{proof}

\subsection{Aggregation of $d \times d$ contributions}

Now we can show that $\sum_{|T|=d} S_T^4(\bx)$ attains a good value relative to the relaxation.

\begin{lemma}
\label{lem:4th-moment}
For a fractional solution $\bx$ satisfying $D(\bx) \geq e^{-d} \cR$,
$$ \sum_{|T|=d} S_T^4(\bx) \geq e^{-7d} \cR.$$
\end{lemma}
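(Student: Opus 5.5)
We prove the bound by combining the per-set inequality $S_T(\bx) \geq M_T^{-3/4} D_T(\bx)$ with the leverage score bound of Lemma~\ref{lem:KKT}, and then applying Hölder's inequality to aggregate over $T$. Let $\mu_1,\ldots,\mu_m$ be the parameters from Lemma~\ref{lem:KKT} and write $\mu_T = \prod_{i\in T}\mu_i$.

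First, since every term of $S_T(\bx)$ satisfies $\det(\cdot)^{1/4} = \det(\cdot)\cdot\det(\cdot)^{-3/4} \geq \det(\cdot)\, M_T^{-3/4}$, and $\sum \prod x = 1$ over $T$, we get $S_T(\bx)\ge M_T^{-3/4}D_T(\bx)$. Plugging in $M_T\le \cR\,\mu_T$ gives
\[
S_T^4(\bx) \geq \frac{D_T(\bx)^4}{\cR^3 \mu_T^3}.
\]
For sets $T$ with $\mu_T=0$, we have $M_T=0$, hence $D_T(\bx)=0$, and such $T$ can be discarded.

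Next we aggregate with Hölder's inequality in the form $\sum_T D_T = \sum_T (D_T^4/\mu_T^3)^{1/4}\mu_T^{3/4}$, which gives
\[
\sum_T D_T \le \Big(\sum_T \frac{D_T^4}{\mu_T^3}\Big)^{1/4}\Big(\sum_T \mu_T\Big)^{3/4}.
\]
Hence
\[
\sum_T S_T^4 \geq \frac{(\sum_T D_T)^4}{\cR^3 (\sum_T \mu_T)^3} = \frac{D(\bx)^4}{\cR^3\, e_d(\mu)^3},
\]
where $e_d(\mu)=\sum_{|T|=d}\prod_{i\in T}\mu_i$ is the elementary symmetric polynomial.

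The main quantitative step is bounding $e_d(\mu)$ under the constraint $\sum_i\mu_i=d$. By Maclaurin's inequality,
\[
e_d(\mu)\le \binom{m}{d}\Big(\frac{d}{m}\Big)^d \le \frac{m^d}{d!}\cdot\frac{d^d}{m^d} = \frac{d^d}{d!}\le e^d.
\]
Equivalently, $e_d(\mu)\le (\sum_i\mu_i)^d/d! = d^d/d!$ directly, since each monomial appears $d!$ times in the multinomial expansion.

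Combining this with $D(\bx)\ge e^{-d}\cR$ yields
\[
\sum_T S_T^4 \ge \frac{e^{-4d}\cR^4}{\cR^3 e^{3d}} = e^{-7d}\cR,
\]
which is exactly the claimed bound.

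Two points need care. The \emph{main obstacle} is making sure the Hölder direction and the exponents line up so that the constant comes out to exactly $e^{-7d}$; with the weighting above, the exponents $1/4$ and $3/4$ match the fourth powers and $\mu_T^3$ correctly. The second point is the degenerate case $\mu_T=0$, which is handled by dropping those $T$ as explained above.
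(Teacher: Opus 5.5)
Your proof is correct and follows the paper's argument: the per-$T$ bound $D_T(\bx)\le M_T^{3/4}S_T(\bx)$, H\"older with exponents $4/3$ and $4$, Lemma~\ref{lem:KKT}, and $\sum_{|T|=d}\prod_{i\in T}\mu_i\le d^d/d!\le e^d$. The only difference is that you substitute $M_T\le\cR\,\mu_T$ before applying H\"older, using weights $\mu_T^{3/4}$, whereas the paper applies H\"older with weights $M_T^{3/4}$ and bounds $\sum_T M_T$ afterwards. The paper's ordering makes the degenerate case $\mu_T=0$ disappear automatically, and you handle that case correctly anyway.
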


\begin{proof}
We have $$D(\bx) = \sum_{|T|=d} D_T(\bx) \leq \sum_{|T|=d} M_T^{3/4} S_T(\bx).$$
We apply H\"{o}lder's inequality, $\sum_T m_T s_T \leq (\sum m_T^\alpha)^{1/\alpha} (\sum s_T^\beta)^{1/\beta}$, with $m_T = M_T^{3/4}$, $s_T = S_T(\bx)$, $\alpha=4/3$, $\beta = 4$:
$$ D(\bx) \leq \left( \sum_{|T|=d} M_T \right)^{3/4} \left( \sum_{|T|=d} S_T^4(\bx) \right)^{1/4}.$$
Using Lemma~\ref{lem:KKT}, we get $$D(\bx) \leq \cR^{3/4}  \left( \sum_{|T|=d} \prod_{i \in T} \mu_i \right)^{3/4} \left( \sum_{|T|=d} S_T^4(\bx) \right)^{1/4}.$$
We estimate the first factor using the condition $\sum_{i=1}^{m} \mu_i = d$:
$\sum_{|T|=d} \prod_{i \in T} \mu_i \leq \frac{1}{d!} (\sum_{i=1}^{m} \mu_i)^d = \frac{d^d}{d!} \leq e^d$.
Therefore, $$ e^{-d} \cR \leq D(\bx) \leq e^{3d/4} \cR^{3/4} \left( \sum_{|T|=d} S_T^4(\bx) \right)^{1/4} $$
which proves the lemma.
\end{proof}

Now we have all the components in place to prove the main guarantee for our solution. This still requires some work, because our tools provide a bound on the logarithmic expectation of $\det (\sum_{i \in T} \bw_i \bw_i^T)$. To finish the analysis, we need to relate this to the logarithmic expectation of the global determinant, $\det (\sum_{i=1}^{m} \bw_i \bw_i^T)$.

\begin{theorem}
\label{thm:stable-chaos-m}
Consider a fractional solution $\bx$ satisfying $D(\bx) \geq e^{-d} \cR$.
Let $\bw_1,\ldots,\bw_m$ be the random vectors constructed by the algorithm. Then with constant probability,
$$ \det \left( \sum_{i=1}^{m} \bw_i \bw_i^T \right) \geq e^{-9d} \cR.$$
\end{theorem}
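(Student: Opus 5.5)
We will combine the per-subset logarithmic stable chaos bound with Lemma~\ref{lem:4th-moment}, by aggregating over $T$ via a Gibbs-type (Jensen) argument and then applying Markov's inequality as in the $m=d$ case.

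\textbf{Step 1: bound each $T$.} Fix $T \subseteq [m]$ with $|T|=d$. By Cauchy--Binet, $\det(\sum_{i \in T} \bw_i \bw_i^T) = (\det[\bw_i : i \in T])^2$. As in the $m=d$ analysis, $\det[\bw_i: i\in T] = Q_T(\bx)/\prod_{i\in T} N_i$, where $N_i = \sum_{j \in J_i} x_{ij}^2 |Z_{ij}|$ and $Q_T$ is the chaos with tensor $a_{\bj} = \det[\bv_{i j_i}: i \in T]$. Since $|a_\bj|^{1/2} = (\det \sum_{i\in T}\bv_{ij_i}\bv_{ij_i}^T)^{1/4}$, the corresponding $S$-quantity is exactly $S_T(\bx)$. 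Lemma~\ref{lem:stable-chaos} then gives $\E[\log|Q_T|] \geq 2\log S_T(\bx) + d\kappa$. Hence
\[ \E\Big[\log \big( N^{(T)}\, \det\textstyle(\sum_{i\in T}\bw_i\bw_i^T)\big)\Big] \geq 4\log S_T(\bx) + 2d\kappa, \qquad N^{(T)} := \textstyle\prod_{i\in T} N_i^2 . \]

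\textbf{Step 2: aggregate over $T$.} Since the normalizations $N_i$ differ between subsets, we separate them globally. Write $G = \det(\sum_{i=1}^m \bw_i\bw_i^T) = \sum_T g_T$ with $g_T = \det(\sum_{i\in T}\bw_i\bw_i^T)$. Apply the Gibbs inequality (Proposition~\ref{prop:Gibbs}) with uniform weights $p_T = 1/\binom{m}{d}$ and $q_T = S_T^4/\sum_{T'} S_{T'}^4$:
\[ \log G \geq \log\binom{m}{d} + \sum_T q_T \log g_T - D_{KL}(q\|p) = \sum_T q_T \log \frac{g_T}{q_T}. \]
Taking expectations and using Step~1 yields
\[ \E[\log G] \geq \sum_T q_T\Big(4\log S_T + 2d\kappa - \E\Big[\textstyle\sum_{i\in T} 2\log N_i\Big] - \log q_T\Big). \]
By Lemma~\ref{lem:1/2-stable}, $\E[\log N_i] \leq \kappa + \log 2$, so the bracket is at least $4\log S_T - 2d\log 2 - \log q_T$. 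Substituting $\log q_T = 4\log S_T - \log\sum_{T'}S_{T'}^4$ cancels the $S_T$ terms, leaving
\[ \E[\log G] \geq \log \sum_T S_T^4(\bx) - 2d\log 2 \geq \log\cR - 7d - 2d\log 2, \]
where the last inequality is Lemma~\ref{lem:4th-moment}. This is the step I expect to be the main obstacle: the per-$T$ normalizations must be controlled in expectation only, and the Gibbs weights must be chosen so that the entropy term cancels exactly. Proportional-to-$S_T^4$ weights are precisely what achieves this.

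\textbf{Step 3: upper bound and Markov.} Always $G \leq \cR$. Indeed, by the convexity argument in the proof of Lemma~\ref{lem:rounding-m}, $G$ is at most the value of some integral solution, which is at most $\OPT \leq \cR$ (up to the $(1\pm\epsilon)$ slack of the near-optimal solution). Thus $\log(\cR/G)$ is a nonnegative random variable with expectation at most $(7+2\log 2)d$. By Markov's inequality,
\[ \Pr\big[\log(\cR/G) > 9d\big] \leq \frac{7+2\log 2}{9} < 1, \]
so with constant probability $G \geq e^{-9d}\cR$. Integrability of all the logarithms follows from the integrability established in the proof of Lemma~\ref{lem:stable-chaos}. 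Subsets $T$ with $S_T = 0$ receive weight $q_T = 0$ and can be ignored.
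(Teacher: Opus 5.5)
Your proposal is correct and follows the paper's proof essentially step for step. The steps match: Cauchy--Binet; the Gibbs inequality with uniform $p_T$ and $q_T \propto S_T^4(\bx)$; the per-$T$ bound $\E[\log\det(\sum_{i\in T}\bw_i\bw_i^T)] \geq 4\log S_T(\bx) - 2d\log 2$ from Lemma~\ref{lem:stable-chaos} together with the $\kappa+\log 2$ bound on the normalizers; then Lemma~\ref{lem:4th-moment} and Markov's inequality using $G \le \OPT \le \cR$ from Lemma~\ref{lem:rounding-m}. The $(1\pm\epsilon)$ caveat in your Step~3 is unnecessary, since $\OPT \le \cR$ holds exactly for the relaxation value.
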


\begin{proof}
Using the Cauchy-Binet expansion and once again the Gibbs inequality (Proposition~\ref{prop:Gibbs}), with $p_T$ uniform, $\gamma_T = \det (\sum_{i \in T} \bw_i \bw_i^T)$ and $q_T$ to be determined:
$$ \log \det \left( \sum_{i=1}^{m} \bw_i \bw_i^T \right) =  \log \sum_{\substack{T \subseteq [m] \\ |T|=d}} \det \left( \sum_{i \in T} \bw_i \bw_i^T \right) \geq 
\sum_{\substack{T \subseteq [m] \\ |T|=d}} q_T \log \det \left( \sum_{i \in T} \bw_i \bw_i^T \right) 
 - \sum_{\substack{T \subseteq [m] \\ |T|=d}} q_T \log q_T. $$
Our goal is to apply Lemma~\ref{lem:stable-chaos} to analyze the contribution from each fixed $T$. Note that since $|T|=d$, the approximation factors for each $T$ will depend on $d$ rather than $m$.
From the algorithm, we have $\bw_i = \frac{1}{\sum_{j \in J_i} x_{ij}^2 |Z_{ij}|} \sum_{j \in J_i} x_{ij}^2 Z_{ij} \bv_{ij}$. 
Hence
$$ \det \left( \sum_{i \in T} \bw_i \bw_i^T \right) = \left( \frac{Q_T(\bx)}{ \prod_{i \in T} \sum_{j \in J_i} x_{ij}^2 |Z_{ij}|} \right)^2 $$
where
$$Q_T(\bx) = \sum_{\substack{j_i \in J_i \\ \forall i \in T}} \left(\prod_{i \in T} x_{i,j_i}^2 Z_{i,j_i} \right) a_{\bj,T}, $$
and $a_{\bj,T} = \det [\bv_{i j_i}: i \in T]$ is the determinant of a matrix whose columns are the vectors $\bv_{ij_i}$ indexed by $i \in T$ (let's say in increasing order; we take a square of the determinant anyway). 
We have
$$ \E\left[ \log \det \sum_{i \in T} \bw_i \bw_i^T\right] = 2\E[ \log |Q_T(\bx)| ] - 2\sum_{i \in T} \E\left[\log \sum_{j \in J_i} x_{ij}^2 |Z_{ij}| \right] \geq 2\E[\log |Q_T(\bx)|] - 2d (\kappa + \log 2).  $$
We also define, 
$$S_T(\bx) = \sum_{\substack{j_i \in J_i \\ \forall i \in T}} \left(\prod_{i \in T} x_{i,j_i} \right) |a_{\bj,T}|^{1/2} $$
Note that all the expressions here depend only on $x_{ij}$ for  $i \in T$. By the Lemma~\ref{lem:stable-chaos}, applied to the process restricted to $T$,
$$ \E[\log |Q_T(\bx)|] \geq 2 \log S_T(\bx) + d \kappa.$$
Hence, 
$$ \E\left[ \log \det \sum_{i \in T} \bw_i \bw_i^T\right] \geq 4 \log S_T(\bx) - 2d \log 2.$$
Now we go back to the global quantity $\log \det (\sum_{i=1}^{m} \bw_i \bw_i^T)$, and take the expectation:
$$ \E\left[ \log \det \left( \sum_{i=1}^{m} \bw_i \bw_i^T \right) \right] \geq 
\sum_{\substack{T \subseteq [m] \\ |T|=d}} q_T \E\left[ \log \det \left( \sum_{i \in T} \bw_i \bw_i^T \right) \right]  - \sum_{\substack{T \subseteq [m] \\ |T|=d}} q_T \log q_T
 $$
$$\geq \sum_{\substack{T \subseteq [m] \\ |T|=d}} q_T \log S_T^4(\bx) - 2d \log 2  - \sum_{\substack{T \subseteq [m] \\ |T|=d}} q_T \log q_T.$$
We can choose the distribution $q_T$ to maximize this expression; by the Gibbs variational principle, the optimal choice is $q_T = \frac{S_T^4(\bx)}{\sum_{T'} S_{T'}^4(\bx)}$. This gives
$$ \E\left[ \log \det \left( \sum_{i=1}^{m} \bw_i \bw_i^T \right) \right] \geq 
 \sum_{\substack{T \subseteq [m] \\ |T|=d}} \frac{S_T^4(\bx)}{\sum_{T'} S_{T'}^4(\bx)} \log S_T^4(\bx) - 2d \log 2  - \sum_{\substack{T \subseteq [m] \\ |T|=d}} \frac{S_T^4(\bx)}{\sum_{T'} S_{T'}^4(\bx)} \log \frac{S_T^4(\bx)}{\sum_{T'} S_{T'}^4(\bx)} $$
 $$ = \log \sum_{\substack{T \subseteq [m] \\ |T|=d}} S_T^4(\bx) - 2d \log 2
 \geq  \log \cR - 7d - 2d \log 2 $$
 by Lemma~\ref{lem:4th-moment}. Also, by Lemma~\ref{lem:rounding-m}, the value of $\det \left( \sum_{i=1}^{m} \bw_i \bw_i^T \right)$ cannot be larger than $\cR$.
Since $7 + 2 \log 2 < 9$, by Markov's inequality, $\log \det \left( \sum_{i=1}^{m} \bw_i \bw_i^T \right) > \log \cR - 9d$ with constant probability.
\end{proof}

Hence, we obtain vectors $\bw_1,\ldots,\bw_m$ with a good value $\det (\sum_{i=1}^{m} \bw_i \bw_i^T)$ with constant probability. 
Together, Theorem~\ref{thm:stable-chaos-m} and Lemma~\ref{lem:rounding-m} imply the simple partition matroid case of Theorem~\ref{thm:main}.

\section{Algorithm and analysis for $m\le d$}

In the case where we choose $1\le m\le d$ vectors (one from each
part), the objective is replaced by the $m$-dimensional unsigned volume of
the chosen vectors. For $U=[\bu_1,\ldots,\bu_m]\in\R^{d\times m}$,
write
\[
 \Vol_m[\bu_1,\ldots,\bu_m]:=\sqrt{\det(U^TU)}.
\]
For $\bj=(j_1,\ldots,j_m)\in J_1\times\cdots\times J_m$, set
\[
 a_{\bj}:=\Vol_m[\bv_{1,j_1},\ldots,\bv_{m,j_m}],
 \qquad
 \OPT:=\max_{\bj}a_{\bj}^2.
\]
Denote
$x_{\bj}=\prod_i x_{i,j_i}$ and $Z_{\bj}=\prod_i Z_{i,j_i}$.
We assume $\OPT>0$, otherwise the problem is trivial.
We run the same algorithm as in the $m=d$ case, replacing absolute
determinants by $m$-dimensional volumes. Given probability vectors
$\bx$, we define
\begin{equation}\label{eq:lower-rank-columns}
 \beta_i:=\sum_{j\in J_i}x_{ij}^2|Z_{ij}|,
 \qquad
 \bw_i:=\frac{\sum_{j\in J_i}x_{ij}^2Z_{ij}\bv_{ij}}{\beta_i},
\end{equation}
where we sample independent copies $Z_{ij}\sim\cD_{1/2}$. Our starting point is again the fractional relaxation of Nikolov and Singh. For a positive semidefinite $A\in\R^{d\times d}$, let
$\operatorname{sym}_m(A)$ be the $m$-th elementary symmetric polynomial of its eigenvalues. Define
\begin{equation}\label{eq:lower-saddle-point}
 \cR:=\max_{\bx}\inf_{\bz:\,\sum_{i=1}^m z_i\ge0}
\operatorname{sym}_m\left(\sum_{i=1}^m\sum_{j\in J_i}
 e^{z_i}x_{ij}\bv_{ij}\bv_{ij}^T\right)\ge \OPT,
\end{equation}
where $x_{ij}\ge0$ and $\sum_{j\in J_i}x_{ij}=1$ for every $i\in[m]$.
This is the simple partition matroid specialization of \cite[Section~3.2, Lemma~3.3]{NS16}. When $m=d$, $\operatorname{sym}_d=\det$ and the earlier relaxation is recovered. \cite[Lemma~3.6]{NS16} gives an efficient algorithm to compute a fractional solution $\bx$ such that
\begin{equation}\label{eq:lower-NS-input}
 D(\bx):=\sum_{\bj}x_{\bj}a_{\bj}^2
 \ge e^{-m}\cR.
\end{equation}
We now present the algorithm for determinant maximization in the case of $m\le d$. It is essentially identical to the $m= d$ case.

\begin{algorithm}[H]
\caption{Randomized rounding for a simple partition matroid, $m\le d$}
\label{alg:lower-rank-rounding}
\begin{algorithmic}[1]
\Require Vectors $\bv_{ij}\in\R^d$, near-optimal fractional solution $(\bx,\bz)$, and
$\cR>0$ satisfying \eqref{eq:lower-NS-input}.
\Ensure One index $j(i)\in J_i$ chosen for each $1\le i\le m$.
\Repeat
    \State Draw independent random variables
    $Z_{ij}\sim\cD_{1/2}$ for $1\le i\le m$, $j\in J_i$.
    \State $y_{ij}\gets
    x_{ij}^2Z_{ij}/\sum_{k\in J_i}x_{ik}^2|Z_{ik}|$.
    \State $\bw_i\gets\sum_{j\in J_i}y_{ij}\bv_{ij}$.
\Until{$\Vol_m[\bw_1,\ldots,\bw_m]
       \ge e^{-3m}\sqrt{\cR}$}
\For{$i=1,\ldots,m$}
    \State Choose $j(i)\in J_i$ to maximize
    $
     \Vol_m[
       \bv_{1,j(1)},\ldots,\bv_{i,j(i)},
       \bw_{i+1},\ldots,\bw_m]$.
\EndFor
\State \Return $(j(1),\ldots,j(m))$.
\end{algorithmic}
\end{algorithm}

Again, we first observe the iterative rounding consequence.

\begin{lemma}
\label{lem:lower-rank-output}
If Algorithm~\ref{alg:lower-rank-rounding} terminates, it returns
a solution such that
\[
 \Vol_m[\bv_{1,j(1)},\ldots,\bv_{m,j(m)}]^2
 \ge e^{-6m}\OPT.
\]
\end{lemma}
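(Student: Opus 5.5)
The plan mirrors the proof of the analogous lemma for $m=d$, with absolute determinants replaced by $m$-dimensional volumes. The only new ingredient is the right multilinearity/convexity property of $\Vol_m$.

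\textbf{Starting bound.} If the algorithm reaches Step 6, the termination condition gives $\Vol_m[\bw_1,\ldots,\bw_m]\ge e^{-3m}\sqrt{\cR}$. By \eqref{eq:lower-saddle-point} we have $\cR\ge\OPT$, so it suffices to show that each step of the loop in Steps 6--7 does not decrease the volume. Squaring the final volume then gives the claimed bound $e^{-6m}\OPT$.

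\textbf{The key step.} Fix $i$ and let $\bu_1,\ldots,\bu_m$ be the current columns, with $\bu_k=\bv_{k,j(k)}$ for $k<i$ and $\bu_k=\bw_k$ for $k\ge i$. Write $\bu_i=\bw_i=\sum_{j\in J_i}y_{ij}\bv_{ij}$, where $\sum_j|y_{ij}|=1$. By Cauchy--Binet,
\[
\Vol_m^2[\bu_1,\ldots,\bu_m]=\det(U^TU)=\sum_{S\subseteq[d],\,|S|=m}\det(U_S)^2,
\]
where $U_S$ is the $m\times m$ submatrix of rows indexed by $S$. Each $\det(U_S)$ is linear in the $i$-th column, and therefore linear in $(y_{ij})_{j\in J_i}$. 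It follows that $\Vol_m^2$, as a function of $(y_{ij})_j$, is a sum of squares of linear functions, hence convex.

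\textbf{Replacing $\bw_i$ by a single vector.} A convex function on the cross-polytope $\{\sum_j|y_{ij}|\le1\}$ attains its maximum at a vertex $y_{ij}=\pm1$. The sign is irrelevant, since $\Vol_m$ is invariant under negating a column. So some $\bv_{i,j}$ achieves $\Vol_m[\ldots,\bv_{i,j},\ldots]\ge\Vol_m[\ldots,\bw_i,\ldots]$. The algorithm chooses $j(i)$ as the maximizer, so its choice is at least this good.

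\textbf{Conclusion.} Inducting over $i=1,\ldots,m$ gives
\[
\Vol_m[\bv_{1,j(1)},\ldots,\bv_{m,j(m)}]\ge e^{-3m}\sqrt{\cR}\ge e^{-3m}\sqrt{\OPT},
\]
and squaring yields the claim.

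The only point needing care is the convexity step, because $\Vol_m$ itself is not multilinear when $m<d$. The Cauchy--Binet expansion over $m\times m$ row minors reduces it to the same convexity argument as in Lemma~\ref{lem:rounding-m}.
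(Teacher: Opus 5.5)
Your proof is correct. Its overall structure matches the paper's: the termination bound, then column-by-column replacement that never decreases the volume, then $\cR\ge\OPT$. The difference lies only in how you justify that a single replacement does not decrease $\Vol_m$.

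The paper uses the base-times-height identity $\Vol_m[B,\bu]=\sqrt{\det(B^TB)}\,\|\Pi_{\operatorname{span}(B)^\perp}\bu\|$. This shows directly that, with the other columns fixed, $\Vol_m$ is a seminorm in the remaining column. The triangle inequality then gives $\Vol_m[B,\bw_i]\le\sum_j|y_{ij}|\Vol_m[B,\bv_{ij}]\le\max_j\Vol_m[B,\bv_{ij}]$, with no squaring and no appeal to vertices of the cross-polytope.

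You instead square, expand $\det(U^TU)$ by Cauchy--Binet over $m\times m$ row minors, and use convexity of a sum of squares of linear forms on the cross-polytope. This is exactly the argument of Lemma~\ref{lem:rounding-m} transplanted to the $m\le d$ regime, summing over row subsets $S\subseteq[d]$ rather than column subsets $T\subseteq[m]$.

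Your route treats both regimes uniformly. The paper's is slightly more elementary and yields the averaged inequality directly. The two are the same fact in disguise: $\sqrt{\sum_S\det(U_S)^2}$ is the Euclidean norm of a linear image of the $i$-th column, which is precisely the seminorm $\sqrt{\det(B^TB)}\,\|\Pi_{\operatorname{span}(B)^\perp}\cdot\|$.
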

\begin{proof}
Fix some matrix $B$ with $m-1$ columns. Observe the elementary identity
\[
 \Vol_m[B,\bu]
 =
 \sqrt{\det(B^TB)}\,
 \|\Pi_{\operatorname{span}(B)^\perp}\bu\|.
\]
Since $\bw_i=\sum_jy_{ij}\bv_{ij}$ and $\sum_j|y_{ij}|=1$,
the triangle inequality gives
\[
 \Vol_m[B,\bw_i]
 \le
 \sum_j|y_{ij}|\Vol_m[B,\bv_{ij}]
 \le
 \max_{j\in J_i}\Vol_m[B,\bv_{ij}].
\]
Therefore each iterative replacement cannot decrease volume.
On termination, we see that
\[
 \Vol_m[\bv_{1,j(1)},\ldots,\bv_{m,j(m)}]
 \ge
 \Vol_m[\bw_1,\ldots,\bw_m]
 \ge
 e^{-3m}\sqrt{\cR}\ge e^{-3m}\sqrt{\OPT}.
\]
\end{proof}

\subsection{Analysis of the sampling procedure}
Our remaining goal is to prove that the algorithm will succeed in Steps 2-5 with good probability. This is captured by the following.

\begin{theorem}
\label{thm:lower-rank-sampling}
Consider an optimal fractional solution $\bx$ of
relaxation \eqref{eq:lower-saddle-point} whose value is $\cR>0$.
Let $\bw_1,\ldots,\bw_m$ be the random vectors constructed in Algorithm~\ref{alg:lower-rank-rounding}. Then,
with constant probability,
\[
 \Vol_m[\bw_1,\ldots,\bw_m]
 \ge e^{-3m}\sqrt{\cR}.
\]
\end{theorem}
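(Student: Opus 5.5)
The plan is to reduce the volume to a family of $m\times m$ determinants, run the $m=d$ argument on each of them, and then aggregate the pieces as in the proof of Theorem~\ref{thm:stable-chaos-m}.

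\textbf{Step 1: change of coordinates and Cauchy--Binet.} Volumes are invariant under rotations of $\R^d$, so I first rotate coordinates into an orthonormal eigenbasis of $A=A(\bx,\bz)=\sum_{i,j}e^{z_i}x_{ij}\bv_{ij}\bv_{ij}^T$, with eigenvalues $\lambda_1,\dots,\lambda_d$. Cauchy--Binet then gives
\[
\Vol_m^2[\bw_1,\ldots,\bw_m]=\sum_{R\subseteq[d],\,|R|=m}\det{}^2[\bw_1,\ldots,\bw_m]_R ,
\]
where $[\cdot]_R$ keeps the rows in $R$. For a fixed row set $R$, exactly as in the $m=d$ analysis,
\[
\det[\bw_1,\ldots,\bw_m]_R=\frac{Q_R(\bx)}{\prod_i\beta_i},\qquad a_{\bj,R}=\det[\bv_{1j_1},\ldots,\bv_{mj_m}]_R .
\]
Lemma~\ref{lem:stable-chaos} applies to the arbitrary tensor $a_{\bj,R}$, and Lemma~\ref{lem:1/2-stable} bounds the normalizers $\beta_i$. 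Together they give
\[
\E\log\det{}^2[\bw]_R\ge 4\log S_R(\bx)-2m\log 2,\qquad S_R(\bx)=\sum_\bj x_\bj|a_{\bj,R}|^{1/2}.
\]

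\textbf{Step 2: aggregation over $R$.} I apply the Gibbs inequality (Proposition~\ref{prop:Gibbs}) to the Cauchy--Binet sum, with optimal weights $q_R\propto S_R^4$, verbatim as in the proof of Theorem~\ref{thm:stable-chaos-m}. This yields
\[
\E\log\Vol_m^2[\bw]\ge\log\sum_R S_R^4(\bx)-2m\log2 .
\]
Since $a_\bj^2=\sum_R a_{\bj,R}^2$, we have
\[
D(\bx)=\sum_R D_R\le\sum_R M_R^{3/2}S_R ,\qquad M_R=\max_\bj|a_{\bj,R}|.
\]
Hölder with exponents $4/3$ and $4$, as in Lemma~\ref{lem:4th-moment}, then gives
\[
e^{-m}\cR\le D(\bx)\le\Big(\sum_R M_R^2\Big)^{3/4}\Big(\sum_R S_R^4\Big)^{1/4}.
\]
So everything hinges on a bound $\sum_R M_R^2\le e^{O(m)}\cR$; the target constant $e^{-3m}$ needs essentially $\sum_R M_R^2\le\cR$. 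Given that bound, $\E\log\Vol_m^2[\bw]\ge\log\cR-4m-2m\log2$. Lemma~\ref{lem:lower-rank-output}'s argument (termwise, the rounding cannot decrease volume) gives $\Vol_m^2[\bw]\le\OPT\le\cR$, so Markov's inequality finishes the argument with the exponent $6m$ for $\Vol_m^2$, i.e.\ $e^{-3m}$ for $\Vol_m$.

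\textbf{Step 3 (main obstacle): an analogue of the leverage score bound, Lemma~\ref{lem:KKT}, for $\operatorname{sym}_m$.} Write $\Lambda=\mathrm{diag}(\lambda)$ and restrict to the range of $A$. Since $\det[V_\bj]_R=\prod_{r\in R}\lambda_r^{1/2}\det[\Lambda^{-1/2}V_\bj]_R$, Hadamard's inequality on the columns gives
\[
a_{\bj,R}^2\le\prod_{r\in R}\lambda_r\cdot\prod_{i=1}^m \bv_{ij_i}^T A^{-1}\bv_{ij_i}.
\]
Summing over $R$ produces exactly $\operatorname{sym}_m(\lambda)=\operatorname{sym}_m(A)$. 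So it remains to show $\prod_i\max_j e^{z_i}\bv_{ij}^TA^{-1}\bv_{ij}\le 1$ at a suitable near-saddle point. Here $\sum_i z_i\ge0$ handles the $e^{z_i}$ factors, and by AM--GM it would suffice to show that the multipliers $\mu_i$ sum to $m$.

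The difficulty is that the KKT conditions for $\log\operatorname{sym}_m(A)$ involve the gradient $\nabla\operatorname{sym}_m(A)/\operatorname{sym}_m(A)$, which is a weighted combination of projections onto eigendirections, rather than $A^{-1}$. I would first try to redo the Hadamard step with this gradient matrix $G$ in place of $A^{-1}$. The relevant facts are:
\begin{itemize}
\item $\mathrm{Tr}(AG)=m$, by Euler's identity for degree-$m$ homogeneous functions, which gives $\sum_i\mu_i=m$;
\item $G$ has eigenvalues $\operatorname{sym}_{m-1}(\lambda_{-r})/\operatorname{sym}_m(\lambda)$;
\item the inequality still needed is $\prod_{r\in R}\lambda_r\le\operatorname{sym}_m(\lambda)\det(G_R)$-type comparisons after summing over $R$.
\end{itemize}
If that comparison fails, the fallback is to accept a loss of $e^{O(m)}$ here and note that it only affects the constant in the exponent, as in the weaker form of Theorem~\ref{thm:stable-chaos-m}.
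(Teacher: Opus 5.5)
Your approach has a genuine gap, and it is not confined to Step~3. The target $\sum_R M_R^2\le e^{O(m)}\cR$ is false. Worse, the Step~2 bound $\log\sum_R S_R^4-2m\log2$ can itself fall short of $\log\cR$ by an amount that grows with $d$. So no leverage-score comparison, with $A^{-1}$ or with $G$, can repair it.

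Take $m=1$, $J_1=[d]$, $\bv_{1j}=\be_j$, and distinct $x_{1j}\in[\frac{1}{2d},\frac{2}{d}]$. Here every feasible $\bx$ is optimal with $\cR=\OPT=1$. Also $A=e^{z_1}\mathrm{diag}(x_{11},\ldots,x_{1d})$ has distinct eigenvalues, which forces the standard eigenbasis.
\begin{itemize}
\item $M_{\{r\}}^2=1$ for all $r$, so $\sum_R M_R^2=d$.
\item $S_{\{r\}}=x_{1r}$, so $\sum_R S_R^4=\Theta(d^{-3})$.
\item Your H\"older step is tight: $D(\bx)=1$ and $(\sum_R M_R^2)^{3/4}(\sum_R S_R^4)^{1/4}=\Theta(1)$.
\end{itemize}
Hence your chain yields only $\E\log\|\bw_1\|^2\ge-3\log d-O(1)$, while the theorem needs $-O(1)$. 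Moreover $G=e^{-z_1}I$ and $\mu_1=1$, so $\operatorname{sym}_1(A)\mu_1=e^{z_1}\approx1$. The $G$-based comparison you hope for would therefore have to bound $d$ by a constant.

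Your fallback of absorbing an $e^{O(m)}$ loss is not available either. Take $m$ parts, each equal to $\{\be_1,\ldots,\be_d\}$, with uniform $\bx$ and the standard basis as eigenbasis of $A=cI$. Then $\sum_R S_R^4=\binom{d}{m}(m!/d^m)^4\le(m/d)^{3m}$ while $\cR\ge1$.

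The underlying reason is that the row-minor Cauchy--Binet decomposition is basis-dependent. Each $a_\bj^2=\sum_R a_{\bj,R}^2$ is spread over many minors in a $\bj$-dependent way. For $m\ge d$ the paper decomposes over columns, and Lemma~\ref{lem:KKT} controls $\sum_T M_T$. Here the relaxation gives no control of $\sum_R M_R^2$.

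The missing idea is to avoid minors altogether. The paper composes with a standard Gaussian $T\in\R^{m\times d}$. Then $|\det(TU)|\stackrel{\mathrm d}{=}\Vol_m(U)|\det H|$, so $\E_T\log|\det(TU)|=\log\Vol_m(U)+c_m$ with $c_m$ independent of $U$. The argument then runs in three steps.
\begin{itemize}
\item For each fixed $T$, Lemma~\ref{lem:stable-chaos} applied to the tensor $a^{(T)}_\bj=\det[T\bv_{1j_1},\ldots,T\bv_{mj_m}]$ gives $\E_Z\log|\det[T\bw_1,\ldots,T\bw_m]|\ge2\log S^{(T)}(\bx)-m\log2$.
\item The Gibbs inequality with $q_\bj\propto x_\bj a_\bj^{1/2}$ gives $2\E_T\log S^{(T)}(\bx)\ge2\log S(\bx)+c_m$. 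Here $S(\bx)=\sum_\bj x_\bj a_\bj^{1/2}$ is built from the full volumes.
\item Then $S(\bx)\ge D(\bx)/\cR^{3/4}\ge e^{-m}\cR^{1/4}$ uses only $a_\bj^2\le\OPT\le\cR$.
\end{itemize}
No analogue of Lemma~\ref{lem:KKT} is needed.
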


As in the $m=d$ case, we introduce the deterministic quantity $S(\bx)$ below, and observe from \eqref{eq:lower-NS-input} that $a_{\bj}^2\le\cR$, so
\begin{equation}\label{eq:lower-S-comparison}
 \cR^{1/4}\ge S(\bx):=\sum_{\bj}x_{\bj}a_{\bj}^{1/2}
 \ge\frac{D(\bx)}{\cR^{3/4}}
 \ge e^{-m}\cR^{1/4}.
\end{equation}
It therefore suffices to prove the following analogue of the
logarithmic stable chaos inequality (Lemma~\ref{lem:stable-chaos}) but in the $m\le d$ case with volumes. The proof is deferred to the next section.

\begin{lemma}[Logarithmic stable volume]
\label{lem:stable-vol}
For any probability vectors $\bx$ with $S(\bx)>0$, the columns
$\bw_1,\ldots,\bw_m$ in \eqref{eq:lower-rank-columns} satisfy
\begin{equation}\label{eq:lower-normalized-chaos}
 \E_Z[\log\Vol_m[\bw_1,\ldots,\bw_m]]  \ge 2\log S(\bx)-m\log2.
\end{equation}
\end{lemma}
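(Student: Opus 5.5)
The plan is to reduce the volume statement to the determinant statement, Lemma~\ref{lem:stable-chaos}, by a random Gaussian projection from $\R^d$ to $\R^m$. Let $G$ be an $m\times d$ matrix with i.i.d.\ standard Gaussian entries, independent of the $Z_{ij}$. For any $U\in\R^{d\times m}$ of full column rank, write $U = O P$ with $O\in\R^{d\times m}$ having orthonormal columns and $P=(U^TU)^{1/2}$. By rotation invariance, $GO$ is an $m\times m$ standard Gaussian matrix $G'$. Hence
\[
\log|\det(GU)| = \log|\det G'| + \log \Vol_m(U),
\qquad
\E_G\log|\det(GU)| = \log\Vol_m(U) + c_m ,
\]
where $c_m:=\E\log|\det G'|$ is a finite constant (a sum of expected logs of chi variables) that does not depend on $U$. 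The same identity holds trivially, as $-\infty=-\infty$, when $U$ is rank deficient. This lets us trade the unsigned volume for a signed, multilinear determinant, at the cost of an additive constant that will cancel.

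Next, fix $G$ and apply Lemma~\ref{lem:stable-chaos} to the tensor $a^G_{\bj}:=\det[G\bv_{1,j_1},\ldots,G\bv_{m,j_m}]$. By multilinearity, the corresponding $Q$ equals $\det(G\,U_Z)$, where $U_Z=[\sum_j x_{1j}^2Z_{1j}\bv_{1j},\ldots]$ is the unnormalized matrix, so $U_Z=[\beta_1\bw_1,\ldots,\beta_m\bw_m]$. The lemma gives
\[
\E_Z\log|\det(GU_Z)| \ge 2\log S_G + m\kappa,
\qquad
S_G:=\sum_{\bj}x_{\bj}|a^G_{\bj}|^{1/2},
\]
provided $S_G>0$, which holds almost surely in $G$ when $S(\bx)>0$. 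We then take $\E_G$ and swap the order of expectations by Fubini, which needs the logarithmic integrability argued in the proof of Lemma~\ref{lem:stable-chaos} together with finiteness of $c_m$. The left side becomes $\E_Z\log\Vol_m(U_Z)+c_m$. To lower bound $\E_G\log S_G$, we apply the Gibbs inequality (Proposition~\ref{prop:Gibbs}) with $p_{\bj}=x_{\bj}$, $\gamma_{\bj}=|a^G_{\bj}|^{1/2}$, and the $G$-independent choice $q_{\bj}=x_{\bj}a_{\bj}^{1/2}/S(\bx)$. Terms with $a_{\bj}=0$ have $q_{\bj}=0$ and are dropped. This yields
\[
\E_G\log S_G \ge \sum_{\bj}q_{\bj}\tfrac12\big(\log a_{\bj}+c_m\big) - D_{KL}(q\|p) = \log S(\bx) + \tfrac12 c_m ,
\]
where the KL term cancels exactly as in the inductive step of Lemma~\ref{lem:stable-chaos}. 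Combining, $\E_Z\log\Vol_m(U_Z) \ge 2\log S(\bx) + m\kappa$, and the $c_m$ terms cancel.

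Finally, $\Vol_m[\bw_1,\ldots,\bw_m]=\Vol_m(U_Z)/\prod_i\beta_i$, and Lemma~\ref{lem:1/2-stable} (third item, with $\sum_j x_{ij}=1$) gives $\E\log\beta_i\le\kappa+\log 2$. Subtracting yields $2\log S(\bx)-m\log 2$, as claimed. I expect the main obstacle to be the technical justification rather than the algebra: finiteness of $c_m$, almost-sure positivity of $S_G$, and joint integrability of $\log|\det(GU_Z)|$ so that Fubini and the conditional use of Lemma~\ref{lem:stable-chaos} are legitimate. I would handle these by reusing the elementary bound $|\log\sum c_j|\le\log|J|+\max|\log c_j|$ from the integrability paragraph, together with the standard fact that $\E|\log|\det G'||<\infty$.
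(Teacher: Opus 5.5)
Your proposal is correct and follows the paper's proof essentially step for step: a Gaussian map $T\in\R^{m\times d}$ turns volumes into determinants up to the additive constant $c_m$, and Lemma~\ref{lem:stable-chaos} is applied to the tensor $\det[T\bv_{1,j_1},\ldots,T\bv_{m,j_m}]$ for fixed $T$. The Gibbs inequality with $q_{\bj}=x_{\bj}a_{\bj}^{1/2}/S(\bx)$ then gives $2\E_T\log S^{(T)}(\bx)\ge 2\log S(\bx)+c_m$, and after a Fubini swap $c_m$ cancels. The differences are only cosmetic: you use a polar rather than a QR decomposition, and you subtract the normalizers $\E\log\beta_i\le\kappa+\log 2$ at the end rather than inside the fixed-$T$ bound. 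Your integrability sketch via $|\log\sum_j c_j|\le\log|J|+\max_j|\log c_j|$ is, if anything, a bit more explicit than the paper's.
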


Assuming Lemma~\ref{lem:stable-vol}, \eqref{eq:lower-S-comparison}
implies
\[
 \E_Z[\log\Vol_m[\bw_1,\ldots,\bw_m]]
 \ge\tfrac12\log\cR-(2+\log2)m,
\]
and Lemma~\ref{lem:lower-rank-output} implies $\Vol_m[\bw_1,\ldots,\bw_m]\le\sqrt{\OPT}\le\sqrt{\cR}$. 
Thus, the same Markov argument as in Theorem~\ref{thm:stable-chaos} gives
\[
 \Pp\left(
   \Vol_m[\bw_1,\ldots,\bw_m]\ge e^{-3m}\sqrt{\cR}
 \right)
 \ge1-\frac{2+\log2}{3}
 =\frac{1-\log2}{3}>0.
\]
Thus an expected constant number of trials suffices. On termination, the output attains a squared volume of at least $e^{-6m}\cdot \OPT$ by Lemma~\ref{lem:lower-rank-output}. This proves Theorem~\ref{thm:lower-rank-sampling}, and therefore the
simple-partition case of Theorem~\ref{thm:main2} after taking square roots, subject only to
Lemma~\ref{lem:stable-vol}.

\subsection{The Gaussian transformation}

We prove Lemma~\ref{lem:stable-vol} in this section by reducing it to Lemma~\ref{lem:stable-chaos} via a Gaussian transformation. 
We emphasize that transformation is only a proof device, not a part of the algorithm.

The reduction is based on the following observation.
We say that a matrix is \emph{standard Gaussian} if all its entries are i.i.d. standard Gaussian. Let $\Vol_m(U)$ be the volume of the $m$ columns of $U$.

\begin{lemma} \label{lem:lower-gaussian} Let $T\in\R^{m\times d}$ and $H\in\R^{m\times m}$ be standard Gaussian. Then $\E[|\log|\det H||]<\infty$. With $c_m:=\E[\log|\det H|]$, every fixed full-column-rank $U\in\R^{d\times m}$ satisfies \begin{equation}\label{eq:lower-gaussian-law} |\det(TU)|\stackrel{\mathrm d}{=} \Vol_m(U)|\det H|, \qquad \E_T[\log|\det(TU)|]=\log\Vol_m(U)+c_m. \end{equation} \end{lemma}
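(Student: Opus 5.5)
We will prove the lemma in three steps: reduce $U$ to a matrix with orthonormal columns, use the rotation invariance of Gaussian matrices, and separately check log-integrability of $|\det H|$.

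\textbf{Reduction to orthonormal columns.} Write $U = QR$ by the QR decomposition (Gram--Schmidt), where $Q\in\R^{d\times m}$ has orthonormal columns and $R\in\R^{m\times m}$ is upper triangular and invertible. Then $U^TU = R^TR$, so $\Vol_m(U)=\sqrt{\det(U^TU)} = |\det R|$. Moreover $TU = (TQ)R$, hence
\[
|\det(TU)| = |\det(TQ)|\cdot|\det R| = |\det(TQ)|\,\Vol_m(U).
\]
It therefore suffices to show that $TQ$ is itself an $m\times m$ standard Gaussian matrix.

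\textbf{Rotation invariance.} Extend $Q$ to an orthogonal matrix $[Q,Q']\in\R^{d\times d}$. Each row $\bt^T$ of $T$ is a standard Gaussian vector in $\R^d$, and $\bt^T[Q,Q']$ is again standard Gaussian, because the standard Gaussian law is invariant under orthogonal maps. Its first $m$ coordinates, $\bt^TQ$, are therefore i.i.d.\ standard Gaussian. The rows of $T$ are independent, so the rows of $TQ$ are independent as well. Hence $TQ\stackrel{\mathrm d}{=}H$, which gives the distributional identity in \eqref{eq:lower-gaussian-law}. Taking logarithms and expectations then yields the second identity, provided $\E[|\log|\det H||]<\infty$.

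\textbf{Log-integrability of $|\det H|$.} This is the main technical point, and we handle it by the same Gram--Schmidt idea applied to $H$ itself. Orthogonalize the columns $\bh_1,\ldots,\bh_m$ of $H$. Then
\[
|\det H| = \prod_{k=1}^m \|\Pi_k \bh_k\|,
\]
where $\Pi_k$ is the orthogonal projection onto the complement of $\operatorname{span}(\bh_1,\ldots,\bh_{k-1})$. Condition on $\bh_1,\ldots,\bh_{k-1}$. Since $\bh_k$ is independent of them and rotation invariant, $\|\Pi_k\bh_k\|^2$ is a $\chi^2$ variable with $m-k+1$ degrees of freedom. This holds for every value of the conditioning, so these norms are in fact independent with $\chi$ distributions; this is the Bartlett decomposition.

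It remains to check $\E[|\log \chi^2_r|]<\infty$ for each $r\ge 1$. The upper tail is harmless because $\log y\le y$ and $\chi^2_r$ has finite mean. The lower tail is controlled because the density of $\chi^2_r$ near $0$ is $O(y^{r/2-1})$, and $|\log y|\, y^{r/2-1}$ is integrable on $(0,1]$ for $r\ge1$; the worst case $r=1$ gives $|\log y|\,y^{-1/2}$, which is integrable.

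Summing over $k$, we obtain
\[
\E[|\log|\det H||]\le \tfrac12\sum_{k=1}^m \E[|\log\chi^2_{m-k+1}|]<\infty,
\]
and the same decomposition expresses $c_m$ explicitly as a sum of expected log-$\chi$ terms. All other steps are routine linear algebra.
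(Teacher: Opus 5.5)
Your proof is correct and follows the paper's argument: QR-factor $U=QR$, use orthogonal invariance to see that $TQ$ is a standard $m\times m$ Gaussian matrix, and note that $|\det R|=\Vol_m(U)$. The only difference is that the paper cites the literature for $\E[|\log|\det H||]<\infty$, whereas you prove it directly via the Bartlett decomposition $|\det H|=\prod_k\|\Pi_k \mathbf{h}_k\|$ into $\chi_{m-k+1}$ factors; this is a valid self-contained argument and also gives $c_m$ explicitly.
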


\begin{proof}
The finiteness of $\E[|\log|\det H||]$ (and therefore $c_m$) follows from standard
log-determinant integrability of Gaussian matrices. This is given in, e.g., \cite[Eq.~(3.3)]{DE13} and \cite[Eq.~(40)]{CLZ}.

To prove \eqref{eq:lower-gaussian-law}, we apply the QR-factorization of rectangular matrices to obtain
$U=QR$, where $Q^TQ=I_m$ and $R$ is invertible.
Then $TQ$ is a standard $m\times m$ Gaussian matrix and
\[
  |\det R|=\sqrt{\det(U^TU)}=\Vol_m(U)
  \implies
  |\det(TU)|\stackrel{\mathrm d}{=}
  \Vol_m(U)|\det H|.
\]
Taking logarithms and expectations now yields
\[
  \E_T[\log|\det(TU)|]
  =\log\Vol_m(U)+c_m.\qedhere
\]
\end{proof}



This lemma relates the $m$-dimensional volume in $\R^d$ to the volume in $\R^m$ after a Gaussian transformation. The latter is then a simple determinant, so our $m=d$ case results apply. 

\begin{proof}[Proof of Lemma~\ref{lem:stable-vol}]
Sample standard Gaussian $T\in\R^{m\times d}$, independently of
the $Z_{ij}$. Define
\[
 a_{\bj}^{(T)}:= \det[T\bv_{1,j_1},\ldots,T\bv_{m,j_m}]\quad\text{and}\quad 
 S^{(T)}(\bx):=\sum_{\bj}x_{\bj}|a_{\bj}^{(T)}|^{1/2}.
\]
The transformed problem has $m$ parts
in dimension $m$. By multilinearity,
\[
 \det[T\bw_1,\ldots,T\bw_m]
 =
 \frac{\sum_{\bj}a_{\bj}^{(T)}x_{\bj}^2Z_{\bj}}
      {\prod_{i=1}^m\beta_i}.
\]
For any $T$, Lemma~\ref{lem:1/2-stable} and Lemma~\ref{lem:stable-chaos} with scalar coefficients $\bj\mapsto a_{\bj}^{(T)}$ implies
\begin{equation}
 \E_Z[\log|\det[T\bw_1,\ldots,T\bw_m]|]
 \ge
 2\log S^{(T)}(\bx)+m\kappa-m(\kappa+\log2)=2\log S^{(T)}(\bx)-m\log2.
 \label{eq:lower-projected-balanced}
\end{equation}

We now relate the transformed quantities with $T$ to those without $T$.
For each $\bj$ with $a_{\bj}>0$, Lemma~\ref{lem:lower-gaussian}
immediately gives $ \E_T[\log|a_{\bj}^{(T)}|]=\log a_{\bj}+c_m$. 
To obtain the corresponding comparison for $S^{(T)}(\bx)$, define a probability distribution over indices $\bj$ via
\[
 q_{\bj}:=\frac{x_{\bj}a_{\bj}^{1/2}}{S(\bx)}.
\]
Restricting the following sums to
$x_{\bj}a_{\bj}>0$, Proposition~\ref{prop:Gibbs}
gives
\[
 2\log S^{(T)}(\bx)
 =2\log S(\bx)
   +2\log\left(
     \sum_{\bj}q_{\bj}
     \left|\frac{a_{\bj}^{(T)}}{a_{\bj}}\right|^{1/2}
   \right)\ge
 2\log S(\bx)
 +\sum_{\bj}q_{\bj}\log\frac{|a_{\bj}^{(T)}|}{a_{\bj}}.
\]
Taking expectation over $T$, each logarithm in the last sum
has expectation $c_m$. Hence
\begin{equation}\label{eq:lower-projected-S}
 2\E_T[\log S^{(T)}(\bx)] \ge 2\log S(\bx)+c_m.
\end{equation}
Finally, apply Lemma~\ref{lem:lower-gaussian} conditionally on
$\bw_1,\ldots,\bw_m$, and combine
\eqref{eq:lower-projected-balanced} and \eqref{eq:lower-projected-S}.
We remark that whenever $S(\bx) \neq 0$, 
$\rank [\bw_1,\ldots, \bw_m] = m$ almost surely, because there is 
a non-singular $m \times m$ minor and the continuously random choice
of $\bw_1,\ldots,\bw_m$ is almost surely linearly independent.
\begin{align*}
 \E_Z[\log\Vol_m[\bw_1,\ldots,\bw_m]]+c_m
 &=\E_T\E_Z[\log|\det[T\bw_1,\ldots,T\bw_m]|]\\
 &\ge2\E_T[\log S^{(T)}(\bx)] - m\log2\\
 &\ge2\log S(\bx) +c_m - m\log2.
\end{align*}
We were able to exchange $\E_T$ and $\E_Z$ because $\log |\det [T\bw_1,\ldots,T\bw_m]|$
is integrable by similar arguments as before.
The lemma follows by canceling $c_m$ on both sides.
\end{proof}

\section{Reduction from general partition matroids to simple partition matroids}

Finally, we come to the setting of more general partition matroids. Here, we have disjoint sets $J_1,\ldots,J_r$, capacity parameters $k_i \geq 1$ and vectors $\bv_{ij} \in \R^d$ for $1 \leq i \leq r$, $j \in J_i$. The goal is to choose $K_i \subseteq J_i$ such that $|K_i| \leq k_i$, in order to maximize the respective determinant or volume. We denote by $m = \sum_{i=1}^{r} k_i$ the rank of the respective partition matroid. We show how to reduce this problem to a simple partition matroid constraint. First let us consider a case where the reduction is very simple.

\subsection{The $m \leq d$ case}

Here, we aim to maximize the $m$-dimensional volume formed by the selected vectors:
$$\Vol_m [\bv_{ij}: i \in [r], j \in K_i].$$
In this case, we have a simple lemma.

\begin{lemma}
If there is an $\alpha$-approximation determinant maximization subject to a simple partition matroid in the $m \leq d$ in the case, then there is also an $\alpha$-approximation in the same regime subject to a general partition matroid.
\end{lemma}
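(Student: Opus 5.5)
The reduction replaces each part $J_i$ with capacity $k_i$ by $k_i$ identical copies $J_i^{(1)},\ldots,J_i^{(k_i)}$. Each copy has the same index set $J_i$ and the same vectors, so $\bv^{(\ell)}_{ij} = \bv_{ij}$. This gives a simple partition matroid with $m = \sum_i k_i$ parts in the same dimension $d$, so we stay in the regime $m \leq d$. We run the assumed $\alpha$-approximation on this instance and get a choice $j(i,\ell) \in J_i$ for every $i \in [r]$, $\ell \in [k_i]$. We output $K_i = \{ j(i,\ell) : \ell \in [k_i]\}$. This set has $|K_i| \leq k_i$, so it is independent in the original matroid. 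If the approximation is meant in terms of bases, we pad $K_i$ arbitrarily to size $k_i$; since the volume is nonzero, the chosen elements are distinct anyway, and this case needs no padding.

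The correctness argument has two directions.

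\emph{Optimum is preserved.} Every feasible solution $(K_1,\ldots,K_r)$ of the original problem with $|K_i| = k_i$ is realized in the simple instance by assigning the $k_i$ elements of $K_i$ to the $k_i$ copies in any order. The multiset of selected vectors is then identical, so $\OPT_{\mathrm{simple}} \geq \OPT_{\mathrm{general}}$.

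\emph{Volume is not lost when mapping back.} Take a simple-instance solution with positive volume. If two copies of the same part picked the same $j$, the multiset would contain a repeated vector and $\Vol_m = 0$. So any solution of positive value picks distinct elements, and $K_i$ has exactly $k_i$ elements. The selected vectors then coincide as a collection with $\{\bv_{ij} : j \in K_i\}$, so the volumes are equal. This also shows $\OPT_{\mathrm{simple}} \leq \OPT_{\mathrm{general}}$, so the two optima coincide. The returned solution therefore has value at least $\alpha\,\OPT_{\mathrm{simple}} = \alpha\,\OPT_{\mathrm{general}}$.

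The only delicate point is the degenerate case where the algorithm returns a zero-volume solution with repeated choices. This can only happen if $\alpha\,\OPT = 0$, i.e. $\OPT = 0$. In that case any independent set, for example the one obtained by deduplicating, is trivially an $\alpha$-approximation. Equal vectors in different parts (not the same element) pose no issue, because volume is computed on the multiset of vectors in both instances.

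Finally, the blow-up in instance size is polynomial, since $m \leq d$. So the reduction preserves polynomial running time, and it also preserves the algorithm's randomized guarantee (expected polynomial time or with high probability).
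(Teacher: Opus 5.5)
Your proposal is correct and is essentially the paper's proof: duplicate each part $J_i$ into $k_i$ identical copies, note that selecting the same element in two copies gives zero $m$-dimensional volume, and conclude that nonzero-value solutions of the two instances correspond with equal value, so the optima coincide and solutions transfer back. Your treatment of the $\OPT=0$ case and of the polynomial instance size adds detail the paper leaves implicit; it does not change the argument.
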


\begin{proof}
Consider a partition matroid with parts $J_i$ and capacity parameters $k_i$.
For each set $J_i$, create $k_i$ copies $J_i^{(1)}, \ldots, J_i^{(k_i)}$, and for each $\bv_{ij}, j \in J_i$, create respective copies $\bv_{i,s,j} = \bv_{ij}$ where $1 \leq s \leq k_i$. We consider a simple partition matroid on these new index sets: we want to select one index from each $J_i^{(s)}$; $m = \sum_{i=1}^{r} k_i$ indices in total. 
Any solution selecting two copies of the same index $j$ from $J_i^{(s)}$ and $J_i^{(s')}$ has value $0$, by the definition of $m$-dimensional volume. Hence, solutions of nonzero value correspond to choosing $k_i$ distinct indices from $J_i$, as in the original partition matroid. The optimum is the same, and a solution found by an algorithm for the simple partition matroid can be easily converted to a solution of the same value for the original problem.
\end{proof}

\subsection{The $m > d$ case}

The reduction in this case is a bit more involved. Here we optimize the following objective:
$$ \det \left( \sum_{i=1}^{r} \sum_{j \in K_i} \bv_{ij} \bv_{ij}^T \right).$$
We prove the following general reduction.

\begin{theorem}
If there is an $\alpha$-approximation algorithm for determinant maximization subject to a simple partition matroid constraint, then there is an $\alpha e^{O(d)}$-approximation algorithm for determinant maximization subject to a general partition matroid.
\end{theorem}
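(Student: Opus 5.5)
The plan is to use the same copying construction as in the $m \leq d$ case: replace each part $J_i$ by $k_i$ copies $J_i^{(1)},\ldots,J_i^{(k_i)}$, each containing copies $\bv_{i,s,j} = \bv_{ij}$. The obstacle is that for $m > d$ the objective $\det(\sum \bv\bv^T)$ does not vanish when the same vector is chosen twice. A simple-matroid solution may therefore select some index $j \in J_i$ several times, and the value of such a multiset can exceed the true optimum. Any duplicated selection can be made distinct by substituting other unused elements of $J_i$ (we may assume $|J_i| \geq k_i$). This substitution cannot decrease the determinant, since adding PSD rank-one terms is monotone. So the only real issue is comparing the optimum of the relaxed ``multiset'' problem with the true $\OPT$. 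We therefore need to show that $\OPT_{\mathrm{multi}} \leq e^{O(d)} \OPT$.

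The key step is to bound the effect of repetitions by Cauchy--Binet. Write a multiset solution as $\sum_{i}\sum_{j} c_{ij}\bv_{ij}\bv_{ij}^T$ with multiplicities $c_{ij}\in\{0,\ldots,k_i\}$ and $\sum_j c_{ij} = k_i$. Its determinant is $\sum_{|T|=d}$ of squared $d\times d$ minors over the multiset of columns. Each minor is nonzero only if it uses $d$ distinct vectors. Grouping terms, the value equals $\sum_{S} \big(\prod_{(i,j)\in S} c_{ij}\big)\det^2[\bv_{ij}:(i,j)\in S]$, summed over $d$-sets $S$ of distinct vectors.

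Next we compare this with a fractional point of the general-partition-matroid polytope. Set $x_{ij} = c_{ij}/k_i$ when $k_i$ is large, or more simply cap each multiplicity at $1$. The cleanest route is to compare with random distinct selections. Choose $K_i$ as a uniformly random $k_i$-subset of the support, or better, of the weighted multiset with duplicates removed and the free slots filled by other elements. Then show that each $d$-set $S$ survives with probability at least $\prod_{(i,j)\in S}(c_{ij}/k_i)$ times a correction factor. Since $\prod c_{ij} \leq \prod_i k_i^{|S\cap J_i|}$, the loss per element of $S$ is at most roughly $c_{ij}$ versus the survival probability $\approx c_{ij}/k_i\cdot(\text{const})$.

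This naive comparison loses $\prod k_i^{|S\cap J_i|}$, which is not $e^{O(d)}$. So I would instead use an averaging argument over matchings of copies. For a $d$-set $S$ with $s_i = |S \cap J_i|$ distinct elements in part $i$, the number of ways to realize $S$ inside a distinct selection $K_i$ of size $k_i$ is compared with $\prod c_{ij}$ via $\prod_{j} c_{ij} \leq (k_i/s_i)^{s_i}$ (AM--GM), alongside $\binom{k_i}{s_i} \geq (k_i/s_i)^{s_i}/\ldots$. The factors $s_i^{s_i}/s_i! \leq e^{s_i}$ then multiply to at most $e^{d}$. Concretely, the plan is to pick $K_i$ uniformly among $k_i$-subsets of $J_i$ containing the support of $c_{i\cdot}$, or to pad appropriately. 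I would then bound $\E[\det(\sum_{K}\bv\bv^T)] \geq e^{-O(d)}\det(\text{multiset})$ term by term in the Cauchy--Binet expansion, using $\binom{k_i}{s_i}^{-1}$-type survival probabilities against the AM--GM bound on $\prod c_{ij}$.

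The main obstacle is this combinatorial comparison in the last step: making the survival probability of every $d$-set $S$ comparable, up to $e^{O(d)}$ overall, to its multiplicity weight $\prod c_{ij}$, uniformly over all $S$, including when supports are small relative to $k_i$. Once it is established, the reduction is as follows. Run the $\alpha$-approximation on the copied simple instance to get a multiset of value at least $\OPT_{\mathrm{multi}}/\alpha \geq \OPT/\alpha$. Convert it to a distinct selection, either by derandomizing the random choice via conditional expectations on the explicit Cauchy--Binet form or by repeated sampling, which makes the algorithm Monte Carlo. This loses only $e^{O(d)}$ and gives the claimed $\alpha e^{O(d)}$ approximation.
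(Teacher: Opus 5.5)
Your reduction has a genuine gap. With unweighted copies $\bv_{i,s,j}=\bv_{ij}$, the gain from repeating a vector is not bounded by $e^{O(d)}$. So the bound $\OPT_{\mathrm{multi}}\le e^{O(d)}\OPT$ is false. So is the per-solution conversion you actually need, namely that every returned multiset can be turned into a distinct selection of value at least $e^{-O(d)}$ times its own value. Take $r=1$, $k_1=m=\ell d$, and let $J_1$ consist of $\be_1,\ldots,\be_d$ together with zero vectors. Every set of $m$ distinct elements has determinant at most $1$, so $\OPT=1$. The copied simple instance, however, lets each $\be_t$ be chosen $\ell$ times, giving $\det(\ell I)=(m/d)^d$, e.g.\ $d^d$ for $m=d^2$.

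Your own Cauchy--Binet identity shows why. The multiset value carries the weight $\prod_{(i,j)\in S}c_{ij}$ on each $d$-set. No counting with $\binom{k_i}{s_i}$ or AM--GM can recover that weight, because a distinct selection reproduces it only if $J_i$ happens to contain many distinct near-parallel vectors, which nothing guarantees. Your first paragraph's claim that replacing duplicates by unused elements ``cannot decrease the determinant'' is also wrong, since the replacement removes a copy of $\bv_{ij}\bv_{ij}^T$.

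The failure is in the algorithm, not just the analysis. Add to the example $m-d$ further distinct elements equal to the $\be_t$'s ($\ell-1$ per direction) and very many zero vectors. Now $\OPT=(m/d)^d$, and an exact oracle may legitimately return each $\be_t$ with multiplicity $\ell$. Deduplicating and padding with a random subset containing the support then returns value $1$ with high probability.

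The missing idea, which is what the paper does, is to down-weight the copies by a summable sequence and randomize which copy gets which weight. The paper sets $\bv_{i,s,j}=\bv_{ij}/\pi_{ij}(s)$, with independent uniformly random permutations $\pi_{ij}$ of $[k_i]$. Then all copies of $\bv_{ij}$ together satisfy $\sum_{s}\bv_{i,s,j}\bv_{i,s,j}^T\preceq\frac{\pi^2}{6}\bv_{ij}\bv_{ij}^T$. Hence \emph{every} solution of the new instance maps back with loss at most $(\pi^2/6)^d$, by keeping each $(i,j)$ that has at least one selected copy. This is exactly the per-solution conversion your plan lacks.

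Conversely, a genuine basis is embedded by giving each element the free slot with the smallest $\pi_{ij}(s)$. The random permutations give $\E[\log\pi_{ij}(s_{ij})]\le\frac{1}{k_i}\log\frac{k_i^{k_i}}{k_i!}\le 1$. So each $d$-subset in the Cauchy--Binet expansion loses at most $2d$ in expected logarithm. Jensen and Markov then give a new solution of value at least $e^{-3d}\OPT$ with constant probability. Combined with the $\alpha$-approximation, this yields the overall factor $\alpha e^{3d}(\pi^2/6)^d$.
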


\begin{proof}
We design a randomized reduction. Given sets $J_i$ with capacity parameters $k_i \geq 1$,
we create again copies $J_i^{(s)} = \{ (i,s,j): j \in J_i\}$ for $1 \leq s \leq k_i$ and consider a simple partition matroid on the sets $J_i^{(s)}$. The duplication of vectors will involve some scaling. For each vector $\bv_{ij}, j \in J_i$, we create vectors 
$$ \bv_{i,s,j} = \frac{1}{\pi_{ij}(s)} \bv_{i,j} $$
where $1 \leq s \leq k_i$ and $\pi_{ij}: [k_i] \to [k_i]$ is an independently random permutation for each pair $(i,j)$. 
In other words, we scale the vector copies by reciprocals of randomly shuffled integers $1,2,\ldots,k_i$. We have two claims.

\paragraph{Claim 1.}
For any feasible solution $I'$ in the new instance, there is a solution $I$ in the original instance such that 
$$ \det \left( \sum_{(i,j) \in I} \bv_{ij} \bv_{ij}^T \right) \geq \left( \frac{6}{\pi^2} \right)^d \det \left( \sum_{(i,s,j) \in I'} \bv_{i,s,j} \bv_{i,s,j}^T \right). $$

\paragraph{Proof:}
Observe that copies of the same vector $\bv_{ij}$ can contribute at most
$$ \sum_{s=1}^{k_i} \bv_{i,s,j} \bv_{i,s,j}^T = \sum_{s=1}^{k_i} \frac{1}{s^2} \bv_{ij} \bv_{ij}^T
\leq \frac{\pi^2}{6} \bv_{ij} \bv_{ij}^T $$
(with the inequality denoting PSD ordering). 
By Cauchy-Binet, this factor be amplified by a power of $d$ in the global determinant; 
if we select $(i,j)$ in $I$ whenever at least one copy $(i,s,j)$ is selected in $I'$, then
$$ \det \left( \sum_{(i,j) \in I} \bv_{ij} \bv_{ij}^T \right)
 \geq \left( \frac{6}{\pi^2} \right)^d \det \left( \sum_{(i,s,j) \in I'} \bv_{i,s,j} \bv_{i,s,j}^T \right). $$

\paragraph{Claim 2.}
For any solution $I$ of the original instance, there is a solution $I'$ of the new instance (depending on the random permutations $\pi_{ij}$) such that
$$ \E_\pi\left[ \log \det \left( \sum_{(i,s,j) \in I'} \bv_{i,s,j} \bv_{i,s,j}^T \right) \right] \geq \log \det \left( \sum_{(i,j) \in I} \bv_{ij} \bv_{ij}^T \right) -  2d.$$

\paragraph{Proof:}
Let $I$ be a solution of the original instance (w.l.o.g. a basis of the respective matroid). Given the random permutations $\pi_{ij}$, we define a solution $I'$ of the new instance as follows: Processing the pairs $(i,j) \in I$ in a random order, we select the triple $(i,s,j)$ with the minimum value of $\pi_{ij}(s)$ such that we haven't selected any $(i,s,\cdot)$ triple yet. Since there are at most $k_i$ pairs $(i,\cdot)$ in $I$, there is always some available value of $s \in [k_i]$; denote this value $s_{ij}$. This way, we construct an independent set in $\cI'$. 

What is the expected value of $\pi_{ij}(s_{ij})$? Assuming that this is the $\ell$-th element for a given $i$, the available slots $(i,s)$ form a uniformly random $(k_i-\ell+1)$-subset of the $k_i$ indices $1 \leq s \leq k_i$ (since there is no correlation between the permutation $\pi_{ij}$ and the slots occupied so far). The expected minimum among $k_i-\ell+1$ random elements in $[k_i]$ is $\E_\pi[\pi_{ij}(s_{ij})] = \frac{k_i+1}{k_i-\ell+2}$. Therefore, by Jensen's inequality,
$$ \E\pi[ \log \pi_{ij}(s_{ij})] \leq \log \frac{k_i+1}{k_i-\ell+2} \leq \log \frac{k_i}{k_i-\ell+1} $$
for the $\ell$-th arriving element in $J_i$. Given the random ordering in which we process the elements, $\ell$ is uniformly random in $[k_i]$. Therefore, the expected contribution for $(i,j)$ is
$$ \E[ \log \pi_{ij}(s_{ij})] \leq \frac{1}{k_i} \sum_{\ell=1}^{k_i} \log \frac{k_i}{k_i-\ell+1} = \frac{1}{k_i} \log \frac{k_i^{k_i}}{k_i!} \leq 1.$$
The assigned vector is $\bv_{i,s_{ij},j} = \frac{\bv_{i,j}}{\pi_{ij}(s_{ij})}$,
and each rank 1 matrix $\bv_{i,s_{ij},j} \bv_{i,s_{ij},j}^T$ is scaled by $\lambda_{ij} = \frac{1}{(\pi_{ij}(s_{ij}))^2}$.
By Cauchy-Binet, the expected value of the transformed solution $\sum_{(i,s,j) \in I'} \bv_{i,s,j} \bv_{i,s,j}^T$ can be decomposed into contributions over subsets of size $d$, and the scaling factor for each such subset $T$ satisfies
$ \E[\log \prod_{(i,j) \in T} \lambda_{ij}] = \sum_{(i,j) \in T} \E[\log \lambda_{ij}] \geq -2d.$
Hence, for every $d$-subset $T$ with $\det \left( \sum_{(i,j) \in T} \bv_{ij} \bv_{ij}^T \right)>0$, we have
$$ \E \left[ \log \frac{\det \left( \sum_{(i,j) \in T} \bv_{i,s_{ij},j} \bv_{i,s_{ij},j}^T \right)}{\det \left( \sum_{(i,j) \in T} \bv_{ij} \bv_{ij}^T \right)} \right] \geq -  2d.$$
Adding up over all $T, |T|=d$, with probability coefficients $p_T = \frac{\det \left( \sum_{(i,j) \in T} \bv_{ij} \bv_{ij}^T \right)}{ \det \left( \sum_{(i,j) \in I} \bv_{ij} \bv_{ij}^T \right)}$, and using
again Jensen's inequality,
$$ \E\left[ \log \frac{\det \left( \sum_{(i,s,j) \in I'} \bv_{i,s,j} \bv_{i,s,j}^T \right)}{\det \left( \sum_{(i,j) \in I} \bv_{ij} \bv_{ij}^T \right)} \right]
= \E\left[ \log \frac{ \sum_{|T|=d} \det \left( \sum_{(i,j) \in T} \lambda_{ij}  \bv_{i,j} \bv_{i,j}^T \right)}{\det \left( \sum_{(i,j) \in I} \bv_{ij} \bv_{ij}^T \right)} \right] $$
$$ = \E\left[ \log \sum_{|T|=d} p_T \prod_{(i,j) \in T} \lambda_{ij} \right]
\geq \sum_{|T|=d} p_T \E\left[ \log \prod_{(i,j) \in T} \lambda_{ij} \right] \geq -2d.$$
This proves Claim 2.

The ratio of the new objective to the old one is at most $1$, because each rank one matrix is scaled by a factor $\leq 1$. Hence the logarithm inside the expectation is at most $0$.
By Markov's inequality, we find a transformation such that the right-hand side is at least $-3d$ with constant probability. Hence, given an optimal solution for the original instance $\OPT$, there is a solution for the new instance of value at least $e^{-3d} \OPT$. Our $\alpha$-approximation algorithm for simple partition matroids can extract a solution of value at least $\frac{1}{\alpha} e^{-3d} \OPT$, and finally we convert it to a solution of the original instance of value at least $\frac{1}{\alpha} e^{-3d} (6/\pi^2)^d \OPT$.

\end{proof}

\subsection*{Funding}
YS is funded by the NSF Graduate Research Fellowship Program and the Stanford Graduate Fellowship.
\bibliographystyle{plain}
\bibliography{ref.bib}

\appendix

\section{Properties of $1/2$-stable random variables}
\label{sec:stable-facts}

Here we prove Lemma~\ref{lem:1/2-stable}.

\begin{proof}
We have $Z = \frac{1}{2 X_1^2} - \frac{1}{2 X_2^2} = \frac{(X_2+X_1)(X_2 - X_1)}{2 X_1^2 X_2^2}$ where $X_1, X_2$ are independent standard Gaussians.
From here,
\[
 \log|Z|
 =
 \log|X_2+X_1|+\log|X_2-X_1|
 -\log2-2\log|X_1|-2\log|X_2|.
\]
Recall that nondegenerate Gaussian variables have absolutely integrable logarithms,
and so $\E[|\log|Z||]<\infty$. Moreover, $X_2\pm X_1$ each has
the distribution of $\sqrt2X_1$, so
\[
 \kappa:=\E[\log|Z|]
 =-2\E[\log|X_1|]
\simeq1.27.
\]
This proves the second bullet point. Now, let $P=1/(2X^2)$ for a standard Gaussian $X$. Its Laplace transform is
\[
 \E[e^{-sP}]
 =
 \sqrt{\frac2\pi}
 \int_0^\infty
 \exp\left(-\frac{t^2+s/t^2}{2}\right)\,dt
 =
 e^{-\sqrt s},
\]
via standard Gaussian integral computations.
By independence and uniqueness of Laplace transforms,
\[
 \sum_i b_iP_i
 \stackrel{\mathrm d}{=}
 \left(\sum_i\sqrt{b_i}\right)^2P,
\]
for any $b_i\ge 0$ and independent copies $P_i$  of $P$.
Sample $Z_i=P_i-P_i'$ using mutually independent copies of $P$.
Symmetry allows us to absorb the signs of the coefficients, giving
\[
 \sum_i c_iZ_i
 \stackrel{\mathrm d}{=}
 \sum_i|c_i|P_i-\sum_i|c_i|P_i'
 \stackrel{\mathrm d}{=}
 c(P-P')
 \stackrel{\mathrm d}{=}cZ.
\]
This proves the first bullet point.
For the third bullet point, pointwise,
\[
 \frac1c\left|\sum_i c_iZ_i\right|
 \le Z'
 \le
 \frac1c\sum_i|c_i|(P_i+P_i').
\]
The left-hand side has distribution $|Z|$, while 
the right-hand side has distribution $ P+P'\stackrel{\mathrm d}{=}4P$.
Both sides have absolutely integrable logarithms, since
$\E[|\log|Z||]<\infty$ and $\log P=-\log2-2\log|X|$.
Therefore, $\E[|\log Z'|]<\infty$. Finally, we bound
\[
 \E[\log Z']
 \le
 \E[\log(4P)]
 =
 \log4-\log2-2\E[\log|X|]
 =
 \kappa+\log2.\qedhere
\]
\end{proof}

\section{Integrality gap for $m > d$}
\label{sec:integrality-gap}

Here we justify the claim that the integrality gap of the saddle-point relaxation for simple partition matroids is at most $e^d$, even for $m > d$. This follows from known work, but it is not explicitly stated to our knowledge (a bound of $e^{O(d)}$ is stated in \cite{MNST20}).

Let us write the relaxation as follows:
\begin{eqnarray*}
\cR  = & \max R(\bx): \\
\forall i \in [m]; & \sum_{j \in J_i} x_{ij} = 1, \\
\forall i \in[m], j \in J_i; & x_{ij} \geq 0, \\
\end{eqnarray*}
where
\begin{eqnarray*}
R(\bx) = \inf_\bz & \det \left (\sum_{i=1}^{m} \sum_{j \in J_i} e^{z_i} x_{ij} \bv_{ij} \bv_{ij}^T \right) \\
\forall I \subseteq [m], |I|=d; & \sum_{i \in I} z_i \geq 0.
\end{eqnarray*}
We also define:
$$ D(\bx) = \sum_{j_1 \in J_1,\ldots, j_m \in J_m} \left( \prod_{i=1}^{m} x_{i,j_i} \right) \det \left( \sum_{i=1}^{m} \bv_{i,j_i} \bv_{i,j_i}^T \right).$$
The (discrete) optimum value is
\[
 \OPT=\max_{\substack{j_i\in J_i \\ \forall i \in [m]}}
 \det\left(\sum_{i=1}^m \bv_{i j_i}\bv_{i j_i}^{\mathsf T}\right).
\]

\paragraph{Why this is a relaxation.}
Consider a feasible solution, with $\bv_{ij_i}$ being the vector selected in part $i$.
Cauchy--Binet gives
\[
 \det\left(\sum_{i=1}^{m} e^{z_i} \bv_{ij_i} \bv_{ij_i}^T \right)
 =\sum_{\substack{T\subseteq[m]\\|T|=d}}
 e^{z(T)} (\det [\bv_{ij_i}]_{i\in T})^2.
\]
Every feasible $\bz$ satisfies $e^{z(T)} \ge1$, and $\bz=\bzero$ is feasible.
Thus $R(\bx)=\det \left(\sum_{i=1}^{m} \bv_{ij_i} \bv_{ij_i}^T \right)$ for every integral solution $\bx$, proving $\cR \ge \OPT$.

The integrality gap result is as follows.

\begin{theorem}\label{thm:integrality-gap}
Consider a simple partition matroid constraint with $m$ parts, vectors in $\R^d$, and let $k=m-d>0$. Define
\[
 \gamma_{m,d}=\frac{m!}{m^m}\frac{k^k}{k!},
\]
with $0^0$ interpreted as $1$. Assume that $D(\bx) > 0$. Then
\begin{equation}\label{eq:rounding-guarantee}
D(\bx) \ge \gamma_{m,d} R(\bx)\ge e^{-d} R(\bx).
\end{equation}
In particular, for a near-optimal fractional solution $(\bx,\bz)$ such that $R(\bx) \geq (1-\epsilon) \cR$, for sufficiently small $\epsilon>0$,  we have  $D(\bx) \geq e^{-d} \cR$.
\end{theorem}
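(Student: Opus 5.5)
We will prove $D(\bx) \ge \gamma_{m,d} R(\bx)$ via the stable-polynomial (Gurvits-type) approach of \cite{NS16,AGSS17}, applied to a polynomial whose coefficients are exactly the terms of $D(\bx)$. Introduce variables $t_1,\ldots,t_m$ and define
$$ p(\bt) = \det\Big(\sum_{i=1}^m \sum_{j\in J_i} t_i x_{ij} \bv_{ij}\bv_{ij}^T\Big). $$
By Cauchy--Binet, $p(\bt)=\sum_{|T|=d} \bt^T \, c_T$, where
$$ c_T=\sum_{j_i\in J_i, i\in T}\prod_{i\in T}x_{ij_i}\det\Big(\sum_{i\in T}\bv_{ij_i}\bv_{ij_i}^T\Big) = D_T(\bx). $$
Thus $p$ is a multiaffine, homogeneous, real stable polynomial of degree $d$ with nonnegative coefficients, and $\sum_T c_T = D(\bx)$. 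On the other side, $R(\bx)=\inf\{p(e^{\bz}) : \sum_{i\in I} z_i\ge 0 \ \forall |I|=d\}$.

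Next we pass to a degree-$m$ polynomial so that the constraint becomes a product constraint. Set
$$ q(\bt,s) = p(\bt)\,(t_1+\cdots+t_m)^{k} \quad\text{or}\quad q(\bt)=p(\bt)\,e_{k}(\bt), $$
where $e_k$ is the $k$-th elementary symmetric polynomial. The natural choice is the multiaffine one, $q=p\cdot e_k$ symmetrized to multiaffine form, i.e.\ $q(\bt)=\sum_{|T|=d}c_T\,\bt^{[m]}$ summed over complements. The cleanest realization is to take $q(\bt)=\partial$-free: $q(\bt)=\sum_T c_T \bt^T \prod_{i\notin T}t_i$ would just be $D(\bx)\prod t_i$, which loses the structure. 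Instead I would work with $q(\bt)=p(\bt)(\sum_i t_i)^k$, which is real stable, homogeneous of degree $m$, and has coefficient of $t_1\cdots t_m$ equal to $k!\sum_T c_T = k!\,D(\bx)$.

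The key step is to apply Gurvits' capacity inequality to $q$. The capacity is $\mathrm{cap}(q)=\inf_{\bt>0} q(\bt)/\prod t_i$, and Gurvits' bound for stable polynomials of degree $m$ gives
$$ \partial_{t_1}\cdots\partial_{t_m} q \ \ge \ \prod_i \frac{(\deg_i)^{\ldots}}{} \cdots \ \ge \frac{m!}{m^m}\,\mathrm{cap}(q). $$
This yields $k!\,D(\bx)\ge \frac{m!}{m^m}\mathrm{cap}(q)$. It then remains to lower-bound $\mathrm{cap}(q)\ge k^k R(\bx)$. Given $\bt=e^{\bz}$ with $\prod t_i=1$ after scaling, let $z_{(1)}\le\cdots$ be sorted. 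Shifting $\bz$ by a constant $c$ (homogeneity: $q(e^{\bz+c})=e^{mc}q(e^{\bz})$), we choose $c$ so that the sum of the $d$ smallest coordinates of $\bz+c$ is $0$, making $\bz+c$ feasible for $R$. Then $p(e^{\bz+c})\ge R(\bx)$. Moreover, the remaining $k$ coordinates are each at least the average of the smallest $d$, hence $\ge 0$ after the shift, and by AM--GM
$$ (\textstyle\sum_i e^{z_i+c})^k \ge k^k \exp\big(\sum_{\text{top }k}(z_i+c)\big). $$
Combining, $q(e^{\bz+c})/e^{\sum(z_i+c)}\ge k^k R(\bx)\cdot e^{-\sum_{\text{bottom }d}(z_i+c)}=k^k R(\bx)$. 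This gives $D(\bx)\ge \frac{m!}{m^m}\frac{k^k}{k!}R(\bx)$.

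Finally, $\gamma_{m,d}\ge e^{-d}$ follows from Stirling-type bounds: $\log\gamma = \log\frac{m!}{m^m}-\log\frac{k!}{k^k}$, and $n\mapsto \log(n!/n^n)+n$ is increasing, so $\log\gamma\ge -(m-k)=-d$. The near-optimal statement follows by continuity in $\epsilon$. The main obstacle is getting the right form of Gurvits' inequality for a degree-$m$ polynomial where each variable has degree up to $k+1$ rather than $1$; the sharp version uses per-variable degree factors $((\deg_i-1)/\deg_i)^{\deg_i-1}$, and I would first check whether the plain $m!/m^m$ bound, as in \cite{NS16,AGSS17}, suffices—which the calculation above indicates it does.
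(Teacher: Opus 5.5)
Your argument is correct and follows the paper's route. You use the same polynomial $p(\bt)=\det(\sum_i t_iA_i)$, identify its multilinear coefficients $c_T$ with the terms of $D(\bx)$, and identify the constrained capacity with $R(\bx)$. The one substantive difference is that the paper cites the generalized capacity inequality of \cite{AGSS17} (Proposition~\ref{prop:degree-d}) as a black box, whereas you derive it:
\begin{itemize}
\item You apply Gurvits' plain $m!/m^m$ bound to $q=p\cdot(\sum_i t_i)^k$, using $\partial_{t_1}\cdots\partial_{t_m}q=k!\sum_T c_T$.
\item The ratio $q(e^{\bz})/e^{\sum_i z_i}$ is shift-invariant, so you shift $\bz$ until its $d$ smallest coordinates sum to $0$.
\item AM--GM on the top $k$ coordinates then gives $\mathrm{cap}(q)\ge k^kR(\bx)$.
\end{itemize}
This is a sound, self-contained proof of that proposition.

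Your monotonicity argument for $f(n)=\log(n!/n^n)+n$ is an equivalent alternative to the paper's integral comparison. Note that $f(n+1)-f(n)=1-n\log(1+1/n)>0$ gives the strict inequality $\gamma_{m,d}>e^{-d}$. Strictness is what you need to absorb the $(1-\epsilon)$ factor in the near-optimal statement, rather than an appeal to ``continuity''.

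One claim is false and should be corrected: $p$ is not multiaffine, and $p(\bt)\neq\sum_{|T|=d}c_T\bt^T$ in general. Each $A_i=\sum_j x_{ij}\bv_{ij}\bv_{ij}^T$ can have rank larger than $1$, so $p$ contains monomials in which some $t_i$ appears squared; the paper notes this explicitly. The error is harmless for your proof, for two reasons:
\begin{itemize}
\item A monomial of $p$ with a repeated variable cannot produce $t_1\cdots t_m$ after multiplication by $(\sum_i t_i)^k$. So the coefficient of $t_1\cdots t_m$ in $q$ is still exactly $k!\,D(\bx)$.
\item Gurvits' bound $\partial_{t_1}\cdots\partial_{t_m}q\ge\frac{m!}{m^m}\mathrm{cap}(q)$ holds for every homogeneous real stable $q$ of degree $m$ in $m$ variables with nonnegative coefficients, regardless of individual variable degrees. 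Here those degrees can be as large as $\operatorname{rank}(A_i)+k$, not $k+1$, so the worry in your last paragraph is moot.
\end{itemize}
You should also justify real stability of $p$: each $A_i\succeq 0$, and $\sum_i A_i\succ 0$ whenever $D(\bx)>0$. Finally, drop the false starts in your middle paragraph.
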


\paragraph{Stable polynomial construction.}
Fix $\bx = (x_{ij})_{1 \leq i \leq m, j \in J_i}$ and abbreviate $A_i = \sum_{j \in J_i} x_{ij} \bv_{ij} \bv_{ij}^T$.
Define
\begin{equation}\label{eq:px}
 p(t_1,\ldots,t_m) = \det\left (\sum_{i=1}^m t_i A_i\right) 
\end{equation}
This is a homogeneous polynomial of degree $d$ with nonnegative coefficients. 
Also, it is known that this is a {\em stable polynomial}, meaning that
it is nonzero whenever every variable has a strictly positive imaginary part. 
This follows from the fact that each $A_i$ is positive semidefinite, and $\sum_{i} A_i$ is positive definite for non-trivial instances (where $D(\bx) > 0$).
If $t_i \in \C$, $\operatorname{Im}t_i>0$ for all $i$, then $\operatorname{Im} \sum_i t_i A_i$ is also positive definite. Consequently, $\sum_i t_i A_i$ is nonsingular, proving stability.

Let $c_T$ denote the coefficient of $\bt^T = \prod_{i \in T} t_i$ in $p(\bt)$.
By Cauchy--Binet, 
\begin{equation}\label{eq:coefficients}
 c_T = \sum_{\substack{j_i \in J_i \\ \forall i\in T}}  \left(\prod_{i\in T}x_{i j_i}\right) \left( \det [\bv_{i j_i}]_{i\in T} \right)^2.
\end{equation}
I.e., $D(\bx) = \sum_{\substack{T\subseteq[m]\\|T|=d}} c_T$.
Also, given the substitution $t_i = e^{z_i}$, 
we have $R(\bx) = \inf_{\substack{\bt>0 \\ \forall |T|=d; \bt^T \geq 1}} p(\bt)$.
Note that $p(\bt)$ can contain monomials other than $c_T \bt^T$ because
 $A_i$ can have rank greater than one. The key is to compare $D(\bx)$ and $R(\bx)$,
which is accomplished by the {\em capacity inequality}.

\paragraph{The capacity inequality.}
The following result from \cite{AGSS17} is a generalized form of Gurvits' capacity inequality \cite{Gurvits08}.

\begin{proposition}
\label{prop:degree-d}
Let $p$ be a homogeneous real stable polynomial of degree $d$ in $m\ge d$
variables, with nonnegative coefficients. Let $c_T$ be the coefficient of the monomial $\bt^T$,
and define the capacity of the polynomial as
\[
 C(p)=\inf_{\substack{\bt>\bzero\\\bt^T\ge 1 \forall |T|=d}}p(\bt).
\]
If $\sum_{|T|=d}c_T>0$, then, with $k=m-d$ and $0^0=1$,
\begin{equation}\label{eq:degree-d}
 \sum_{|T|=d}c_T\ge\frac{m!}{m^m}\frac{k^k}{k!}\,C(p)
 \ge e^{-d}C(p).
\end{equation}
\end{proposition}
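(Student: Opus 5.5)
The plan is a Gurvits-style argument: reduce the degree-$d$, $m$-variable polynomial to a multilinear one, then peel off variables using stability.

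\emph{Step 1 (reduction to the multilinear part).} Consider the multiaffine polynomial $q(\bt) = \sum_{|T|=d} c_T \bt^T$. I would obtain $q$ from $p$ by a stability-preserving operation that only lowers the value on the feasible region. The natural choice is symmetrization and polarization: replace each variable by $d$ copies averaged, and use that the square-free part of a stable polynomial is stable (Borcea--Brändén). Since all coefficients are nonnegative, $p(\bt)\ge q(\bt)$ for $\bt>0$, so $C(p)$ is at least the capacity of $q$ over the same region. Without the polarization, I would instead work with $p$ directly and only use the derivative operators $\partial_{t_i}$ evaluated at $t_i=0$ after restricting to squarefree extraction. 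Either way, the inequality reduces to proving $\sum_T c_T \ge \gamma_{m,d}\, C$ for a real stable polynomial with nonnegative coefficients.

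\emph{Step 2 (reduce the constraint to $d$ variables).} The constraint $\bt^T\ge 1$ for all $|T|=d$ says that the product of the $d$ smallest $t_i$ is at least $1$. I would homogenize this into a polynomial in which every monomial has degree exactly $m$ and involves all variables, namely $g(\bt)=q(\bt)\cdot e_{k}(\bt)$ with $k=m-d$. This is a product of stable polynomials, hence stable. Its coefficient of $t_1\cdots t_m$ is $\sum_T c_T$, since each $T$ pairs with its complement. Then I compare $\inf_{\bt>0,\ \prod t_i=1} g(\bt)/m^{\ldots}$ with $C(q)$. For $\prod t_i=1$, I would show $e_k(\bt)\ge \binom{m}{k}$ by AM--GM. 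I also need $q(\bt)$ to dominate $C(q)$ after rescaling. Choosing $\bt$ with $\prod_i t_i=1$, the quantity $\min_{|T|=d}\bt^T$ may be below $1$, so a scaling step is required. I would rescale the largest $k$ coordinates down, which is where the $k^k/k!$ factor should emerge. Concretely, the aim is $\mathrm{Cap}(g)\ge \frac{k!}{k^k}\cdot\frac{m!}{\ldots}$-type bookkeeping, giving a lower bound $\mathrm{Cap}(g)\ge (\text{explicit factor})\cdot C(q)$.

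\emph{Step 3 (Gurvits).} Apply Gurvits' inequality to the degree-$m$, $m$-variable stable polynomial $g$: its coefficient of $t_1\cdots t_m$ is at least $\frac{m!}{m^m}\mathrm{Cap}(g)$. Combining this with Step 2 should give $\frac{m!}{m^m}\frac{k^k}{k!}$. The final estimate $\gamma_{m,d}\ge e^{-d}$ follows from Stirling-type bounds $\frac{m!}{m^m}\ge e^{-m}\cdot(\ldots)$ and $\frac{k^k}{k!}\ge \ldots$. The cleanest route is to note that $\gamma_{m,d}=\prod_{i}$ of ratios, and that $\log(m^m/m!)-\log(k^k/k!)=\sum_{j=k+1}^{m}\bigl(j\log j-(j-1)\log(j-1)-\log j\bigr)$, where each term is at most $1$.

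The main obstacle is Step 2, that is, getting the exact constant when comparing the capacity of $q\cdot e_k$ under $\prod t_i=1$ with the capacity of $q$ under the $d$-subset constraints. This is precisely the content of the argument in \cite{AGSS17}. If the bookkeeping fails, I would fall back to citing that result, since the statement is attributed there.
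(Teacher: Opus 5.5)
\textbf{There is a genuine gap, in two places.} Note first that the paper gives no proof of this proposition and simply cites \cite{AGSS17}, so your fallback coincides with the paper.

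Your Step~1 reduction runs in the wrong direction. From $p\ge q$ on the positive orthant you get $C(p)\ge C(q)$. So a bound $\sum_T c_T\ge\gamma_{m,d}\,C(q)$ is \emph{weaker} than the claim, not equivalent to it. For $m=d$ this is stark: $q=c_{[m]}t_1\cdots t_m$ has $C(q)=c_{[m]}$, so the ``reduced'' inequality is trivial while the actual statement is Gurvits' theorem. Drop Step~1 and work with $p$ itself. The coefficient of $t_1\cdots t_m$ in any product $p\cdot h$ only sees squarefree monomials of $p$ anyway.

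The more serious problem is the multiplier $e_k$ in Step~2; no bookkeeping can rescue it. Gurvits gives $\mathrm{coef}_{t_1\cdots t_m}(g)\ge\frac{m!}{m^m}\mathrm{Cap}(g)$. With $g=p\,e_k$ the coefficient is $\sum_T c_T$, so you would need $\mathrm{Cap}(p\,e_k)\ge\frac{k^k}{k!}C(p)$. This is false. Take $d=1$, $m=k+1$, $p=t_1$. Then $C(p)=1$, but $t_1e_k(\bt)/(t_1\cdots t_m)=1+\sum_{j\ge2}t_1/t_j\to1$ as $t_1\to0$. Hence $\mathrm{Cap}(p\,e_k)=1<k^k/k!$ once $k\ge2$. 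Since Gurvits only sees $\mathrm{Cap}(g)$, the $e_k$ route caps out at $\sum_T c_T\ge\frac{m!}{m^m}C(p)\approx e^{-m}C(p)$. That is useless for $m\gg d$, which is exactly the regime where the paper needs the bound.

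The fix is to multiply by $(t_1+\cdots+t_m)^k$ instead.
\begin{itemize}
\item The product $g=p\cdot(\sum_i t_i)^k$ is still real stable, homogeneous of degree $m$, with nonnegative coefficients.
\item Its coefficient of $t_1\cdots t_m$ is $k!\sum_T c_T$, because each squarefree degree-$k$ monomial has multinomial coefficient $k!$.
\item For $\bt>0$, let $L$ index the $d$ smallest coordinates and $U=[m]\setminus L$. Rescaling $\bt$ by $(\prod_{i\in L}t_i)^{1/d}$ makes it feasible, so homogeneity gives $p(\bt)\ge C(p)\prod_{i\in L}t_i$.
\item AM--GM on $U$ gives $(\sum_i t_i)^k\ge k^k\prod_{i\in U}t_i$.
\end{itemize}
Hence $\mathrm{Cap}(g)\ge k^kC(p)$, and Gurvits yields $k!\sum_T c_T\ge\frac{m!}{m^m}k^kC(p)$, which is exactly the constant $\gamma_{m,d}$. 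This is essentially the standard argument behind the cited result.

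Your telescoping estimate for $\gamma_{m,d}\ge e^{-d}$ is correct: each term equals $(j-1)\log\frac{j}{j-1}\le 1$.
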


The conditions are satisfied in our case: the polynomial is homogeneous real stable, and 
$D(\bx) = \sum_{|T|=d} c_T > 0$ by assumption.
Hence $D(\bx) = \sum_{|T|=d} c_T \geq \frac{m!}{m^m} \frac{k^k}{k!} C(p) = \gamma_{m,d} R(\bx)$. 

For a near solution $\bx$, we have $R(\bx) \geq (1-\epsilon) \cR$. Also, we have
$$ \log \frac{m!}{(m-d)!} = \sum_{k=m-d+1}^{m} \log k >
 \int_{m-d}^{m} \log x \, dx = m (\log m - 1) - (m-d) (\log (m-d) - 1).$$
Therefore,
$$ \log \gamma_{m,d} = \log \frac{m!}{m^m} \frac{(m-d)^{m-d}}{(m-d)!} > -d. $$
For small enough $\epsilon > 0$, $D(\bx) \geq \gamma_{m,d} R(\bx) \geq e^{-d} \cR$, absorbing the $\epsilon$ in the gap between $\gamma_{m,d}$ and $e^{-d}$.
This completes the proof of Theorem~\ref{thm:integrality-gap}.

\end{document}